\renewcommand{\footnoterule}{%
  \hspace{3pt} \hrule width 0.4\textwidth height 0.5pt
  \kern 2pt
}
\LetLtxMacro\orgvdots\vdots
\LetLtxMacro\orgddots\ddots
\DeclareRobustCommand\vdots{%
  \mathpalette\@vdots{}%
}
\newcommand*{\@vdots}[2]{%
  \sbox0{$#1\cdotp\cdotp\cdotp\m@th$}%
  \sbox2{$#1.\m@th$}%
  \vbox{%
    \dimen@=\wd0 %
    \advance\dimen@ -3\ht2 %
    \kern.5\dimen@
    \dimen@=\wd2 %
    \advance\dimen@ -\ht2 %
    \dimen2=\wd0 %
    \advance\dimen2 -\dimen@
    \vbox to \dimen2{%
      \offinterlineskip
      \copy2 \vfill\copy2 \vfill\copy2 %
    }%
  }%
}
\DeclareRobustCommand\ddots{%
  \mathinner{%
    \mathpalette\@ddots{}%
    \mkern\thinmuskip
  }%
}
\newcommand*{\@ddots}[2]{%
  \sbox0{$#1\cdotp\cdotp\cdotp\m@th$}%
  \sbox2{$#1.\m@th$}%
  \vbox{%
    \dimen@=\wd0 %
    \advance\dimen@ -3\ht2 %
    \kern.5\dimen@
    \dimen@=\wd2 %
    \advance\dimen@ -\ht2 %
    \dimen2=\wd0 %
    \advance\dimen2 -\dimen@
    \vbox to \dimen2{%
      \offinterlineskip
      \hbox{$#1\mathpunct{.}\m@th$}%
      \vfill
      \hbox{$#1\mathpunct{\kern\wd2}\mathpunct{.}\m@th$}%
      \vfill
      \hbox{$#1\mathpunct{\kern\wd2}\mathpunct{\kern\wd2}\mathpunct{.}\m@th$}%
    }%
  }%
}
\newcommand\ceil[1]{\left \lceil #1 \right \rceil}
\definecolor{p2color}{HTML}{D95319}
\definecolor{pscolor}{HTML}{77AC30}
\definecolor{hpscolor}{HTML}{7E2F8E}
\definecolor{xblue}{HTML}{193fd9}
\tikzset{
  style dandelion/.style={
    opacity=0.8, set fill color= Dandelion  , 
    set border color=Dandelion,
  },
  style limeGreen/.style={
    opacity=0.8, set fill color= LimeGreen  , 
    set border color=LimeGreen,
  },
  style green/.style={
    opacity=0.7, set fill color= green!50!lime!60  , 
    set border color=white,
  },
  style cyan/.style={
    opacity=0.6, set fill color=cyan!90!blue!60,
    set border color=cyan!90!blue!60,
  },
  style orange/.style={
   opacity=0.6, set fill color=orange!80!red!60,
    set border color=orange!80!red!60,
  },
  hor/.style={
    above left offset={-0.15,0.31},
    below right offset={0.15,-0.125},
    #1
  },
  ver/.style={
    above left offset={-0.1,0.3},
    below right offset={0.15,-0.15},
    #1
  },
  blk/.style={
    above left offset={-0.12,0.4},
    below right offset={0.12,-0.1},
    #1
  }
}
\newcommand{\btitle}[1]{\mbox{}{\bf{(#1).}}}
\newtheorem{theorem}{Theorem}[section]
\newtheorem{assumption}{Assumption}
\newtheorem{proposition}[theorem]{Proposition}
\newtheorem{lemma}[theorem]{Lemma}
\newtheorem{definition}[theorem]{Definition}
\newtheorem{remark}{Remark}[theorem]
\newcommand{\real}{\ensuremath{\mathbb{R}}}
\newcommand{\realpos}{\ensuremath{\mathbb{R}_{>0}}}
\newcommand{\realnonneg}{\ensuremath{\mathbb{R}_{\ge 0}}}
\newcommand{\intgnonneg}{\ensuremath{\mathbb{Z}_{\ge 0}}}
\newcommand{\cplx}{\ensuremath{\mathbb{C}}}
\newcommand{\ind}{\ensuremath{\mathbb{I}}}
\newcommand{\mb}[1]{\mathbf{ #1 }}
\newcommand{\norm}[1]{\left\Vert #1 \right\Vert}
\DeclareMathOperator{\col}{\mathrm{col}}
\DeclareMathOperator{\rank}{\mathrm{rank}}
\DeclareMathOperator{\spann}{\mathrm{span}}
\DeclareMathOperator{\diag}{\mathrm{diag}}
\DeclareMathOperator{\spec}{\mathrm{spec}}
\newcommand{\setdef}[2]{\left\{#1 \mid #2\right\}}
\newcommand{\paren}[1]{\left(#1\right)}
\newcommand{\braces}[1]{\left\{#1\right\}}
\newcommand{\s}{\scriptscriptstyle}
\newcommand{\sigt}{\sigma(t)}
\DeclareMathOperator*{\dprime}{\prime \prime}
\newcommand{\dtilde}[1]{\widetilde{\raisebox{0pt}[0.85\height]{$\widetilde{#1}$}}}
\newcommand{\Ac}{\mathcal{A}}
\newcommand{\Ec}{\mathcal{E}}
\newcommand{\Gc}{\mathcal{G}}
\newcommand{\Hc}{\mathcal{H}}
\newcommand{\Ic}{\mathcal{I}}
\newcommand{\Kc}{\mathcal{K}}
\newcommand{\Lc}{\mathcal{L}}
\newcommand{\Nc}{\mathcal{N}}
\newcommand{\Oc}{\mathcal{O}}
\newcommand{\Qc}{\mathcal{Q}}
\newcommand{\Rc}{\mathcal{R}}
\newcommand{\Sc}{\mathcal{S}}
\newcommand{\Tc}{\mathcal{T}}
\newcommand{\Vc}{\mathcal{V}}
\newcommand{\Xc}{\mathcal{X}}
\newcommand{\Yc}{\mathcal{Y}}
\newcommand{\xdot}{\dot{\mb{x}}}
\newcommand{\x}{\mb{x}}
\newcommand{\y}{\mb{y}}
\newcommand{\A}{\mb{A}_{\sigma}}
\newcommand{\Ba}{\mb{B}_{\Ac}}
\newcommand{\Bapp}{\mb{B}_{\Ac^{\dprime}}}
\newcommand{\Barest}{\mb{B}_{\Ac^{\rm r}}}
\newcommand{\C}{\mb{C}_{\sigma}}
\newcommand{\xhatdot}{\dot{\hat{\mb{x}}}}
\newcommand{\xhat}{\hat{\mb{x}}}
\newcommand{\yhat}{\hat{\mb{y}}}
\newcommand{\Hobs}{\mb{H}_{\sigma}}
\newcommand{\edot}{\dot{\mb{e}}}
\newcommand{\e}{\mb{e}}
\newcommand{\res}{\mb{r}}
\newcommand\nullH{\Hc^{\raisebox{0pt}{\scriptsize $\mathfrak{0}$}}}
\newcommand\alterH{\Hc^{\raisebox{0pt}{\scriptsize $\mathfrak{1}$}}}
\newcommand{\pstar}{\boldsymbol{p}^{\boldsymbol{\star}}}
\newcommand{\ua}{\mb{u}_{\! \Ac}}
\newcommand{\uapp}{\mb{u}_{\! \Ac^{\dprime}}}
\newcommand{\uarest}{\mb{u}_{\! \Ac^{\rm r}}}
\newcommand{\pb}{\boldsymbol{p}}
\newcommand{\vb}{\boldsymbol{v}}
\newcommand{\ub}{\boldsymbol{u}}
\newcommand{\efrak}{\mathfrak{e}}
\newcommand{\bfrak}{\mathfrak{b}}
\newcommand{\adj}{\mathsf{A}}
\newcommand{\lap}{\mathsf{L}}
\newcommand{\inc}{\mathsf{B}}
\newcommand{\onehop}[1]{\Nc^{\s #1(1)}_{\sigma}}
\newcommand{\twohop}[1]{\Nc^{\s #1(2)}_{\sigma}}
\newcommand{\khop}[1]{\Nc^{\s #1(k)}_{\sigma}}
\newcommand{\graphx}[1]{\Gc^{#1^{\prime}}_{\sigt}=(\Vc^{\,#1^{\prime}}_{\sigma} , \Ec^{\,#1^{\prime}}_{\sigma})}
\newcommand{\graphxx}[1]{\Gc^{#1^{\dprime}}_{\sigt}=(\Vc^{\,#1^{\dprime}}_{\sigma} , \Ec^{\,#1^{\dprime}}_{\sigma})}
\newcommand{\graphind}[1]{\bar{\Gc}^{#1^{\prime}}_{\sigt}=(\Vc^{\,#1^{\prime}}_{\sigma} , \bar{\Ec}^{\,#1^{\prime}}_{\sigma})}
\newcommand{\ones}{\mathbf{1}}
\newcommand{\zeros}{\mathbf{0}}
\newcommand\rev[1]{{\color{xblue} #1}}
\renewcommand\rev[1]{{#1}}
\newcommand\revv[1]{{\color{xblue} #1}} 
\renewcommand\revv[1]{{#1}}
\newcommand\submittedtext{%
  \footnotesize 
  This work has been submitted to the IEEE for possible publication. Copyright may be transferred without notice, after which this version may no longer be accessible.}
\newcommand\submittednotice{%
\begin{tikzpicture}[remember picture,overlay]
\node[anchor=south,yshift=3pt] at (current page.south) {\fbox{\parbox{\dimexpr0.8\textwidth-\fboxsep-\fboxrule\relax}{\submittedtext}}};
\end{tikzpicture}%
}
\newif\ifshort
\title{\LARGE \bf
Distributed Detection of Adversarial Attacks for Resilient Cooperation of Multi-Robot Systems 
with Intermittent Communication
}
\author{Rayan Bahrami and Hamidreza Jafarnejadsani
\vspace{-1em}
\thanks{This work was supported by the National Science Foundation under Award No. 2137753.} 
\thanks{The authors are with the Department of Mechanical Engineering, Stevens Institute of Technology, Hoboken, NJ 07030, USA,
        {\tt\small \{mbahrami,hjafarne\}@stevens.edu}.
        }
}
\begin{document}

\maketitle
\thispagestyle{empty}
\pagestyle{empty}

\begin{abstract}
This paper concerns the consensus and formation of a network of mobile autonomous agents in adversarial settings where a group of malicious (compromised) agents are subject to deception attacks. In addition, the communication network is arbitrarily time-varying and subject to intermittent connections, possibly imposed by denial-of-service (DoS) attacks. 
We provide explicit bounds for network connectivity in an integral sense, enabling the characterization of the system's resilience to specific classes of adversarial attacks. We also show that under the condition of connectivity in an integral sense uniformly in time, the system is finite-gain $\Lc$ stable and uniformly exponentially fast consensus and formation are achievable, provided malicious agents are detected and isolated from the network.
We present a distributed and reconfigurable framework with theoretical guarantees for detecting malicious agents, allowing for the resilient cooperation of the remaining cooperative agents. Simulation studies are provided to illustrate the theoretical findings.
\end{abstract}
%
\submittednotice
\vspace{-1.4em}
\section{Introduction}
\label{sec:introduction}
\IEEEPARstart{R}esilient network-enabled cooperation of mobile autonomous agents (e.g., Unmanned Aerial/Ground Vehicles) is central to many safety-critical and time-critical applications such as surveillance, monitoring, and motion and time coordination \cite{ren2007information,martinez2007motion,cichella2015cooperative,kantaros2018distributed,yu2021resilient}. 
A common premise in the prevalent attack detection and attack-resilient cooperation approaches for multi-agent systems is that specific measures of network connectivity are almost constantly maintained throughout time \cite{pasqualetti2011consensus,leblanc2013resilient,dibaji2015consensus,dibaji2017resilient}. This, however, is a challenging assumption in the case of mobile autonomous agents whose mobility and limited communication capabilities give rise to an \emph{ad hoc} and \emph{intermittent} network connectivity \cite{kantaros2018distributed,yu2021resilient}.
Aligned with recent studies \cite{yu2021resilient,saulnier2017resilient,santilli2021dynamic}, this paper extends the previous results by considering distributed attack detection and resilient cooperation of a class of multi-agent systems in the adversarial settings where intermittent (time-varying) communication and a group of malicious agents \rev{\emph{concurrently}} render unreliable information exchange.

\textbf{Related work}.
The vulnerabilities of networked systems to adversarial attacks have been reported in
\cite{amin2009safe,teixeira2015secure}. 
These attacks can be classified as \emph{deception}
attacks, targeting the integrity/trustworthiness of transmitted data, and \emph{denial-of-service} (DoS) attacks, targeting the data availability upon demand \cite{amin2009safe}.
Considerable effort has been devoted to the characterization of resilience to adversarial attacks on networked systems as well as to the attack detection mechanisms. 

In terms of resilience,
it has been shown that the resilience to specific types of adversarial attacks and/or a group of malicious agents in a network can be characterized through certain connectivity-related properties of the underlying communication network.
Particularly for coordination/consensus of multi-agent systems with first-order dynamics, early studies have characterized the worst-case bounds for the total number of malicious (non-cooperative) agents that can be detected and identified in a given static communication network with a certain degree of vertex-connectivity \cite{pasqualetti2011consensus}.  
In \cite{leblanc2013resilient,zhang2015notion}, a connectivity-related property known as graph $r$-robustness was proposed to quantify resilience to a certain number of malicious agents in consensus dynamics over static networks. 
The results have also been extended to the cases of agents with higher-order dynamics \cite{dibaji2015consensus,dibaji2017resilient,rezaee2020almost}.
See also the survey \cite{pirani2023graph}. 
However, only a few studies have recently considered resilient consensus beyond static communication networks, \rev{particularly when the network is subject to both deception and DoS attacks.}
Few examples are \cite{saulnier2017resilient,rezaee2020almost,yu2021resilient,usevitch2019resilient}.

\rev{In terms of detection of and defense mechanisms against deception attacks, the problem is inherently challenging.}
First, a priori knowledge of the system dynamics can be exploited to design sophisticated attacks that are \emph{stealthy} to the conventional anomaly detectors \cite{teixeira2015secure}. Examples of such attacks 
are zero-dynamics attack (ZDA) \cite{pasqualetti2011consensus,mao2020novel,bahrami2022detection,bahrami2021privacy}, covert attack \cite{gallo2020distributed,bahrami2021privacy}, and replay attack \cite{rezaee2020almost}. 
Second, the body of effective observer-based frameworks developed for distributed detection of stealthy attacks in spatially invariant systems such as power networks
\cite{pasqualetti2013attack,barboni2020detection,gallo2020distributed} are premised on having \emph{a priori} known and more often static communication topology which is not the case in spatially distributed mobile agents \cite{martinez2007motion,kia2019tutorial}.
Alternatively, \cite{bonczek2022detection} proposed a framework to detect communication and sensor attacks that are stealthy to other residual-based methods. In \cite{mustafa2022adversary}, a control barrier function (CBF) approach was proposed for safety and objective specifications serving as metrics for the identification of adversarial agents and resilient control of multi-agent systems. \rev{This paper extends observer-based approaches \cite{pasqualetti2011consensus,mao2020novel} by relaxing their dependency on point-wise-in-time network connectivity/robustness and quantifying resilience to concurrent adversarial attacks. 
}

\textbf{Statement of contributions}. 
We consider coordination (consensus) and cooperation (consensus-based formation) of multi-agent systems with second-order dynamics in adversarial settings.
We consider the adversarial settings where a group of malicious (adversarial) agents introduces a class of \emph{deception} attacks to disrupt a normal operation of the system. Moreover, the communication network is subject to intermittent connections possibly caused by \emph{DoS} attacks, rendering the information exchange unreliable.
We will show if the arbitrarily time-varying (switching) communication network maintains its connectivity only in an integral sense, uniformly in time, the following results are guaranteed:

In Section \ref{sec:net_res_and_stability}, we  \rev{characterize and provide explicit bounds for the network resilience to both intermittent and permanent disconnections. The former is relevant to \emph{DoS} attacks, and the latter is relevant to \emph{deception} attacks.} 
We also provide explicit bounds for uniformly exponentially fast convergence of the multi-agent systems \rev{in the presence of a class of \emph{DoS} attacks} as well as for their bounded-input-bounded-output (BIBO) stability in the presence of a class of \emph{deception} attacks. 
\rev{Compared to the previous results \cite{yu2021resilient,dibaji2017resilient,usevitch2019resilient}, the network resilience is quantified explicitly based on algebraic connectivity in an integral sense, and the connectivity and stability analyses for both types of attacks are in the continuous-time domain.}

In Section \ref{sec:observer_design}, we \rev{characterize the system vulnerability to a class of stealthy deception attacks, based on zero dynamics of the switched systems, and} provide explicit worst-case bounds on the number of malicious agents subject to deception attacks that can be detected in a given network. 
\rev{Compared to the previous results \cite{pasqualetti2011consensus,dibaji2017resilient}, we show some of these well-known bounds can be improved, provided some extra information on the local dynamics is available only in an integral sense.} We then present a distributed and reconfigurable framework with theoretical guarantees for the distributed detection of malicious agents introducing deception attacks. \rev{Compared to centralized frameworks \cite{mao2020novel}, our framework relies solely on locally available information in an integral sense, making it well-suited for mobile agent applications subject to intermittent connectivity.}

Additionally, \rev{in Section \ref{Sec:resilient_cooperation}}, we present an algorithmic framework for detaching from the detected set of malicious agents, and for achieving resilient coordination and cooperation.

\section{Problem Formulation}\label{sec:problem_formulation}
\subsection{Preliminaries}
\textbf{{Notation}.} 
We use  $ \real $, $ \realpos $, $ \realnonneg $, $\intgnonneg$, and $\cplx$ to denote the set of reals, positive reals, nonnegative reals,  nonnegative integers, and complex numbers, respectively. 
$\ones_n$, $ \zeros_n$, $ {I}_n $ and $ \zeros_{n \times m} $ stand\footnote{We may omit the subscripts when it is clear from the context.} for the $n$-vector of all ones, the $n$-vector of all zeros, the identity $ n $-by-$ n $ matrix, and the $ n $-by-$ m $ zero matrix, respectively. 
We use $\efrak^{\,i}_{n}$ to denote the $i$-th canonical vector in $\real^{n}$, and 
$\norm{\cdot}$ (resp. $\norm{\cdot}_{\infty}$) to denote the Euclidean (resp. infinity) norm of vectors and induced norm of matrices.
In addition, for any piecewise continuous, real-valued Lebesgue measurable signal $x(t) \in \real^{n}$, we use $\norm{{(x)}_{T_d}}_{{\Lc_p}}$, were $ 1 \leq p  $ $\leq \infty$ and $T_d \in [0, \infty)$, to denote the ${\Lc_p}$ norm of its truncation signal defined as ${(x)}_{T_d} = x(t)$ for $0\leq t \leq T_d,$ and ${(x)}_{T_d} = \zeros$ if $ T_d < t$.
The extended space ${\Lc_{pe}}$ 
is defined as
${\Lc_{pe}} = \braces{x(t) \mid (x)_{T_d} \in \Lc_{p}, \, \forall \, T_d \in [0,\, \infty) } $. 
The notations $\spec(\cdot)$ and $\lambda_{i}(\cdot)$ denote the spectrum of a matrix in the ascending order by magnitude and its $i$-th eigenvalue, respectively. 
We use $\col(\cdot)$ and $\diag(\cdot)$ to denote the column and diagonal concatenation of vectors or matrices.
Finally, for any set $\Sc$, $ |\Sc| $ denotes its cardinality.

\textbf{{Communication topology}.} The communication network (topology) of $N\geq3$ mobile agents (robots), indexed by the set $\Vc=\{1,\dots, N\}$, is described by a finite\footnote{The set of possible communication graphs, $\Qc$, is finite by $2^{\s \binom{N}{2}}$ possible cases because an undirected graph with $N$ nodes at most is complete with $\binom{N}{2} = N(N-1)/2$ edges 
\cite[Ch. 1, P. 11]{west2001introduction}
.} collection of undirected graphs $ \Gc_{\sigt}=(\Vc,\Ec_{\sigt}) $, where the edge set $\Ec_{\sigt} \subset \Vc \times \Vc$ denotes the (potentially ad hoc or intermittent) communication links. An edge $(i,j)\in \Ec_{\sigt}$ if and only if the $i$-th and $j$-th agents are adjacent neighbors exchanging information in an active communication mode $\sigt \in \{1,2,\dots,\mb{q} \} =: \Qc, \, \mb{q} \in \intgnonneg \setminus \{0\}$. The switching (interchangeably time-varying) network, $\Gc_{\sigt}, \, \sigt \in \Qc$, with a piecewise constant and right-continuous switching signal $\sigt:  \realnonneg \rightarrow  \Qc $, allows for modeling the inter-agent time-varying communication topology with intermittent and lossy datalinks.
The agents' communication is further encoded into a symmetric adjacency matrix $\adj_{\sigt} \coloneqq  [a^{\sigt}_{ij}]\in \mathbb{R}^{ N \times N}_{\geq0}$ such that $a^{\sigt}_{ij} = a^{\sigt}_{ji} = 1 $ if an edge $ (i,j) \in \Ec_{\sigt}$, and $a^{\sigt}_{ij}= a^{\sigt}_{ji} = 0$, otherwise.
The Laplacian matrix $\lap_{\sigt} \coloneqq  [l^{\sigt}_{ij}] \in \real^{N \times N}$ is defined as  $ l^{\sigt}_{ii} $ $= \sum_{j=1}^{N}a^{\sigt}_{ij}$ and $ l^{\sigt}_{ij} = -a^{\sigt}_{ij}$ if $i \neq j$. 
We let $\inc_{\sigt} \in \real^{N \times E} $, where $ E = \max_{\sigt \in \Qc}|\Ec_{\sigt}| \leq N(N-1)/2$, denote the (oriented) incidence matrix that satisfies $\lap_{\sigt} = \inc^{}_{\sigt}\inc^{\top}_{\sigt}$.
Also, an undirected graph $\Gc_{\sigma(t)}$ is connected at \emph{a given time instant} $t=t' \in \realnonneg$ if and only if the second-smallest eigenvalue of the associated Laplacian matrix, referred to as \emph{algebraic connectivity}, holds $ \lambda_{2}(\lap_{\sigma(t')})>0 $. $\norm{\lap_{\sigt}}\leq |\Vc|=N$.
A graph is called $\boldsymbol{\kappa}$-(vertex)-connected at \emph{a given time instant} $t=t' \in \realnonneg$ if $ \boldsymbol{\kappa} \leq \boldsymbol{\kappa}(\Gc_{\sigma(t')}) \in \realpos$, where $ \boldsymbol{\kappa}(\Gc_{\sigma(t')}) $ is the minimum number of vertices that make a vertex cutset \rev{\cite[Ch. 3 and 13]{godsil2001algebraic}}.


We make the convention that, in any active mode $\sigma(t) \in \Qc$ ($\sigma$ in short), $\khop{i} \subseteq \Vc $ denotes the set of $k$-hop neighbors of the agent $ i \in \Vc$, that is for $ j \in \khop{i}$ there exists a path of length $k $, where $ k \in \intgnonneg \setminus \{0\}$, in mode $\sigma$, between the agents $i$ and $j$. 
Accordingly, we let the subgraphs $\graphx{i}$, $\graphind{i}$, and $\graphxx{i}$, denote, resp., the $1$-hop proximity, the $1$-hop induced, and the $2$-hop proximity communication network of the $i$-th mobile agent with its $k$-hop neighbors, where $k\in \{1,2\}$, and
\noindent
\begin{subequations}\label{eq:local_graphs}
\begin{align}
   \hspace{-1ex} 
   \Vc^{\, i^{\prime}}_{\sigma} \!&= \{i\} \cup  \onehop{i},  \qquad \quad \label{eq:1prox_edge}
    {\Ec}^{\, i^{\prime}}_{\sigma} =  \{i\} \times  \onehop{i}  \subseteq \Ec_{\sigma},
    \\  \label{eq:inducedgraph_edge}
   \bar{\Ec}^{\, i^{\prime}}_{\sigma} \!&=   \big( \Vc^{\, i^{\prime}}_{\sigma} \times \Vc^{\, i^{\prime}}_{\sigma} \big) \cap    \Ec_{\sigma},
    \\ \label{eq:2prox_edge}
  \hspace{-1ex}   
  \Vc^{\, i^{\dprime}}_{\sigma} \!&= \Vc^{\, i^{\prime}}_{\sigma}  \! \cup  \twohop{i},  
  \,\,
    \Ec^{\, i^{\dprime}}_{\sigma} \!\! = \bar{\Ec}^{\, i^{\prime}}_{\sigma} \! \cup 
    \paren{
    \big( \onehop{i} \! \times \! \twohop{i} \big) \cap \Ec_{\sigma}}\!. 
\end{align}
\end{subequations}
\begin{figure*}[ht]
  \centering 
  \subfloat[10-agent network\label{fig:network_eg_}]{\small \centering
    \includegraphics[ width=.28\linewidth]{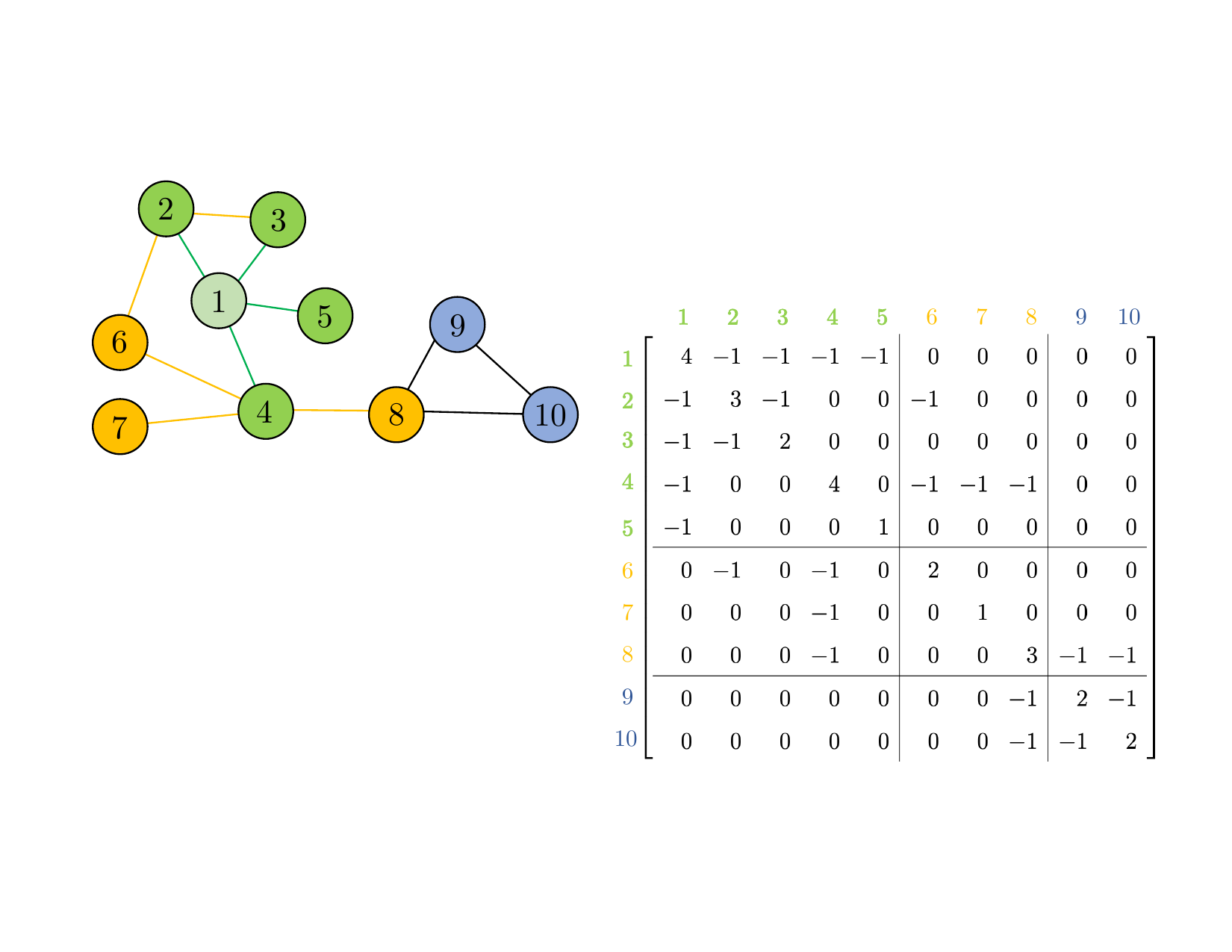} 
  }
  \quad
 \subfloat[Partitioned Laplacian matrix. \label{fig:network_Lap}]{\small \centering
    \includegraphics[width=.2\linewidth]{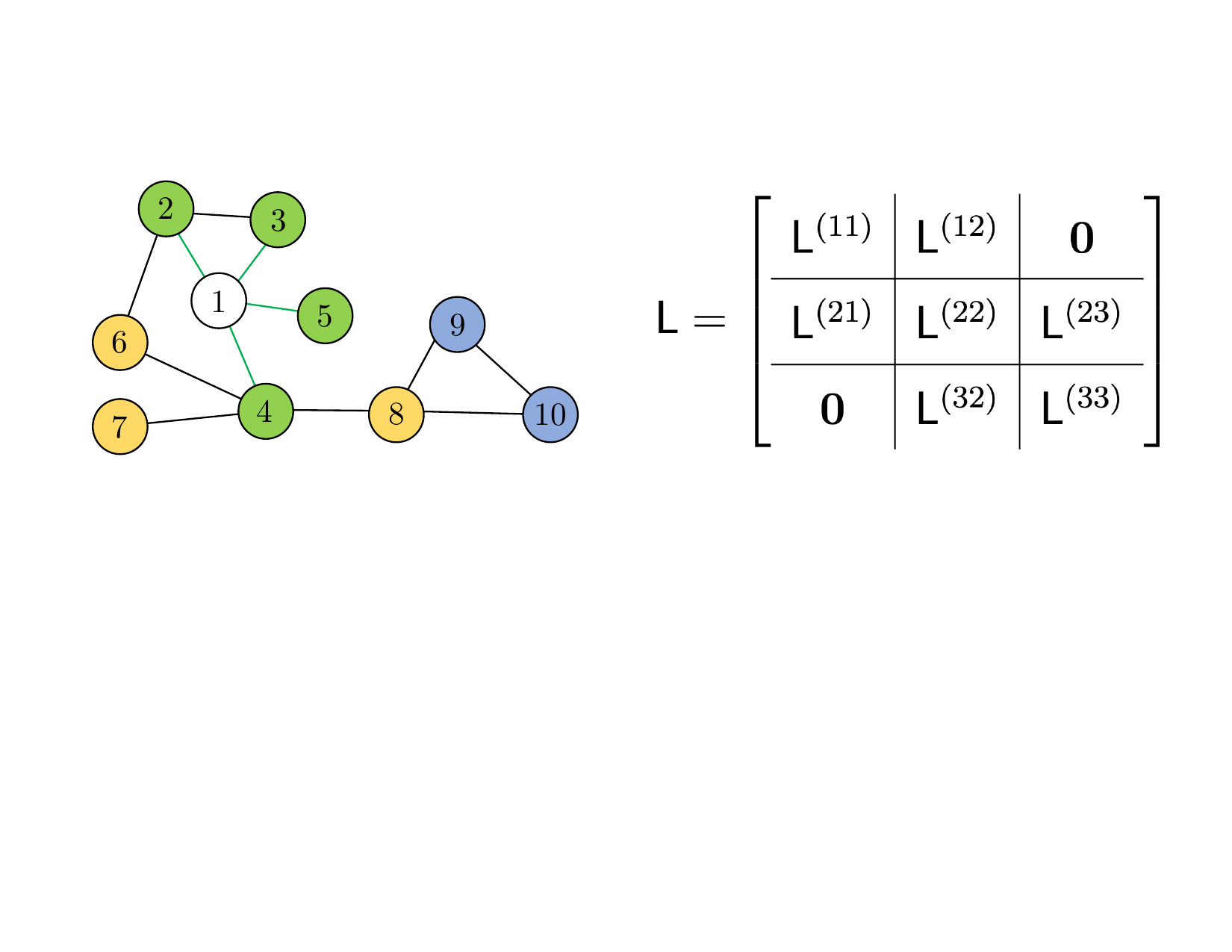}
  }
  \quad
  \subfloat[Partitioned Laplacian matrix from node $1$'s perspective. \label{fig:lap_1}]{\small \centering
    \includegraphics[width=.2\linewidth]{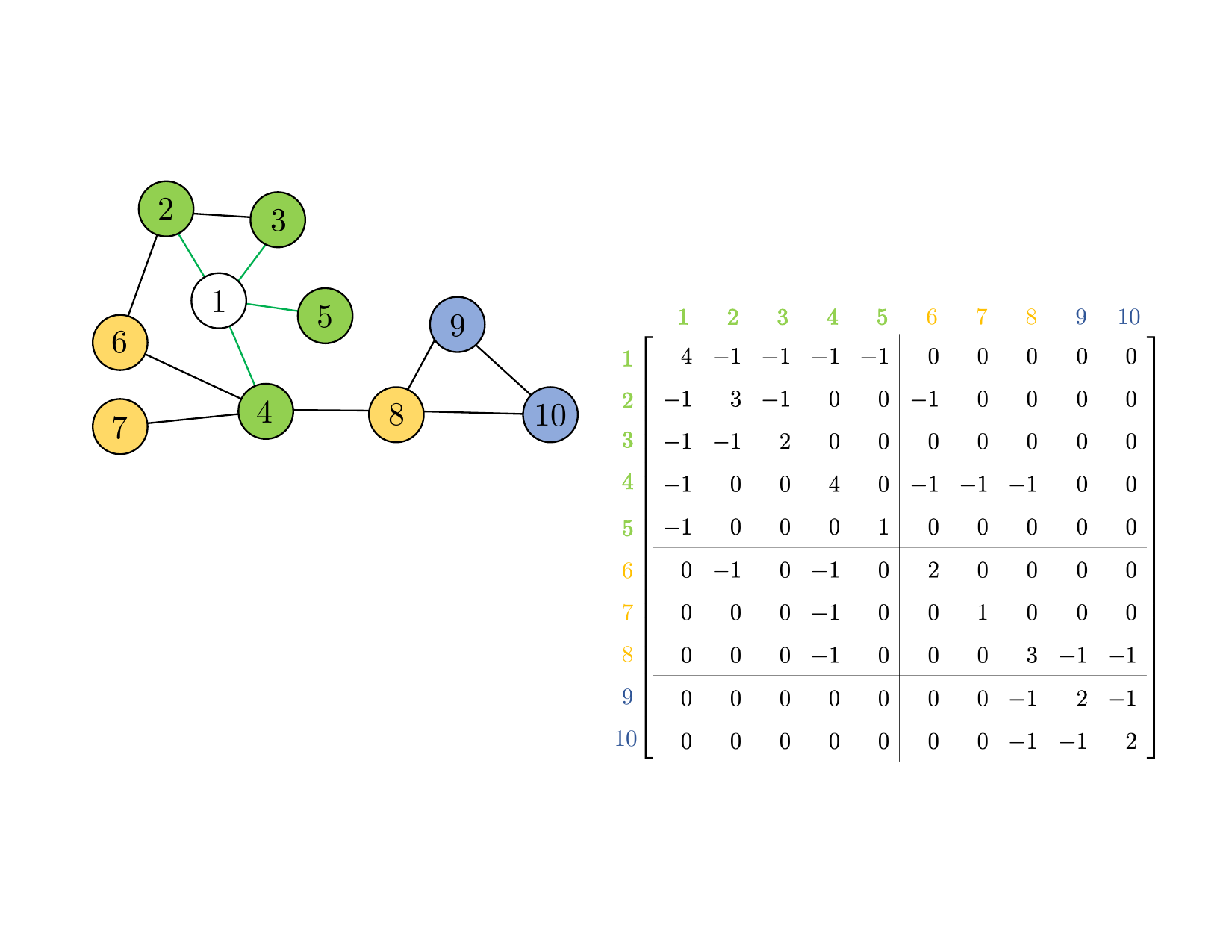}
  }
  \quad
  \subfloat[Partitioned Laplacian matrix from node $7$'s perspective. \label{fig:lap_7}]{\small \centering
    \includegraphics[width=.2\linewidth]{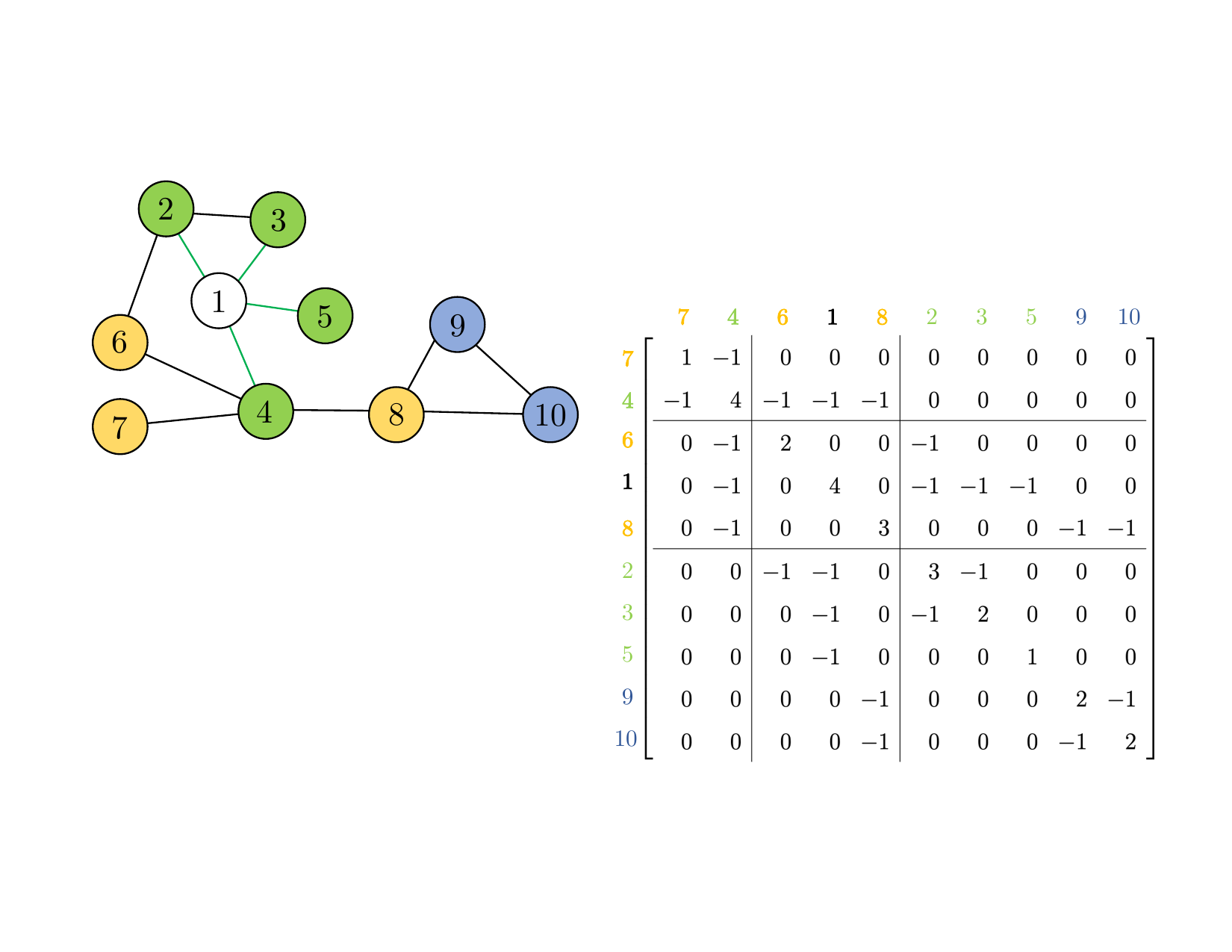}
  }
  \caption{\small Illustrative example of a 10-node communication network for the sub-graphs defined in \eqref{eq:local_graphs}-\eqref{eq:laplacian_decomposition}. (a) The undirected communication network $\Gc_{\sigt}=(\Vc, \Ec_{\sigt})$ illustrated in a mode $\sigma$.
  The set of {\color{Green}$1$-hop} (resp. {\color{YellowOrange}$2$-hop}) neighbors of node $1 \in \Vc$ is given by {\color{Green}$\onehop{1}=\{2,3,4,5\}$} $\big(\text{resp.} \; {\color{YellowOrange}\twohop{1} = \{5,7,8\}} \big) $. (b) Accordingly, the $1$-hop dynamics of agent 1 encompasses the $1$-hop Laplacian $\lap^{(11)}_{\sigma}= {\color{Green}\lap^{\prime}_{\sigma}} + {\color{YellowOrange}\widetilde{\lap}_{\sigma}} $, where ${\color{Green}\lap^{\prime}_{\sigma}}$ always has a {\color{Green}star-like structure} for any node since it encodes the connections with {\color{Green} immediate neighbors}, which can always be inferred by agent 1,
  and ${\color{YellowOrange} \widetilde{\lap}_{\sigma}}$ encodes the connection between the immediate (1-hop) neighbors such as {\color{YellowOrange}edge (2,3)} in this example. (b)-(d)  Proximity-based partitioned Laplacian matrix as defined in \eqref{eq:local_graphs}-\eqref{eq:laplacian_decomposition}.
  }
  \label{fig:network_eg}
\end{figure*}
Having defined the $k$-hop neighbors, the Laplacian matrix $\lap_\sigma$ can be partitioned according to the incoming flow of information to the agent $i\in \Vc$. Let $i\in \Vc$ be the first agent and accordingly $\Vc^{\, i^{\prime}}_{\sigma}$ come first, $\twohop{i}$ come second, and $\Vc \setminus \{ \Vc^{\, i^{\prime}}_{\sigma} \cup \twohop{i} \}$ come last, then $\lap_\sigma$ can be rewritten as 
\noindent
\begin{subequations}\label{eq:laplacian_decomposition}
{\small
\begin{align}
\hspace{-1ex}
\lap_{\sigma} \!=\! 
\left[
\begin{array}{c|c|c}
 \tikzmarkin[blk=style dandelion ]{col 2-L}
 \tikzmarkin[blk=style limeGreen ]{col 1-L}
 \lap^{(11)}_{\sigma} 
 \tikzmarkend[blk=style limeGreen ]{col 1-L}
  & \lap^{(12)}_{\sigma}  &   \zeros \!\!
 \\ \hline
  \lap^{(21)}_{\sigma} & 
  \tikzmarkin[blk=style cyan ]{col 3-L}
  \lap^{(22)}_{\sigma} 
  \tikzmarkend[blk=style dandelion]{col 2-L} & \lap^{(23)}_{\sigma} \!\!
  \\ \hline 
  \zeros & \lap^{(32)}_{\sigma} & \lap^{(33)}_{\sigma} \!\! \tikzmarkend[blk=style cyan ]{col 3-L}
 \end{array}
\right],  
\lap^{\dprime}_{\sigma} 
=
\left[
\begin{array}{l|l}
\tikzmarkin[blk=style dandelion ]{col 5-L}
\lap^{(11)}_{\sigma}  & \lap^{(12)}_{\sigma} \!\!
 \\ \hline
  \lap^{(21)}_{\sigma} & \lap^{\s (\dprime \setminus \prime)}_{\sigma} \! \! 
\tikzmarkend[blk=style dandelion]{col 5-L}  
 \end{array}
\right], 
\\ 
\lap^{(11)}_{\sigma} =
 \tikzmarkin[blk=style limeGreen ]{col 4-L}
 \lap^{\prime}_{\sigma}
 \tikzmarkend[blk=style limeGreen ]{col 4-L} 
+ \widetilde{\lap}^{}_{\sigma}, 
\hspace{6em} 
\begin{bmatrix}
\widetilde{\lap}^{}_{\sigma} & \lap^{(12)}_{\sigma}
\end{bmatrix} \ones = \zeros,
\\
\lap^{(22)}_{\sigma} = \lap^{\s (\dprime \setminus \prime)}_{\sigma} + {\dtilde{\lap}}^{}_{\sigma},
\hspace{5em} \!
\begin{bmatrix}
\dtilde{\lap}^{}_{\sigma} & \lap^{(23)}_{\sigma}
\end{bmatrix} \ones = \zeros, 
\end{align}
\normalsize}
\end{subequations}
in which $\lap^{\prime}_{\sigma}$ is the Laplacian matrix of the $1$-hop proximity graph $\graphx{i}$. $\widetilde{\lap}^{}_{\sigma}$ encodes the edge set $\bar{\Ec}^{i}_{\sigma} \setminus {\Ec}^{\, i^{\prime}}_{\sigma}$, that is the set of existing edges between the $1$-hop neighbors with one another, and the set of existing edges between the $1$-hop neighbors and the $2$-hop neighbors. $\lap^{\s (\dprime \setminus \prime)}_{\sigma}$ is the Laplacian matrix encoding the edge set ${\Ec}^{\, i^{\dprime}}_{\sigma} \setminus \bar{\Ec}^{\, i^{\prime}}_{\sigma}$ that is the connections of the $2$-hop neighbors with the $1$-hop neighbors, and $\dtilde{\lap}^{}_{\sigma}$ encodes the existing edges between the $2$-hop neighbors with one another, and those existing between the $2$-hop neighbors and the rest in higher hops, i.e., $\Vc \setminus \Vc^{\, i^{\dprime}}_{\sigma}$. Finally, $\lap^{\dprime}_{\sigma}$ is the Laplacian matrix associated with the $2$-hop proximity graph $\graphxx{i}$.
%
An illustrative example of the foregoing communication graphs is given in Fig. \ref{fig:network_eg}. 

\subsection{System dynamics}\label{sec:consenus_formation}
Consider a multi-agent system consisting of $ N \geq 3$ mobile agents (robots) with double-integrator dynamics as follows: 
\noindent
\begin{align}\label{eq:ol_sys}
\Sigma_{i}:
\left\{
\begin{array}{l}
\dot{\pb}_i(t) = \vb_i(t)\\ 
\dot{\vb}_i(t) = \ub_{i}(t) 
\end{array},
\right. 
\ \  i \in \mathcal{V} = \{1,\dots, N\} , 
\end{align}
in which $ \boldsymbol{p}_i(t) \in \real^{},$ and $ \vb_i(t)  \in \real^{} $ are the position and velocity states. $\ub_{i}(t)  \in \real^{} $ denotes\footnote{For brevity, we may omit the time argument, $t$, from expressions whenever possible for the remainder of this paper.} the control input of each mobile agent to be computed given local information exchange with its neighbors, $\onehop{i}$, over the unreliable communication network $\Gc_{\sigt}$.
We let an unknown subset of agents, denoted by $\Ac \subset \Vc$ and referred to as \emph{malicious} agents, update their control inputs $\ub_{i}, \, i \in \Ac, $ such that $\ub_{i} = \ub^{\rm{n}}_{i} + {\ub}^{\rm a}_{i} $ in \eqref{eq:ol_sys} where $\ub^{\rm{n}}_{i}$ is the \emph{normal} control input and $\ub^{\rm{a}}_{i}$ is an injected attack signal. We also refer to the rest of the agents, $ \Vc \setminus {\Ac} $, as \emph{cooperative} (or \emph{normal}) agents.
In this adversarial setting, the \emph{cooperative} (resp. \emph{malicious}) agents seek to achieve (resp. prevent) the cooperation objective that is defined as
\noindent
\begin{subequations}\label{eq:formation_consensus}
\begin{align} \label{eq:formation_consensus_pos}
\lim \limits_{t\rightarrow{\infty}} \left|\pb_i(t)-\pb_j(t) - \pstar_{ij} \right| &= \zeros,
  &\forall \; i,j \in \Vc \setminus \Ac ,
\\ \label{eq:formation_consensus_vel}
\lim \limits_{t\rightarrow{\infty}} \left|\vb_i(t) \right| &= \zeros, 
  &\forall \; i \in \Vc \setminus \Ac ,
\end{align}
\end{subequations}
where the predefined constants $\pstar_{ij} = \pstar_{i}-\pstar_{j}$ are the desired relative positions for any pair of mobile agents in the formation settings and $\pstar_{ij} = 0$ in the consensus settings. 

\rev{For agents in \eqref{eq:ol_sys} with the cooperation and adversary objectives \eqref{eq:formation_consensus}}, we consider the distributed control protocol:
%
\begin{align}\label{eq:ctrl_proto}
\hspace{-1ex}
\ub_i \!=\!  \ub^{\rm{n}}_{i} + \ub^{\rm{a}}_{i}, \;\; 
\ub^{\rm{n}}_{i} \!=\!
    -\alpha \!
    \hspace{-1ex}
    \sum_{ j \in \onehop{i}}
    \hspace{-1ex} 
    a^{\sigma}_{ij}(\pb_i-\pb_j -\pstar_{ij} ) 
    - \gamma \vb_i, 
\end{align}
which relies only on communication with $1$-hop neighbors $\onehop{i}$. 
Also, the constants $\alpha, \gamma \in \realpos$ are the control gains.

Given \eqref{eq:ol_sys} and \eqref{eq:ctrl_proto}, the network-level dynamics of the multi-agent system can be represented by a family of linear switched systems as follows:
\noindent
\begin{align}\label{eq:cl_sys}
\Sigma_{\sigt}:
\begin{bmatrix}
 \dot{\widetilde{\pb}} \\ \dot{\vb}  
\end{bmatrix}
&=
\begin{bmatrix}
 \zeros &  \phantom{-}I \\
 -\alpha \lap_{\sigma} & - \gamma I
\end{bmatrix}
\begin{bmatrix}
 \widetilde{\pb} \\ {\vb}  
\end{bmatrix}
+
\begin{bmatrix}
 \zeros \\ I_{\Ac}
\end{bmatrix}
\ua
\nonumber \\
 &=: \A \x + \Ba \ua, \qquad \x_0 = \x(t_0),
  \\ \label{eq:locally_measurements}
 \y^{\, i}_{\sigma} &= \col (\widetilde{\pb}_{j \in \Ic_i},\vb_i) =: \C^{\, i} \x,
\end{align}
where $\widetilde{\pb}=\pb-\pstar$ in $\x = \col(\widetilde{\pb},\vb)$ with $\pb \in \real^{N}$, $\pstar \in \real^{N}$, and $\vb \in  \real^{N}$ being, respectively, the stacked the position states, formation references ($\pstar = \zeros$ in consensus settings), and velocity states of all $N$ agents. Also, $ \lap_{\sigma} $ is the Laplacian matrix of the network $\Gc_{\sigt}$, encoding the communication links. 
$\ua = \col \paren{ {\ub}^{\rm a}_{{i}} }_{i \in \Ac} \in \real^{|\Ac|} $ and
$I_{\! \Ac} = \big[\efrak^{\s i_{1}}_{\s N}\; \efrak^{\s i_2}_{\s N}\; \dots\; \efrak^{\s i_{\s |\Ac|}}_{\s N}\big] \in \real^{N \times |\Ac|}$, where $\efrak^{\, i}_{N}$ specifies the input direction in $\real^{N}$, corresponding to the $i$-th malicious agent among $N$ agents. Finally, $\y^{\, i}_{\sigma}$ denotes the state measurements available for the $i$-th mobile agent consisting of the (relative) position states of a set of neighboring agents $\{i\} \cup \onehop{i} \subseteq \Ic_i \subseteq \Vc$ (where $\Ic_i $ will be determined later) and the velocity state $\vb_i$.

It is necessary to note that the nature of arbitrary switching modes $\sigt \in \Qc$, induced by the unreliability of network $\Gc_{\sigt}$, renders \emph{a priori} unknown system matrix $\A$ in \eqref{eq:cl_sys}. 
This imposes stability and observability challenges in distributed settings which will be addressed in Sections \ref{sec:net_res_and_stability} and \ref{sec:observer_design}. 

\subsection{Adversary model}\label{Sec:adversary_model}
We consider two classes of adversarial attacks, namely \emph{deception} attacks and \emph{denial-of-service} (DoS) attacks.

\textbf{Deception attacks.} In this model, a set of malicious agents $\Ac \subset \Vc$, as described in Section 
\ref{sec:consenus_formation}, inject some undesirable data $0\neq {\ub}^{\rm a}_{i}(t) \in \Lc_{p e},\; \forall\, i \in \Ac,\; \forall \, t \in [t^{\rm a}_i,\, \infty)$, where $t^{\rm a}_i \in \realnonneg $ is the activation time instant in \eqref{eq:ctrl_proto}. 
Among the well-studied deception attacks including data injection attack \cite{dibaji2015consensus,leblanc2013resilient},
zero-dynamics attacks (ZDA) \cite{pasqualetti2011consensus,mao2020novel,bahrami2022detection}, \revv{covert attack \cite{gallo2020distributed,bahrami2021privacy}}, replay attack \cite{rezaee2020almost}, \revv{and Byzantine attacks \cite{leblanc2013resilient}}, our analysis covers the first two models.
Similar to \cite{leblanc2013resilient,dibaji2017resilient}, the worst-case upper bounds on the number of malicious agents in the network are parameterized as follows:

\begin{definition}\btitle{$F$-local and $F$-total adversary sets}\label{assum:adversary_bound}
    The unknown adversary set $\Ac \subset  \Vc$ is termed $F$-total if $|\Ac |\leq F$, where $F \in \intgnonneg$, that is there exist at most $F$ malicious agents in the network with $0\neq {\ub}^{\rm a}_{i}(t) \in \Lc_{p e}$ in \eqref{eq:ctrl_proto}. 
    The set $\Ac \subset  \Vc$ is termed $F$-local if $\forall \, i \in \Vc \setminus \Ac, \; |\Ac \cap \boldsymbol{\Nc}^{\, i(1)}| \leq F$, where $F \in \intgnonneg$ and the aggregated set of $1$-hop neighbors $\boldsymbol{\Nc}^{\, i(1)} \!=\! \setdef{j\in \Vc}{(i,j)\in \boldsymbol{\Ec}^{\mu}_{\s T}}$ with the edge set $\boldsymbol{\Ec}^{\mu}_{\s T}$ (to be specified) defined uniformly over the time interval $[t,\, t+T)$, $\forall \, t \in \realnonneg$, $\exists \, T \in \realpos$. i.e., each cooperative agent has no more than $F$ malicious agents with $0 
    \neq \ub^{\rm{a}}_{i} \in \Lc_{p e} $ in \eqref{eq:ctrl_proto} among its aggregated set of $1$-hop neighbors defined uniformly in time. 
\end{definition}
An explicit upper bound on $F$, and the explicit definition of the edge set $\boldsymbol{\Ec}^{\mu}_{\s T}$ will be given in Section \ref{sec:observer_design}.

\begin{remark}
The $F$-local model presented herein is a relaxation of the model in \cite{leblanc2013resilient,zhang2015notion} that required the upper bound inequality
holds point-wise in time, i.e., $\forall \, t \in \realnonneg,\, \boldsymbol{\Nc}^{\, i(1)} $ $ = \Nc^{\s i(1)}_{\sigt} = \setdef{j\in \Vc}{(i,j)\in {\Ec}_{\sigt}}$ 
(cf. the discrete-time version in \cite[Sec. 4.4]{dibaji2017resilient} and \cite{saldana2017resilient}).
\end{remark}

\textbf{Denial-of-Service (DoS) attack}.
We consider a time-constrained (distributed) DoS attack on the communication network $\Gc_{\sigt}=(\Vc, \Ec_{\sigt})$ that causes the intermittent unavailability of (state) information exchange, either partially or fully \cite{amin2009safe,de2015input,lu2017input}. 
We take into account such DoS attacks by the inclusion of some modes $\sigma(t) \in \Qc^{\rm a} \subset \Qc $ for the network $\Gc_{\sigt}$ where an unknown subset of edges $\Ec_\sigt$ are nullified. Accordingly, $\Gc_{\sigt \in \Qc^{\rm a}}$ is at most disconnected as a consequence of nullified (blocked) edge links, that is
\noindent
\begin{align}\label{eq:DoS_state}
    \exists \, \Ec^{\rm a}_{\sigma(t)}=\braces{(i,j) \in \Ec_{\sigma(t)} \mid (i,j)=\emptyset, \; i,j \in \Vc}
    &\quad \text{s.t.} 
    \nonumber \\
    \lambda_{2}(\lap_{\sigt}) = 0, 
    \ \ \forall \, t \in \boldsymbol{\Tc}_{\! \rm a} 
    = 
    \{ \Tc^{\rm a}_{\;k}\}^{\mb{n}}_{k \in \intgnonneg}&,
\end{align}
where $ \lambda_{2}(\lap_{\sigt})$ is the algebraic connectivity at $t$, $ \boldsymbol{\Tc}_{\! \rm a} = \{ \Tc^{\rm a}_{\;k}\}^{\mb{n}}_{k \in \intgnonneg}$ with $k \leq \mb{n} \in \intgnonneg$, denotes a finite sequence of $\mb{n} $ DoS attacks having bounded but not necessarily contiguous time intervals $\Tc^{\rm a}_{\;k}$'s, where $\Tc^{\rm a}_{\;k} = [t_k, \, t_k+T^{{\rm a}}_k)$ with \, $T^{{\rm a}}_k \in \realpos$ and $ \sigma(t_k) \in \Qc^{\rm a}$. 

We will show that if the set $\boldsymbol{\Tc}_{\! \rm a}$ in 
\eqref{eq:DoS_state} is sufficiently small the cooperation objective \eqref{eq:formation_consensus} is achievable. 

\subsection{Problem statement}\label{sec:problem_statement}
Consider $\Sigma_{\sigt}$ in \eqref{eq:cl_sys} with an unreliable communication network, $\Gc_{\sigt}, \, \sigt \in \Qc$, subject to the DoS attack in \eqref{eq:DoS_state} as well as deception attacks that are injected by a set of \emph{malicious} agents $\Ac \subset \Vc$.
The problems of interest are distributed detection of the set of \emph{malicious} agents $\Ac$, and the resilient cooperation of the remaining cooperative agents $\Vc\setminus\Ac$.

\textbf{Distributed attack detection.} 
We cast the attack detection problem as a form of distributed hypothesis testing problem where each mobile agent $\Sigma^{}_{i}$ in \eqref{eq:ol_sys} locally verifies either the null hypothesis $\nullH \!: {\rm attack}\text{-}{\rm free}, \; \text{if} \; \onehop{i} \cap \Ac \!=\! \emptyset$  or the alternative hypothesis $\alterH \!: {\rm attacked}, \; \text{if} \; \onehop{i} \cap \Ac \!\neq\! \emptyset.$
For this purpose, we equip each $\Sigma^{}_{i}$ in \eqref{eq:ol_sys} with a reconfigurable local attack detector module
of the form $\Sigma^{\s \Oc}_{\s \Vc^{i^{\s \dprime}}_{\sigma}} \!:\! 
    \res^{\, i}_{ \sigma}(t) \!=\! O^{i}_{\sigma}(\y^{\, i}_{\sigma})$,
%
in which $O^{i}_{\sigma}(\cdot)$ is a stable linear filter (e.g., a Luenberger-type observer) in each mode $\sigma \in \Qc$, whose explicit expression will be given in Section \ref{sec:local_obs_design}. Also, its inputs and outputs are, resp., $\y^{\, i}_{\sigma}$ in \eqref{eq:locally_measurements} and the residual $\res^{\, i}_{\sigma}(t) = \y^{\, i}_{\sigma} - \hat\y^{\, i}_{\sigma} $, where $\hat\y^{\, i}_{\sigma}$ is the estimation of $\y^{\, i}_{\sigma}$. Given the switching nature of $\Sigma_{\sigt}$ in \eqref{eq:cl_sys} with possibly unknown modes of $\Gc_{\sigt}$ \rev{subject to the DoS attack in \eqref{eq:DoS_state}}, it is necessary to note that the realization and reconfiguration of $\Sigma^{\s \Oc}_{\s \Vc^{i^{\s \dprime}}_{\sigma}}$, rely on (a minimum amount of) \rev{local information that is available intermittently not point-wise in time}. This contains the set of $2$-hop information available for each agent $i \in \Vc$, defined as

\noindent
\begin{align}\label{eq:2hop_info}
\hspace{-1.2ex}
    \Phi_{\sigt}^{\, i} = \braces{\graphxx{i}, \; \pb_j(t), \, \forall \, j  \in \Vc^{\,i^{\dprime}}_{\sigma}, \vb^{}_i(t) }\!,
\end{align}
where the \revv{topological knowledge $\Gc^{i^{\dprime}}_{\sigt}$, defined by \eqref{eq:2prox_edge}, can be either be pre-programmed as in autonomous monitoring scenarios \cite{yu2021resilient,mao2020novel} or be obtained via information exchange with only the $1$-hop neighbors $\onehop{i}, \, i \in \Vc$ upon network availability}. 
We remark that we do not explicitly address the case of $F$-total \emph{Byzantine} agents transmitting inconsistent information to their neighbors, \revv{nor covert attacks where a \emph{malicious} agent hides its true state by sending an altered state. We refer to \cite{leblanc2013resilient} and \cite{bahrami2021privacy} for these cases, respectively.
}

The attack detector module $\Sigma^{\s \Oc}_{\s \Vc^{i^{\s \dprime}}_{\sigma}}$ allows for quantification and verification of the simple null and alternative hypotheses using the local residuals, $\res^{\, i}_{ \sigma}(t)$'s, as follows: 
\begin{subequations}\label{eq:hypotheses_local}
\begin{align}\label{eq:hypotheses_local_null}
    \nullH &: {\rm attack}\text{-}{\rm free}, 
    && \text{if} \;\;  \forall \, j \in \onehop{i}, \;\, \forall\, i \in \Vc \setminus \Ac,
    \nonumber \\ & &&
    | \res^{i,j}_{ \sigma}(t) | \leq  \epsilon^{i,j}_{\sigma}, \;\,  \forall \, t \in \realnonneg,
    \\\label{eq:hypotheses_local_alternative}
    \alterH &: {\rm attacked}, 
    && \text{if} \;\;  \exists \, j \in \onehop{i}, \;\, \exists \, i \in \Vc \setminus \Ac,
    \nonumber \\ & &&
    | \res^{i,j}_{ \sigma}(t) | >  
    \epsilon^{i,j}_{\sigma} , \;\,  
    \exists \, t \in \realnonneg,
\end{align}
\end{subequations}
where $\res^{i,j}_{ \sigma}(t)$ is the $j$-th component of the residual signal of the local attack detector $\Sigma^{\s \Oc}_{\s \Vc^{\dprime}_{i}}$, and $\epsilon^{i,j}_{\sigma}$'s are the corresponding (dynamic) thresholds that will be defined later in Section \ref{sec:local_obs_design}.
%

\textbf{Resilient cooperative control.} Resilient cooperation refers to detaching from the detected set of malicious agents $\Ac \subset \Vc$ and convergence of the remaining cooperative mobile agents, $\Vc \setminus \Ac$, 
to the equilibrium defined in \eqref{eq:formation_consensus}. The isolation process induces topology switching that can be ultimately modeled by the induced network $\bar\Gc_{\sigt}=(\bar\Vc, \bar\Ec_{\sigt})$ with
%
\noindent
\begin{align}\label{eq:comm_net_res}
    \bar\Vc := \Vc \setminus \Ac, &&  \bar\Ec_{\sigt} := \Ec_{\sigt} \cap (\bar\Vc \times \bar\Vc).
\end{align}
Then, the problem of interest is to investigate under what conditions the resilient cooperation \eqref{eq:formation_consensus} over $\bar\Gc^{}_{\sigt}$ is achievable.

\section{Network Resilience and Stability analysis}\label{sec:net_res_and_stability}
In this section, we investigate the network resilience to intermittent and permanent disconnections, as well as the stability and convergence of the multi-agent system in \eqref{eq:cl_sys} with the unreliable communication network $\Gc_{\sigt}$. 
%
\begin{assumption}\label{assum:switching}
It is assumed that there exists a finite number of switches in any finite time interval. 
This allows us to rule out the Zeno phenomenon \cite{tanwani2012observability}. Formally, there exists a finite sequence $\braces{{t}_k}_{k=0}^{\mb{m}} = t_0, \dots, t_{\mb{m}}$, where $ \mb{m} \in \intgnonneg $ and $\mb{m} > \mb{n}$ in \eqref{eq:DoS_state}, that forms the set of $\mb{m}$ time instants in the ascending order of occurrence during any given time interval $[t_0, \, t_0+T)$, where $t_0 \in \realnonneg$, and $T\in \realpos$ are defined such that $T>(t_{\mb{m}}-t_0) \geq 0 $. Accordingly, the $\mb{m}+1$ (possibly unknown) modes $ \sigma(t_0),  \sigma(t_1), \dots, \sigma(t_{\mb{m}})$ ($\braces{\sigma(t_{k}) \in \Qc'\subseteq \Qc, \; k \in \{0, \dots, \mb{m} \} }$) denote the respective active modes of $\Sigma_{\sigt}$ in \eqref{eq:cl_sys} during the interval $ [t_0, \, t_0+T)$.
\end{assumption}
%
%
\begin{definition}\btitle{$(\mu,T)$-PE connected communication network}\label{def:comm_capability}
A communication network $\Gc_{\sigt}=(\Vc,\Ec_{\sigt})$ is called $(\mu,T)$-PE connected with some $ T \in \realpos $ and $\mu \in (0,\, N]$ if its associated Laplacian matrix 
$\lap_{\sigt}$
satisfies a $(\mu,T)$-Persistence of Excitation (PE) condition of the form 
\begin{align}\label{eq:PE_cond}
    \frac{1}{T}\int_{t}^{t+T} Q\lap_{\sigma(\tau)} Q^{\top} \, {\rm d}\tau \geq \mu I_{N-1},   \ \ \forall \, t\in \realnonneg,  
\end{align}
where the matrix $ Q \in \real^{\s (N-1) \times N}$ is defined such that
\begin{align}\label{eq:PE_Q}
 Q  \ones_{N} = \zeros, \;\, QQ^{\top} = I_{N-1}, \;\, Q^{\top}Q = I_{N} - \frac{1}{N} \ones^{}_{N} \ones^{\top}_{N}. 
\end{align}
\end{definition}
\begin{assumption}\label{assum:comm_capability}
    The communication network $\Gc_{\sigt}$ of the system $\Sigma_{\sigma(t)}$ in \eqref{eq:cl_sys} is assumed to be  $(\mu,T)$-PE connected. 
\end{assumption}
\begin{remark}\label{rmk:PE_connectivity}
The $(\mu,T)$-PE connectivity has appeared in the literature in different forms \cite{cichella2015cooperative,anderson2016convergence}. 
It relaxes the strict point-wise in-time connectivity to the case of connectivity only in the integral sense of \eqref{eq:PE_cond}, whereby positive algebraic connectivity in the integral sense that $\lambda_2 ( \frac{1}{T} \int_{t-T}^{t} \lap_{\sigma(\tau)}  \, {\rm d}\tau ) > \mu$ holds $\forall \,  t \geq T $ and $\exists \, \mu,T \in \realpos$ as in \eqref{eq:PE_cond}. This relaxation allows for modeling a class of networks including periodically switching networks, \cite{mao2020novel}, periodic with intermittent communications \cite{saldana2017resilient,yu2021resilient}, and jointly connected networks \cite{ren2007information},
\rev{as well as for quantifying resilience to the deception and DoS attacks defined in Section \ref{Sec:adversary_model}. See Proposition  \ref{prop:convergence_PE_like}.}
Finally, the matrix $Q$ in \eqref{eq:PE_Q} can be recursively obtained \cite[rmk. 2]{cichella2015cooperative}.
\end{remark}

We next show the equivalence of the $(\mu,T)$-PE condition presented herein and that in \cite{anderson2016convergence} for ensuring a positive \emph{algebraic connectivity} in the integral sense. 
\rev{The equivalent conditions will later be used in the stability and robustness analyses, particularly in Theorem \ref{thm:robustness-to-disconection} and Lemma \ref{lemma:net_level_obs}}. 

\begin{lemma}\btitle{Equivalence of $(\mu,T)$-PE conditions for Network Connectivity}
\label{lemma:PE_equvalence}
Consider a $(\mu,T)$-connected network $\Gc_{\sigt}=(\Vc,\Ec_{\sigt})$ with the associated Laplacian matrix $\lap_{\sigt}$. The following statements are equivalent:
\begin{enumerate}
    \item \label{itm:PE_cond} The condition \eqref{eq:PE_cond} holds.
    \item \label{itm:PE_cond_alt} There exist $\mu_{m}, \, \mu_{M}, \, T \in \realpos$ such that $\forall \, t\in \realnonneg$,
   \begin{align}\label{eq:PE_cond_alt}
    \hspace{-2em}
        \mu_{m} I_N &\leq 
        \frac{1}{T}\int_{t}^{t+T}
        \!
        \paren{ \inc_{\sigma(\tau)} + \frac{\ones^{}_{N}\bfrak^{\top}_{\tau}}{\sqrt{N}}} 
        \! \!
        \paren{ \inc_{\sigma(\tau)} + \frac{\ones^{}_{N}\bfrak^{\top}_{\tau}}{\sqrt{N}}}^{\!\!\!\top} {\!\rm d}\tau 
        \nonumber \\ 
        &= 
        \frac{1}{T}\int_{t}^{t+T} 
        \paren{ \lap_{\sigma(\tau)} + \frac{\ones^{}_{N}\ones^{\top}_{N}}{{N}} } \, {\rm d}\tau 
        \leq \mu_{M} I_{N}.
    \end{align}
    where $ \bfrak_{t} $ is the unit vector in the kernel of $ \inc_{\sigma(t)} $ \cite{anderson2016convergence}.
    \item \label{itm:PE_cond_alt_edge} There exist $ \delta, T  \in \realpos $ such that the set of edges 
    \begin{align} \label{eq:PE_cond_alt_edge}
        \boldsymbol{\Ec}_{\s T}^{\mu} =
        \big\{ (i,j) \in  
        \; &
        \Ec_{\sigma(t)} \mid  
        \frac{1}{T}   \int_{t}^{t+T}  a^{\sigma(\tau)}_{ij}  \, {\rm d}\tau \geq \delta ,
        \nonumber \\ &
        \forall \, t \in \realnonneg , \ \ i,j \in \Vc ,\,  i\neq j \big\},
    \end{align}
    forms a connected simple\footnote{This spans the class of graphs with a finite number of nodes ($|\Vc| < \infty$) containing no self-loop or multiple edges between any pair of nodes.} graph in the integral sense, denoted by $\boldsymbol{\Gc}_{\s T}^{\mu} = (\Vc, \boldsymbol{\Ec}_{\s T}^{\mu})$, where $\frac{1}{T}   \int_{t}^{t+T}  a^{\sigma(\tau)}_{ij}  \, {\rm d}\tau$'s in \eqref{eq:PE_cond_alt_edge} form the entries of the corresponding weighted adjacency matrix $\boldsymbol{\adj}$ and Laplacian matrix ${\boldsymbol{\lap}}$ that are defined as follows:
    \begin{align}\label{eq:PE_lap_adj}
    \hspace{-1em}
    \boldsymbol{\adj} = 
    \frac{1}{T}\int_{t}^{t+T} \adj_{\sigma(\tau)}  \, {\rm d}\tau,  
    \ \
    {\boldsymbol{\lap}   = \frac{1}{T}\int_{t}^{t+T} \lap_{\sigma(\tau)}  \, {\rm d}\tau }.
    \end{align}
\end{enumerate}
\end{lemma}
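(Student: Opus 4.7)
The plan is to establish the cycle (i) $\Leftrightarrow$ (ii) $\Leftrightarrow$ (iii) using three ingredients: the spectral identity $Q^\top Q = I_N - \ones_N\ones_N^\top/N$ from \eqref{eq:PE_Q}; the linearity of the Laplacian--adjacency correspondence under time averaging; and a Fiedler-type isoperimetric inequality to convert spectral lower bounds into structural connectivity of the averaged graph.

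First I would record the common preliminary that $\boldsymbol{\lap}(t) := \frac{1}{T}\int_t^{t+T}\lap_{\sigma(\tau)}\,{\rm d}\tau$ is itself the Laplacian of the weighted graph with adjacency $\boldsymbol{\adj}(t)$ defined in \eqref{eq:PE_lap_adj}, inherits symmetry and positive semi-definiteness from the pointwise Laplacians, and satisfies $\boldsymbol{\lap}(t)\ones_N = 0$. Since $Q$ maps the orthogonal complement of $\ones_N$ isometrically onto $\real^{N-1}$, the eigenvalues of $Q\,\boldsymbol{\lap}(t)\,Q^\top$ coincide with the nonzero eigenvalues $\{\lambda_k(\boldsymbol{\lap}(t))\}_{k=2}^{N}$ of $\boldsymbol{\lap}(t)$, while those of $\boldsymbol{\lap}(t) + \ones_N\ones_N^\top/N$ are $\{1,\lambda_2(\boldsymbol{\lap}(t)),\dots,\lambda_N(\boldsymbol{\lap}(t))\}$. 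Hence \eqref{eq:PE_cond} reduces to $\lambda_2(\boldsymbol{\lap}(t))\geq\mu$ uniformly in $t$, and the lower bound in \eqref{eq:PE_cond_alt} to $\min\{1,\lambda_2(\boldsymbol{\lap}(t))\}\geq\mu_m$ uniformly in $t$; the two match via $\mu_m = \min\{\mu,1\}$, and the upper bound in \eqref{eq:PE_cond_alt} is automatic with $\mu_M = N+1$ because $\|\lap_{\sigma(\tau)}\|\leq N$. The incidence-matrix form in \eqref{eq:PE_cond_alt} is verified by direct expansion of the quadratic in $\bigl(\inc_{\sigma(\tau)} + \ones_N\bfrak_\tau^\top/\sqrt{N}\bigr)\bigl(\inc_{\sigma(\tau)} + \ones_N\bfrak_\tau^\top/\sqrt{N}\bigr)^\top$: the two cross terms vanish because $\bfrak_\tau \in \ker\inc_{\sigma(\tau)}$, and $\|\bfrak_\tau\|=1$ reproduces the rank-one offset $\ones_N\ones_N^\top/N$.

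For (iii) $\Rightarrow$ (i), I would use Loewner monotonicity of the Laplacian in its edge weights. By definition of $\boldsymbol{\Ec}^{\mu}_{\s T}$, the entrywise bound $\boldsymbol{\adj}(t) \geq \delta\,\Asf_{\boldsymbol{\Gc}^{\mu}_{\s T}}$ holds for every $t$, and since any Laplacian decomposes as a nonnegative combination of (PSD) single-edge Laplacians, this lifts to $\boldsymbol{\lap}(t) \succeq \delta\,\Lsf_{\boldsymbol{\Gc}^{\mu}_{\s T}}$ in the Loewner order. Connectedness of $\boldsymbol{\Gc}^{\mu}_{\s T}$ gives $\lambda_2(\Lsf_{\boldsymbol{\Gc}^{\mu}_{\s T}})>0$, and Weyl's eigenvalue monotonicity yields $\lambda_2(\boldsymbol{\lap}(t)) \geq \delta\,\lambda_2(\Lsf_{\boldsymbol{\Gc}^{\mu}_{\s T}})>0$ uniformly in $t$, recovering \eqref{eq:PE_cond}.

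For (ii) $\Rightarrow$ (iii), which I expect to be the main obstacle, I would invoke the Fiedler isoperimetric bound
\begin{equation*}
\mathrm{cut}_{\boldsymbol{\adj}(t)}(S,\Vc\setminus S) \;\geq\; \frac{\lambda_2(\boldsymbol{\lap}(t))\,|S|\,|\Vc\setminus S|}{N} \;\geq\; \frac{\mu_m(N-1)}{N},
\end{equation*}
valid at every $t$ and every nontrivial partition $(S,\Vc\setminus S)$. Picking $\delta$ strictly below $4\mu_m(N-1)/N^{3}$, a contradiction argument rules out any partition whose crossing edges all carry average weight below $\delta$ at some $t$: the total cut weight would then be at most $\delta\cdot|S||\Vc\setminus S|\leq \delta\,N^{2}/4 < \mu_m(N-1)/N$, violating Fiedler's bound. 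The subtle point is that the ``heavy'' crossing edge produced by the argument at a given $t$ may vary with $t$, whereas $\boldsymbol{\Ec}^\mu_{\s T}$ thresholds the average weight uniformly in $t$. To overcome this, I would exploit the finiteness of the mode set $\Qc$ together with the dwell-time regularity afforded by Assumption \ref{assum:switching}, which constrain $\boldsymbol{\adj}(t)$ to a compact set; enlarging $T$ if necessary (to cover several switches) suffices to transfer the pointwise-in-$t$ cut lower bound to the infimum-in-$t$ adjacency entries, so that the surviving edges in $\boldsymbol{\Ec}^\mu_{\s T}$ span a connected graph and close the cycle.
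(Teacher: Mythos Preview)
Your treatment of (i) $\Leftrightarrow$ (ii) matches the paper's almost line for line: both reduce each condition to the uniform bound $\lambda_2(\boldsymbol{\lap}(t))\geq\mu$ via the spectral identities $\spec(Q\boldsymbol{\lap}Q^\top)=\spec(\boldsymbol{\lap})\setminus\{0\}$ and $\spec(\boldsymbol{\lap}+\ones_N\ones_N^\top/N)=\{1,\lambda_2,\dots,\lambda_N\}$, and your expansion of the incidence-matrix quadratic is correct.

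Where you diverge is on (ii) $\Leftrightarrow$ (iii). The paper does not prove this implication at all; it simply cites \cite[Thm.~1]{anderson2016convergence}. Your attempt at a self-contained argument is therefore a genuinely different route. The (iii) $\Rightarrow$ (i) direction via Loewner monotonicity of the edge-sum decomposition is clean and correct. For (ii) $\Rightarrow$ (iii), your Fiedler cut bound is valid and correctly yields, for each fixed $t$ and each nontrivial partition, the existence of \emph{some} crossing edge with averaged weight at least $4\mu_m(N-1)/N^3$. You then rightly flag the obstacle: the heavy edge may vary with $t$, so nothing prevents \emph{every} crossing edge from dipping below $\delta$ at \emph{some} $t$, which would empty the uniformly-thresholded set $\boldsymbol{\Ec}^\mu_{\s T}$ across that cut.

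The gap is that your proposed fix---``compactness of $\boldsymbol{\adj}(t)$ from finiteness of $\Qc$, plus enlarging $T$''---is not an argument as written. Compactness of the image does not by itself force the \emph{infimum} of each entry to be positive, and Assumption~\ref{assum:switching} gives only finitely many switches per finite interval, not a uniform dwell time. One can construct switching signals where condition~(i) holds with a fixed $T$ yet, for that same $T$, the uniformly-thresholded edge set is disconnected; only by enlarging $T$ (as statement~(iii) permits, since $\delta,T$ are existentially quantified) can connectivity be recovered, and showing that some larger $T$ always works requires a separate argument you have not supplied. Since the paper outsources exactly this step to \cite{anderson2016convergence}, you are not behind the paper, but if you want a self-contained proof you must either invoke that reference or close the uniformity gap explicitly.
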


\begin{proof}
See Appendix \ref{app_PE_equvalence}.
%
\end{proof}
         
We next show the relation between the $(\mu,T)$-PE connectivity and the bounds on the vertex connectivity and robustness of graphs.
The connectivity-related bounds provide a measure of the network resilience to intermittent and permanent disconnections. The former is associated with resilience to DoS attack in \eqref{eq:DoS_state} and the latter is required for resilience to malicious agents though disconnecting from them (see \eqref{eq:comm_net_res}).  
\begin{definition}\btitle{$r$-robust network \cite{leblanc2013resilient}}\label{def:r_robust}
    A static graph $\Gc=(\Vc,\Ec)$ is $r$-robust, where $r = r(\Gc) \in \intgnonneg$ with $ 0 \leq r(\Gc) \leq \ceil{|\Vc|/2} $, if for any pair of nonempty, disjoint subsets of $\Vc$, at least one of the subsets, denoted as $\Sc$, holds $|\Nc^{\s i(1)}_{}\setminus \Sc| \geq r,\, \; \exists \, i \in \Sc$.
\end{definition}
\begin{definition}\btitle{$(r,T)$-robust network}\label{def:r_T_robust}
A time-varying network $\Gc_{\sigt}=(\Vc,\Ec_{\sigt})$ is called $(r,T)$-robust\footnote{The $(r,T)$-robust network herein is robustness in an integral sense as a relaxation of the $r$-robust static network (cf. the discrete-time version in \cite[Def. 2.2]{yu2021resilient}). The $(r,T)$-robust in Definition \ref{def:r_T_robust} should not be confused by the notation of $(r,s)$-robustness, for some $r \in \intgnonneg $ and $1\leq s \leq |\Vc|$, that is a strict generalization of $r$-robustness defined for a static graph \cite{leblanc2013resilient}.} 
with some $T \in \realpos $ and $r \in \intgnonneg \setminus \{0\}$ if the resultant static network $\boldsymbol{\Gc}_{\s T}^{\mu} = (\Vc, \boldsymbol{\Ec}_{\s T}^{\mu})$ with $\boldsymbol{\Ec}_{\s T}^{\mu}$ in \eqref{eq:PE_cond_alt_edge}, obtained under the $(\mu,T)$-PE condition \eqref{eq:PE_cond}, is $r$-robust, where $r \leq r(\boldsymbol{\Gc}_{\s T}^{\mu})$.
\end{definition}
\begin{definition}\btitle{$(\boldsymbol{\kappa},T)$-vertex-connected network}\label{def:k_T_conect}
A time-varying network $\Gc_{\sigt}=(\Vc,\Ec_{\sigt})$ is called $(\boldsymbol{\kappa},T)$-vertex-connected with some $T \in \realpos $ and $\boldsymbol{\kappa} \in \intgnonneg \setminus \{0\}$ if the resultant static network $\boldsymbol{\Gc}_{\s T}^{\mu} = (\Vc, \boldsymbol{\Ec}_{\s T}^{\mu})$ with $\boldsymbol{\Ec}_{\s T}^{\mu}$ in \eqref{eq:PE_cond_alt_edge}, obtained under the $(\mu,T)$-PE condition \eqref{eq:PE_cond}, is $\boldsymbol{\kappa}$-vertex-connected,  where $\boldsymbol{\kappa} \leq \boldsymbol{\kappa}(\boldsymbol{\Gc}_{\s T}^{\mu})$.
\end{definition}
\begin{proposition}\label{prop:graph_bounds}
    Let $\Gc_{\sigt}=(\Vc, \Ec_{\sigt})$ be a $(\mu,T)$-PE connected network under Assumptions \ref{assum:switching} and \ref{assum:comm_capability}. Then, $\Gc_{\sigt}$ is at least $(\ceil{ \frac{\mu}{2} } ,T)$-vertex-connected and $(\ceil{ \frac{\mu}{2}} ,T)$-robust, 
    and
    the following inequalities hold for the resultant network $\boldsymbol{\Gc}_{\s T}^{\mu} = (\Vc, \boldsymbol{\Ec}_{\s T}^{\mu})$ with $\boldsymbol{\Ec}_{\s T}^{\mu}$ in \eqref{eq:PE_cond_alt_edge}.
\noindent
\begin{align}\label{eq:graph_bounds_all}
\hspace{-1ex}
  \ceil{ \frac{\hat\mu}{2} }  
  \! \leq 
  r(\boldsymbol{\Gc}_{\s T}^{\mu})
  \leq
  \boldsymbol{\kappa}(\boldsymbol{\Gc}_{\s T}^{\mu}) \leq |\Vc|-1, \ \ \hat\mu :=   \lambda_2 \paren{ \boldsymbol{\lap} }  \geq \mu ,  
\end{align}
where the robustness $r(\boldsymbol{\Gc}_{\s T}^{\mu})$,  vertex connectivity $\boldsymbol{\kappa}(\boldsymbol{\Gc}_{\s T}^{\mu})$ are defined 
based on the adjacency and Laplacian matrices in \eqref{eq:PE_lap_adj}.
Additionally, if $\boldsymbol{\Gc}^{\mu}_{\s T}$ is a noncomplete, we have $ \lambda_2(\boldsymbol{\lap})  \leq  \boldsymbol{\kappa}(\boldsymbol{\Gc}_{\s T}^{\mu})$ ensuring that $\Gc_{\sigt}$ is at least $(\ceil{ \mu } ,T)$-vertex-connected.
\end{proposition}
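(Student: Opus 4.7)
The plan is to reduce everything to the static “integrated’’ graph $\boldsymbol{\Gc}^{\mu}_{\s T}$ given by Lemma~\ref{lemma:PE_equvalence}, and then to invoke a chain of classical spectral–combinatorial inequalities (Fiedler and LeBlanc–Zhang style) on that static graph. The definitions of $(r,T)$-robustness and $(\boldsymbol{\kappa},T)$-vertex-connectivity in Definitions~\ref{def:r_T_robust}–\ref{def:k_T_conect} are already tailored to pass from the static $\boldsymbol{\Gc}^{\mu}_{\s T}$ back to $\Gc_{\sigt}$, so once we prove the chain \eqref{eq:graph_bounds_all} for $\boldsymbol{\Gc}^{\mu}_{\s T}$ the conclusion about the time-varying network is immediate.

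First, I would upgrade the PE condition \eqref{eq:PE_cond} to the spectral statement $\hat\mu:=\lambda_2(\boldsymbol{\lap})\geq \mu$. Using \eqref{eq:PE_Q}, the matrix $Q\in\real^{(N-1)\times N}$ projects onto $\ones_N^{\perp}$, so for any $v\in\ones_N^{\perp}$ with $\|v\|=1$ we can write $v=Q^{\top}w$ with $\|w\|=1$. Integrating and applying \eqref{eq:PE_cond} gives
\begin{equation*}
v^{\top}\boldsymbol{\lap}\,v \;=\; w^{\top}\!\left(\frac{1}{T}\int_{t}^{t+T}Q\lap_{\sigma(\tau)}Q^{\top}\,\mathrm{d}\tau\right)\!w \;\geq\; \mu,
\end{equation*}
so by the min–max (Courant–Fischer) characterization on $\ones_N^{\perp}$, $\hat\mu\geq\mu>0$, which in particular certifies that $\boldsymbol{\Gc}^{\mu}_{\s T}$ is connected.

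Next, I would chain the three inequalities in \eqref{eq:graph_bounds_all} on the static graph $\boldsymbol{\Gc}^{\mu}_{\s T}$. The outer-right bound $\boldsymbol{\kappa}(\boldsymbol{\Gc}^{\mu}_{\s T})\leq |\Vc|-1$ is immediate from the definition of vertex connectivity. The middle bound $r(\boldsymbol{\Gc}^{\mu}_{\s T})\leq\boldsymbol{\kappa}(\boldsymbol{\Gc}^{\mu}_{\s T})$ follows from the standard fact that any $r$-robust graph is $r$-vertex-connected (a cutset of size $<r$ would split $\Vc$ into two subsets, neither of which can contain a vertex with $r$ outside neighbors, contradicting Definition~\ref{def:r_robust}). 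The left bound $\lceil\hat\mu/2\rceil\leq r(\boldsymbol{\Gc}^{\mu}_{\s T})$ is the key ingredient and I would invoke the LeBlanc–Zhang style lower bound on robustness via algebraic connectivity (the argument is: for any nonempty disjoint $\Sc_1,\Sc_2\subset\Vc$, the Fiedler cut inequality forces the number of boundary edges of the smaller part to be at least $\hat\mu\,|\Sc|(|\Vc|-|\Sc|)/|\Vc|$, from which one extracts a vertex with at least $\lceil\hat\mu/2\rceil$ neighbors outside $\Sc$). Together with $\hat\mu\geq\mu$ this yields $r(\boldsymbol{\Gc}^{\mu}_{\s T})\geq\lceil\mu/2\rceil$ and $\boldsymbol{\kappa}(\boldsymbol{\Gc}^{\mu}_{\s T})\geq\lceil\mu/2\rceil$, which via Definitions~\ref{def:r_T_robust}–\ref{def:k_T_conect} translates into $(\lceil\mu/2\rceil,T)$-robustness and $(\lceil\mu/2\rceil,T)$-vertex-connectivity of $\Gc_{\sigt}$.

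Finally, for the noncomplete case, I would apply Fiedler's classical inequality $\lambda_2(\lap)\leq\boldsymbol{\kappa}(\Gc)$ directly to the static graph $\boldsymbol{\Gc}^{\mu}_{\s T}$, which gives $\boldsymbol{\kappa}(\boldsymbol{\Gc}^{\mu}_{\s T})\geq\hat\mu\geq\mu$ and hence $\boldsymbol{\kappa}(\boldsymbol{\Gc}^{\mu}_{\s T})\geq\lceil\mu\rceil$, again lifting via Definition~\ref{def:k_T_conect} to the $(\lceil\mu\rceil,T)$-vertex-connectivity of $\Gc_{\sigt}$. The main obstacle I expect is the LeBlanc–Zhang-type bound $r(\boldsymbol{\Gc}^{\mu}_{\s T})\geq\lceil\hat\mu/2\rceil$: verifying that its standard static-graph proof carries over cleanly to the weighted Laplacian $\boldsymbol{\lap}$ in \eqref{eq:PE_lap_adj} (whose entries are time-averaged adjacency weights rather than $\{0,1\}$-valued) requires a careful rewriting of the cut-size inequality in terms of the edge-set $\boldsymbol{\Ec}^{\mu}_{\s T}$ with weights bounded below by $\delta$; everything else is a direct appeal to well-known spectral facts.
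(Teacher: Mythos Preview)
Your proposal is correct and follows essentially the same route as the paper: reduce via Lemma~\ref{lemma:PE_equvalence} to the static integrated graph $\boldsymbol{\Gc}^{\mu}_{\s T}$ with $\lambda_2(\boldsymbol{\lap})\geq\mu$, then chain the known spectral--combinatorial bounds $\lceil\lambda_2/2\rceil\leq r\leq\boldsymbol{\kappa}$ and Fiedler's $\lambda_2\leq\boldsymbol{\kappa}$ for noncomplete graphs. The paper simply cites \cite{shahrivar2017spectral}, \cite[Thm.~2]{saulnier2017resilient}, \cite[Thm.~6]{leblanc2013resilient}, and \cite[Ch.~13.5]{godsil2001algebraic} for these steps, whereas you sketch the underlying arguments; your flagged concern about the weighted Laplacian is legitimate but harmless here, since the time-averaged weights in $\boldsymbol{\lap}$ are all at most~$1$, so $\lambda_2$ of the weighted Laplacian is dominated by that of the unweighted support graph $(\Vc,\boldsymbol{\Ec}^{\mu}_{\s T})$, and the robustness bound carries over in the right direction.
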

\begin{proof}
See Appendix \ref{app_graph_bounds}.
\end{proof}
\begin{figure}
    \centering
    \includegraphics[width=.6\linewidth]{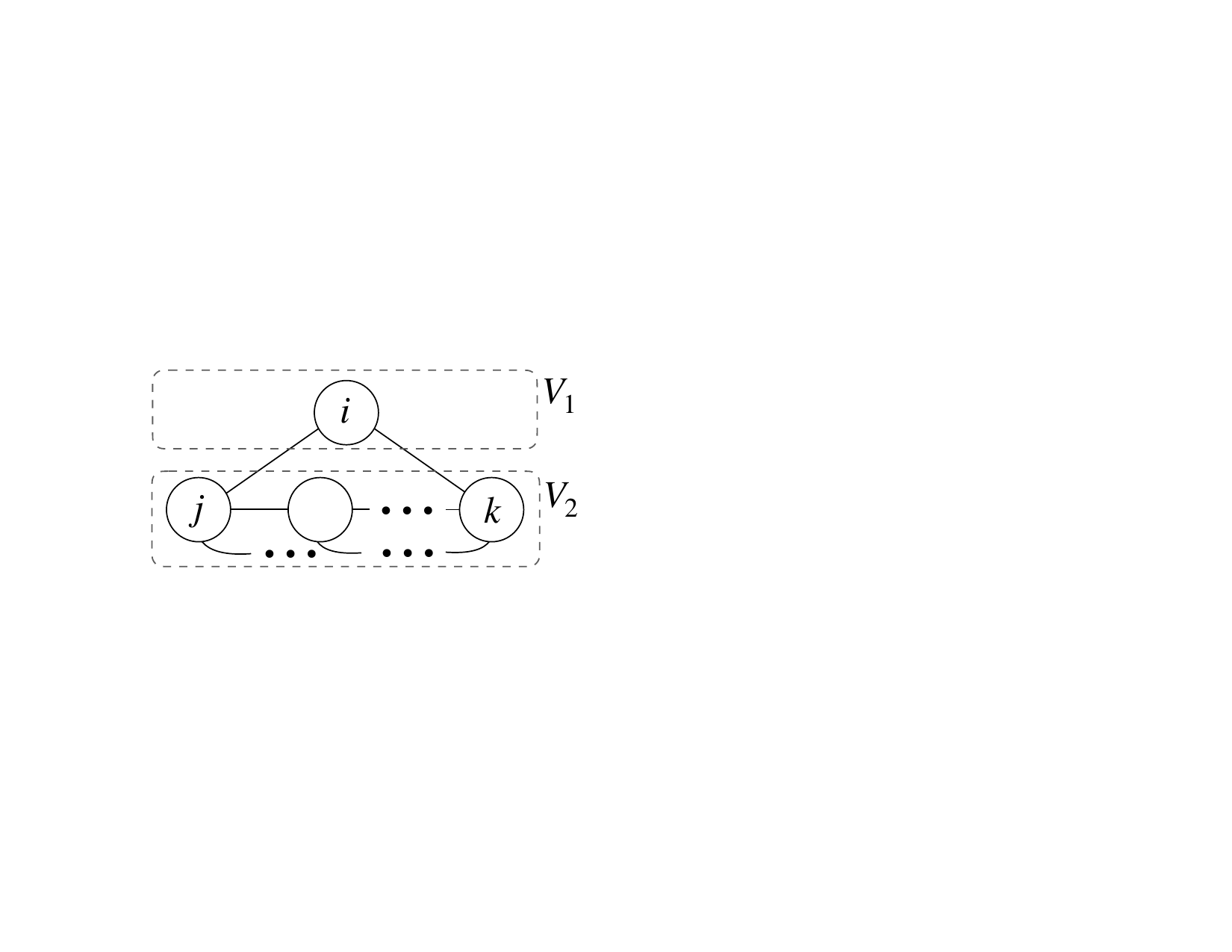}
    \caption{\small An example that illustrates how intermittent communication can drastically change the graph/network's algebraic connectivity $\lambda_2(\cdot)$ and consequently its robustness. See \eqref{eq:PE_cond}, \eqref{eq:graph_bounds_all}, and Remark \ref{rmk:PE_connectivity}.
    Let graph $\Gc_{\sigma(t)} = (\Vc, \Ec_{\sigma(t)})$ such that $|\Vc|= N+1$, with $\Vc = V_1 \cup V_2$ and $|V_2| = N$, where $N \geq 3$, and that the subgraph 
    $\bar{\Gc}_{\sigma(t)} = (\Vc\setminus V_1, \bar{\Ec}_{\sigma(t)})$ induced by removing the set $V_1$ and its incident edges is a complete graph 
    $\Kc_{|V_2|}=\bar{\Gc}_{\sigt}$. Note that the singleton $i \in V_1$ can be connected to any pair of disjoint nodes $j\neq k \in V_2$, and thus $\Sc =\{j,k\} \subset \Vc$ and the bidirectional edge set $\Ec_{\mathrm{cut}}=\{(i,j), (i,k) \}$ make, respectively, the minimum vertex cutset and edge cutset of $\Gc_{\sigt}$. Accordingly, one can verify that $\lambda_2(\Gc_{\sigt}) \leq \boldsymbol{\kappa}(\Gc_{\sigt})=\boldsymbol{e}(\Gc_{\sigt})=\boldsymbol{\delta}_{\min}(\Gc_{\sigt})= 2$, where $\boldsymbol{e}(\cdot)$ and $\boldsymbol{\delta}_{\min}(\cdot)$ are, resp., the edge connectivity and minimum node-degree. Also, if $\exists \, t \in \realnonneg \; \textrm{s.t} \; \Gc_{\sigt} = (\Vc, \Ec \setminus \Ec_{\mathrm{cut}}) $ because of an intermittent connection of the edges $\Ec_{\mathrm{cut}}$, we have graph disconnection with $\lambda_2(\Gc_{\sigt}=(\Vc, \Ec \setminus \Ec_{\mathrm{cut}})) = 0$. Yet, the induced subgraph $\Kc_{|V_2|}$ holds even a higher algebraic connectivity since $\lambda_2(\Kc_{|V_2|}) = |V_2|=N $, and $\boldsymbol{\kappa}(\Kc_{|V_2|})=\boldsymbol{e}(\Kc_{|V_2|})=\boldsymbol{\delta}_{\min}(\Kc_{|V_2|})= N-1$. This example has been constructed based on the discussions in \cite[Ch. 13.5]{godsil2001algebraic}.}
    \label{fig:graph_bounds_exmp}
\end{figure}
\begin{theorem}\btitle{Network resilience to node and edge disconnections}\label{thm:robustness-to-disconection}
    Let a $(\mu,T)$-PE connected network $\Gc_{\sigt}=(\Vc, \Ec_{\sigt})$ be $(r,T)$-robust (resp. $(\boldsymbol{\kappa},T)$-vertex-connected) under Assumptions \ref{assum:switching} and \ref{assum:comm_capability}. Let $\Ac \subset \Vc$ be a $(r-1)$-local (resp. $(\boldsymbol{\kappa} -1)$-total) adversary set. Then, the induced subgraph $ \bar\Gc_{\sigt}=(\bar\Vc=\Vc\setminus\Ac, \bar\Ec_{\sigt}) $
    in \eqref{eq:comm_net_res} admits the $(\bar\mu,\bar{T})$-PE connectivity condition in \eqref{eq:PE_cond}, for some  $\bar\mu, \bar{T} \in \realpos$, where $\bar{T} \leq T$ and $ \mu \leq \bar{\mu} +  |\Ac|$.
\end{theorem}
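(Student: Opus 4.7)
The plan is to reduce the time-varying claim to a static-graph connectivity problem through Lemma~\ref{lemma:PE_equvalence}, re-lift it to the switching setting via the reverse direction of the same lemma, and finally quantify the degradation of the PE parameter via Cauchy interlacing and Weyl's inequality.

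First I would invoke statement~(iii) of Lemma~\ref{lemma:PE_equvalence} to replace the $(\mu,T)$-PE hypothesis on $\Gc_{\sigma(t)}$ by connectivity of the static weighted graph $\boldsymbol{\Gc}^{\mu}_{\s T}=(\Vc,\boldsymbol{\Ec}^{\mu}_{\s T})$ with averaged Laplacian $\boldsymbol{\lap}$ in \eqref{eq:PE_lap_adj}. By Definitions~\ref{def:r_T_robust}/\ref{def:k_T_conect}, the robustness or vertex-connectivity hypothesis transfers to $\boldsymbol{\Gc}^{\mu}_{\s T}$. The subproblem then reduces to showing that the induced static subgraph $\boldsymbol{\Gc}^{\mu}_{\s T}[\bar\Vc]$ on the cooperative nodes remains connected: in the $(\boldsymbol{\kappa},T)$-vertex-connected case this is immediate since $|\Ac|\leq\boldsymbol{\kappa}-1$ is strictly less than the minimum vertex cut of $\boldsymbol{\Gc}^{\mu}_{\s T}$; in the $(r,T)$-robust case I would reproduce the combinatorial argument of \cite{leblanc2013resilient}, namely that for any nonempty proper $\Sc\subsetneq\bar\Vc$ the $r$-robustness yields some $i\in\Sc$ with at least $r$ neighbors in $\Vc\setminus\Sc$, of which at most $r-1$ are adversarial by the $(r-1)$-locality of $\Ac$ interpreted on $\boldsymbol{\Gc}^{\mu}_{\s T}$, leaving at least one cooperative neighbor outside $\Sc$.

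With static connectivity in hand, I would re-lift via Lemma~\ref{lemma:PE_equvalence} in reverse: every surviving edge $(i,j)\in\boldsymbol{\Ec}^{\mu}_{\s T}\cap(\bar\Vc\times\bar\Vc)$ retains its averaged weight lower bound unchanged because the integrand in \eqref{eq:PE_cond_alt_edge} depends only on the pair $(i,j)$, so the choice $\bar T=T$ suffices and $\bar\Gc_{\sigma(t)}$ satisfies the $(\bar\mu,\bar T)$-PE condition with $\bar\mu:=\lambda_2(\bar{\boldsymbol{\lap}})$, where $\bar{\boldsymbol{\lap}}$ is the averaged Laplacian of the induced subgraph. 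For the quantitative bound, I would decompose $\boldsymbol{\lap}_{\bar\Vc}=\bar{\boldsymbol{\lap}}+D$, where $\boldsymbol{\lap}_{\bar\Vc}$ is the principal submatrix of $\boldsymbol{\lap}$ on rows/columns indexed by $\bar\Vc$ and $D$ is the diagonal matrix whose entries are the total averaged weights lost to $\Ac$; since each such weight lies in $[0,1]$, $\|D\|\leq|\Ac|$. Cauchy interlacing gives $\lambda_2(\boldsymbol{\lap}_{\bar\Vc})\geq\lambda_2(\boldsymbol{\lap})\geq\mu$, and Weyl's inequality yields $\bar\mu\geq\lambda_2(\boldsymbol{\lap}_{\bar\Vc})-\|D\|\geq\mu-|\Ac|$, i.e., $\mu\leq\bar\mu+|\Ac|$.

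The main obstacle will be reconciling the two notions of locality in the $r$-robust case: the $(r-1)$-locality of $\Ac$ in Definition~\ref{assum:adversary_bound} is imposed on the time-aggregated edge set $\boldsymbol{\Ec}^{\mu}_{\s T}$, while $r$-robustness is a property of the static graph $\boldsymbol{\Gc}^{\mu}_{\s T}$ produced by Lemma~\ref{lemma:PE_equvalence}, and these two references to $\boldsymbol{\Gc}^{\mu}_{\s T}$ must be shown to coincide (both in the identification of neighbors and in the constant $\delta$ of~\eqref{eq:PE_cond_alt_edge}) for the extraction of a surviving cooperative neighbor to go through. The quantitative step, by contrast, is essentially bookkeeping once Cauchy interlacing and Weyl's inequality are invoked.
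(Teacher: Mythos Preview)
Your approach is essentially the same as the paper's: both reduce to the static graph $\boldsymbol{\Gc}^{\mu}_{\s T}$ via Lemma~\ref{lemma:PE_equvalence}, argue that removing $\Ac$ cannot disconnect it (the paper cites \cite{zhang2015notion} for the $r$-robust case while you spell out the combinatorial step), and then re-lift. The only substantive difference is in the quantitative bound: the paper directly invokes \cite[Thm.~13.5.1]{godsil2001algebraic} for $\lambda_2(\boldsymbol{\lap})\le\lambda_2(\bar{\boldsymbol{\lap}})+|\Ac|$, whereas your Cauchy-interlacing-plus-Weyl argument is precisely the standard proof of that result, so the two coincide. Your worry about reconciling the two notions of locality is not an obstacle---Definition~\ref{assum:adversary_bound} and Definition~\ref{def:r_T_robust} both reference the same edge set $\boldsymbol{\Ec}^{\mu}_{\s T}$, so the neighborhoods agree by construction; and your observation that $\bar T=T$ already suffices is in fact sharper than the paper's looser remark that $\bar T\le T$.
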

\begin{proof}
See Appendix \ref{app_robustness-to-disconection}.
\end{proof} 
\rev{
In other words, Theorem \ref{thm:robustness-to-disconection} together with \eqref{eq:graph_bounds_all} implies that if the network $\Gc_{\sigt}=(\Vc, \Ec_{\sigt})$ holds algebraic connectivity $\lambda_2(\boldsymbol{\lap})\geq \mu \geq 2F+\epsilon$ with $F \in \intgnonneg$ and $ \epsilon \in \realpos $ in the integral sense of \eqref{eq:PE_cond}, it will be at least $(F+1,T)$-vertex-connected (resp. $(F+1,T)$-robust) which in turn ensures the resulting network from the removal of an $F$-total (resp. $F$-local) adversary set, i.e. \eqref{eq:comm_net_res}, still maintains its connectivity in the sense of \eqref{eq:PE_cond} to a lower degree, allowing to achieve \eqref{eq:formation_consensus}.
}
\begin{remark}\label{rmk:graph_bounds_comparison} In this paper, we use the parameter $\mu$ in \eqref{eq:PE_cond} and \eqref{eq:graph_bounds_all} as a rather conservative proxy metric of the resilience of time-varying networks to disconnections of sorts. 
It is also noteworthy that the lower bound $ \ceil{ \frac{1}{2} \lambda_2 \paren{\boldsymbol{\lap}} }  \leq r(\boldsymbol{\Gc}_{\s T}^{\mu})  $ in \eqref{eq:graph_bounds_all} is tight as shown in
\cite[lemma 1]{shahrivar2017spectral}, \cite[Thm. 2]{saulnier2017resilient} for fixed graphs. One the other hand, the gap between ${r}(\boldsymbol{\Gc}_{\s T}^{\mu})$ and $\boldsymbol{\kappa}(\boldsymbol{\Gc}_{\s T}^{\mu})$ can be arbitrarily large (\cite{leblanc2013resilient,shahrivar2017spectral}) depending on a priori unknown intermediary typologies $\Gc_{\sigma}$'s, $\sigma \in \Qc'$ that form a $\boldsymbol{\Gc}_{\s T}^{\mu}$.
The illustrative example in Fig. \ref{fig:graph_bounds_exmp} demonstrates how an intermittent connection of edges in a class of graphs can affect the bounds in \eqref{eq:graph_bounds_all}. Moreover, in the special case of complete graphs over $N$ nodes, denoted by $\Kc_{N}$, we have $\ceil{\lambda_2(\Kc_3)/2}= \boldsymbol{\delta}_{\min}(\Kc_3) = 2$  for $N=3$ nodes, that shows the bound in \eqref{eq:graph_bounds_all} is tight. 
If the exclusion of complete graphs can be guaranteed the lower bound to $\hat\mu \leq \boldsymbol{\kappa}(\boldsymbol{\Gc}^{\mu}_{\s T})$ can be used for node connectivity. (cf. \cite[Cor. 13.5.2]{godsil2001algebraic}). 
\end{remark}

We now provide a convergence bound for the consensus/formation equilibrium in \eqref{eq:formation_consensus}. 
Associated with \eqref{eq:formation_consensus}, we define an output (coordinate) vector ${\Yc} \in \real^{2N-1}$ as 
\noindent
\begin{align}\label{eq:output_coord_vec}
    {\Yc} = 
    \begin{bmatrix}
       \zeta \\  \vb
    \end{bmatrix}
    :=
    \begin{bmatrix*}[l]
       Q & \zeros_{\s (N-1) \times N} \\ 
       \zeros_{\s N} & I_{\s N} 
    \end{bmatrix*}
    \begin{bmatrix}
       \widetilde{\pb} \\  \vb
    \end{bmatrix} = \mb{C}_{\rm eq} \x,
\end{align}
where $\widetilde{\pb}$, $\vb$ and $Q$ are given in \eqref{eq:cl_sys} and \eqref{eq:PE_Q}, respectively.
It then follows that $\zeta = Q \widetilde{\pb} = \zeros_{N-1}$ and $ \vb = \zeros_{N} $ imply $ \widetilde{\pb}_i - \widetilde{\pb}_j = 0, \; \forall \, i,j \in \Vc $ and $ {\vb}_i = 0, \, \forall \, i \in \Vc$ (that is $\Yc(t) \rightarrow \zeros \equiv $ \eqref{eq:formation_consensus}). 

\begin{proposition}\btitle{Stability under $(\mu,T)$-PE connectivity}\label{prop:convergence_PE_like}
Consider the system in \eqref{eq:cl_sys} and let Assumptions \ref{assum:switching} and \ref{assum:comm_capability}
hold. Also, let $ \sum_{k \in \intgnonneg}^{\mb{n}} T^{\rm a}_{k}< T$, where $T$ is the period in $\eqref{eq:PE_cond}$,
hold uniformly in time for the DoS attack in \eqref{eq:DoS_state}.  
Then, there exists a sufficiently large control gain $\gamma $ for \eqref{eq:ctrl_proto} that ensures, for each $\norm{{\x}(t_0)} \leq \infty$ and for every $\ua \in \Lc_{p e}$ with $\sup_{t_0 \leq t \leq T_d}\norm{\ua(t)} < \infty $, the system \eqref{eq:cl_sys} with the output ${\Yc}(t)$ in \eqref{eq:output_coord_vec}
is finite-gain $\Lc_{p}$ stable with the following upper bound:
\noindent
\begin{subequations}\label{eq:state_bounds}
\begin{align}
    &\norm{{\Yc}(t)}_{\Lc_p}
    \leq 
    \kappa_{\x}
    e^{-\lambda_{\x}(t-t_0)}
    \norm{{\x}(t_0)}
    +
    \nonumber \\
    & \hspace{8em}
    \kappa_{\mb{u}}
    \norm{(\ua)_{T_d}}_{\Lc_{p}},
    \ \   \forall \, t \geq t_0 \in \realnonneg, 
    \\
    &\kappa_{\x} = \norm{C} \sqrt{
    {\scriptstyle
    \frac{ \max \braces{\lambda^{-1}_{ \chi},\beta}}{ \min \braces{ \frac{\gamma}{\alpha N},\beta}}}
    } 
    \norm{C^{\s -1}}, 
    \,
    \kappa_{\mb{u}} = \norm{C} 
    {\scriptstyle
    \frac{ \max \braces{\lambda^{-1}_{ \chi},\beta}}{ \lambda^{}_{ \x} \min \braces{ \frac{\gamma}{2\alpha N},\frac{\beta}{2}}}
    },
    \\&
     0 < \lambda_{\x} < \lambda_{ \chi} = \eta e^{-2 \eta T}, 
     \ \
     C =
    \begin{bsmallmatrix*}[l]
       \frac{1}{\gamma} I_{\s N-1} & -\frac{1}{\gamma}Q \\ 
       \zeros_{\s N \times (N-1)} & \hspace{1em} I_{\s N} 
    \end{bsmallmatrix*},
\end{align}
\end{subequations}
where $ \eta = -\frac{1}{2T} \ln (1-\frac{(\alpha / \gamma) \mu T}{1+(\alpha / \gamma)^{2} N^{2} T^{2}})$ and $\beta \in \realpos$.
Additionally, if $\ua=\zeros$, the system's state trajectories uniformly exponentially converge to the equilibrium \eqref{eq:formation_consensus}, (provided the formation configuration is feasible) with the bound
\begin{subequations}\label{eq:formation_consensus_expo}
\begin{align} \label{eq:formation_consensus_pos_expo}
 \left|\pb_i(t)-\pb_j(t) - \pstar_{ij} \right|  \leq 
 \sqrt{2}\kappa_{\boldsymbol{\x}}
    e^{-\lambda_{\x}(t-t_0)} \norm{\x(t_0)},
\\ \label{eq:formation_consensus_vel_expo}
 \left|\vb_i(t) \right| \leq  \kappa_{\boldsymbol{\x}}
    e^{-\lambda_{\x}(t-t_0)} \norm{\x(t_0)}, 
  \ \ \forall \, i,j \in \Vc 
\end{align}
\end{subequations}
for all $t \geq t_0 \in \realnonneg $, with $\kappa_{\boldsymbol{\mb{x}}}, \, \lambda_{\mb{x}} \in \realpos$ as given in \eqref{eq:state_bounds}. 
\end{proposition}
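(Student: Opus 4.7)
The plan is to reduce the dynamics to the subspace orthogonal to the consensus direction $\ones$, invoke a PE-based contraction argument of the type in \cite{anderson2016convergence,cichella2015cooperative} on the reduced (first-order) dynamics, promote this to finite-gain $\Lc_p$ stability of the full second-order system via variation-of-constants, and absorb the DoS modes by showing that the $(\mu,T)$-PE condition survives their removal.

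\textit{Reduction.} Using $Q^\top Q = I_N - \tfrac{1}{N}\ones\ones^\top$ from \eqref{eq:PE_Q} together with $\lap_\sigma \ones = \zeros$, I would split $\vb = Q^\top \eta + \tfrac{1}{\sqrt{N}}\ones\bar v$ with $\eta = Q\vb$ and $\bar v = \tfrac{1}{\sqrt{N}}\ones^\top\vb$. Then \eqref{eq:cl_sys} decomposes into (i) a reduced double-integrator consensus
\begin{align*}
\dot{\zeta} = \eta,\quad \dot{\eta} = -\alpha\,\bar{\lap}_{\sigma(t)}\,\zeta - \gamma\,\eta + Q I_{\Ac}\ua,
\end{align*}
on $\ones^\perp$ with $\bar{\lap}_\sigma := Q\lap_\sigma Q^\top$ inheriting $(\mu,T)$-PE from \eqref{eq:PE_cond} (since $\bar{\lap}_\sigma = Q\lap_\sigma Q^\top$ and $Q$ is an isometry on $\ones^\perp$), and (ii) a scalar average-velocity channel $\dot{\bar v} = -\gamma\bar v + \tfrac{1}{\sqrt{N}}\ones^\top I_{\Ac}\ua$ that is exponentially stable at rate $\gamma$. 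Since the mean position does not enter the output $\Yc$ in \eqref{eq:output_coord_vec}, only $(\zeta,\eta,\bar v)$ need be tracked.

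\textit{Exponential stability of the reduced system.} Motivated by the change of basis defined by the matrix $C$ in \eqref{eq:state_bounds} (whose role is to diagonalize the composite Lyapunov function into blocks), I would introduce a slow/fast decomposition in which the reduced $(\zeta,\eta)$ subsystem takes a singular-perturbation form for large $\gamma$: the leading dynamics of the slow variable satisfy a first-order PE consensus $\dot{\xi} = -(\alpha/\gamma)\,\bar{\lap}_{\sigma(t)}\xi + O(1/\gamma)$, while $\eta$ is a fast boundary-layer term decaying at rate $\gamma$. For the nominal first-order system, a one-window contraction estimate of the type in \cite[Thm.~2]{anderson2016convergence} yields $\|\xi(t+T)\|^2 \leq \bigl(1-\tfrac{(\alpha/\gamma)\mu T}{1+(\alpha/\gamma)^2 N^2 T^2}\bigr)\|\xi(t)\|^2$, whose logarithm is exactly $-2\eta T$ for the $\eta$ defined in the statement. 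A composite quadratic Lyapunov function with weights $\tfrac{\gamma}{\alpha N}$ on the slow part and $\beta$ on the fast part then produces the ratio $\max\{\lambda_\chi^{-1},\beta\}/\min\{\gamma/(\alpha N),\beta\}$ appearing in $\kappa_{\x}$, while the pre- and post-multiplications by $C$ and $C^{-1}$ contribute the $\|C\|\|C^{-1}\|$ factor. The main obstacle is precisely this step: transferring the first-order, window-wise contraction into a uniform exponential rate for the \textit{second-order} switched system and dominating the boundary-layer term of $\eta$ by the slow decay — this is what forces the extra factor $e^{-2\eta T}$ in $\lambda_\chi$ and necessitates the hypothesis "$\gamma$ sufficiently large".

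\textit{Input-to-output bound and DoS absorption.} With a uniform bound $\|\Phi_{\sigma}(t,t_0)\|\leq \kappa_{\x} e^{-\lambda_{\x}(t-t_0)}$ on the state-transition matrix in hand, Young's inequality applied to the variation-of-constants formula $\Yc(t) = \mb{C}_{\rm eq}\Phi_\sigma(t,t_0)\x(t_0)+\mb{C}_{\rm eq}\int_{t_0}^{t}\Phi_\sigma(t,\tau)\Ba\ua(\tau)\,d\tau$ gives the $\Lc_p$ gain $\kappa_{\ub}=\kappa_{\x}/\lambda_{\x}$ for every $1\leq p\leq\infty$, matching \eqref{eq:state_bounds}. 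Setting $\ua=\zeros$ and reconstructing pairwise differences via $\pb_i-\pb_j-\pstar_{ij} = (\efrak^{\,i}_N-\efrak^{\,j}_N)^\top Q^\top \zeta$ gives \eqref{eq:formation_consensus_pos_expo}--\eqref{eq:formation_consensus_vel_expo}, with the $\sqrt{2}$ coming from $\|\efrak^{\,i}_N-\efrak^{\,j}_N\|=\sqrt{2}$. Finally, because $\sum_k T^{\rm a}_k < T$, every window $[t,t+T)$ still contains a subset of positive measure on which $\lap_{\sigma(\tau)}$ integrates to a matrix that is positive definite on $\ones^\perp$; by Lemma \ref{lemma:PE_equvalence}(iii) the $(\mu,T)$-PE condition is preserved with $\mu$ replaced by some $\mu'\in(0,\mu]$, and the preceding argument applies verbatim with $\mu'$ in place of $\mu$.
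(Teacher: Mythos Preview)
Your overall architecture is close to the paper's: reduce to $\ones^\perp$, extract a first-order PE contraction with rate $\lambda_\chi=\eta e^{-2\eta T}$, then lift to the second-order system. The paper, however, carries out the lifting step by a mechanism you do not use. After establishing GUES of $\dot\chi=-(\alpha/\gamma)Q\lap_{\sigma(t)}Q^\top\chi$, it invokes a \emph{converse} Lyapunov result to obtain a time-varying $P(t)=P(t)^\top$ satisfying $\tfrac{\gamma}{2\alpha N}I\le P(t)\le\tfrac{1}{2\lambda_\chi}I$ and $\dot P-(\alpha/\gamma)(\underline{\lap}_\sigma P+P\underline{\lap}_\sigma)+I=0$, and then takes the composite Lyapunov function $V=\xi^\top P(t)\xi+\tfrac{\beta}{2}\vb^\top\vb$ on the transformed state $\boldsymbol\chi=C^{-1}\Yc=(\xi,\vb)$ with the full $\vb\in\real^N$ (not its projection $\eta=Q\vb$). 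Direct differentiation then yields a $2\times2$ matrix inequality that is made strict by choosing $\gamma$ large, and the comparison lemma delivers the bound on $\norm{\boldsymbol\chi}$ with the stated constants.

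Your proposal instead (i) further splits $\vb$ into $\eta$ and $\bar v$, and (ii) replaces the converse-Lyapunov construction by a singular-perturbation heuristic plus a composite Lyapunov function with \emph{constant} weights $\gamma/(\alpha N)$ and $\beta$. This is where the gap is: for a genuinely time-varying $\bar{\lap}_{\sigma(t)}$ that is only PE (and may vanish on the DoS intervals), a constant quadratic form on the slow block does not yield a negative-definite $\dot V$; the time-varying $P(t)$ is precisely what absorbs the switching and produces the clean $-\norm{\xi}^2$ term. Without it, you cannot rigorously pass from the window-wise contraction to a uniform exponential rate for the second-order system---the very obstacle you flag remains unresolved. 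A second, smaller discrepancy: the paper does not bound the state-transition matrix and then apply Young's inequality; the input gain $\kappa_{\mb u}$ comes directly from the Lyapunov differential inequality, which is why $\kappa_{\mb u}\neq\kappa_{\x}/\lambda_{\x}$ (compare the square root and the $\norm{C^{-1}}$ factor in $\kappa_{\x}$ with the form of $\kappa_{\mb u}$). Your reconstruction of $\pb_i-\pb_j-\pstar_{ij}$ via $(\efrak^{\,i}_N-\efrak^{\,j}_N)^\top Q^\top\zeta$ and the DoS-absorption argument match the paper.
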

\begin{proof}
See Appendix \ref{app_convergence_PE_like}.
\end{proof}
\rev{We remark that the choice of $\gamma=\alpha N$, for $\alpha\geq1$ yields a convergence rate $\lambda_{\chi}$ in \eqref{eq:state_bounds}-c that depends only on $\mu,\, T$, with the maximum occurring at $\mu=N, T=1$, associated with complete network connectivity, see \eqref{eq:PE_cond} and Remark \ref{rmk:PE_connectivity}.
This, however, is not the only valid choice. 
}

\section{Observer design and Attack Detection}\label{sec:observer_design}
Here, we consider the design of observers serving the reconfigurable local attack 
detector module $\Sigma^{\s \Oc}_{\s \Vc^{i^{\s \dprime}}_{\sigma}}$ in Section \ref{sec:problem_statement}. The observer design for $\Sigma_{\sigt}$ in \eqref{eq:cl_sys} is subject to two constraints. First, \emph{a priori} full knowledge of $\A$ may not be available for each mobile agent due to random communication link dropouts or switching links.
Second, local state information $\y^{\, i}_{\sigma}$ in \eqref{eq:locally_measurements},  which is available for each mobile agent, is subject to change since the respective $k$-hop neighbors change in an \emph{a priori} unknown time-varying network. Consequently, ensuring the uniform observability of $(\A^{}, \C^{\, i}), \; \forall \, t \in \realnonneg, \, \forall \, i \in \Vc$ may not be tractable or feasible. 

In what follows, we, first, characterize network-level conditions under which almost any set of adversarial inputs $\ua$ is observable at the measurements of the cooperative agents that are $\y^{\, i}_{\sigma}$'s, for $i \in \Vc \setminus \Ac$.  
Second, we propose a class of local observers for $\Sigma^{\s \Oc}_{\s \Vc^{i^{\s \dprime}}_{\sigma}}$ that are realizable using \eqref{eq:2hop_info}, enabling distributed attack detection through  \eqref{eq:hypotheses_local}.

\subsection{Detectability of adversarial inputs}
We note that the switched system $ \Sigma_{\sigt} $ in \eqref{eq:cl_sys}-\eqref{eq:locally_measurements} represents a family of linear time-invariant (LTI) systems, each of which is associated with one mode $\sigma(t) \in \Qc$. Therefore, similar to \cite{mao2020novel,pasqualetti2011consensus}, the results herein are derived based on the concepts of output-zeroing and \emph{state and input observability} of the (switched) LTI systems.

Consider a generic solution $\x(t; \x(t_{0}), \ua(t))$ to $\Sigma_{\sigt}$ in \eqref{eq:cl_sys} under Assumptions \ref{assum:switching} and \ref{assum:comm_capability}. Then, the concatenation of the measurements $\y^{\,i}_{\sigma}$'s, given in \eqref{eq:locally_measurements}, of the set of cooperative agents, $ \Vc\setminus\Ac = \braces{i_1, \dots , i_{\s |\Vc\setminus\Ac|}}  $, is defined as

\noindent
\begin{align}\label{eq:collective_measurements}
    \y^{\s \Vc \setminus \Ac}_{\sigma}(t; \x(t_{0}) , \ua(t))
    = 
    \col(\y^{\, i_1}_{\sigma}, \dots, \y^{\, i_{\s |\Vc\setminus\Ac|}}_{\sigma}) &=
    \nonumber \\
    \col(\C^{\, i_1}, \dots, \C^{\, i_{\s |\Vc\setminus\Ac|}})
    \x(t; \x(t_{0}), \ua(t)) &=
    \nonumber \\
    \C^{\s \Vc \setminus \Ac} \,
    \x(t; \x(t_{0}), \ua(t))&.
\end{align}

It is necessary to note that the entirety of measurement $\y^{\s \Vc \setminus \Ac}_{\sigma}(t; \x(t_{0}) , \ua(t))$ is not available for any agent $i \in \Vc$. We use this collective set of the measurements of cooperative agents $\Vc \setminus \Ac$ and a generic set of adversarial inputs $\ua$ introduced by the set of malicious agents $\Ac $, in an input observability context for attack detection analyses.

\begin{definition}\btitle{Stealthy and indistinguishable attacks}\label{def:undetectable _indistinguishable}
For $\Sigma_{\sigt}$ in \eqref{eq:cl_sys} under Assumptions \ref{assum:switching} and \ref{assum:comm_capability}, any generic set of inputs $\ua(t) \in \Lc_{p e}$ injected by a set of malicious agents $\Ac$ is stealthy for the remaining cooperative agents $\Vc \setminus \Ac$ if

\noindent
\begin{align}\label{eq:undetectable}
    \exists \; \x(t_{0}), \x'(t_{0}) \in \real^{2N} \ \ &\text{\rm s.t.} \ \ 
      \forall \, t  \in [t_0 ,\; t_0+T), 
    \nonumber \\ 
    \y^{\s \Vc \setminus \Ac}_{\sigma}(t; \x(t_{0}), \ua(t)) &= \y^{\s \Vc \setminus \Ac}_{\sigma}(t; \x'(t_{0}), \zeros ),
\end{align}
with $\y^{\s \Vc \setminus \Ac}_{\sigma}(t;\cdot,\cdot)$ as in \eqref{eq:collective_measurements}, $t_0 \in \realnonneg$, and $T\in \realpos$ as in 
\eqref{eq:PE_cond}.
Likewise, for any given two generic sets $\mb{u}_{\! \Ac_1}(t) \in \Lc_{p e}$ and $\mb{u}_{\! \Ac_2}(t) \in \Lc_{p e}$ injected, resp., by a nonempty set of malicious agents $ \Ac_1 \in \Ac$ and some $ \Ac_2 \in \Ac$, where $\Ac_1 \neq \Ac_2$, $\mb{u}_{\! \Ac_1}$ is indistinguishable from $\mb{u}_{\! \Ac_2}$ for the cooperative agents $\Vc \setminus\! \Ac$ if   
%

\noindent
\begin{align}\label{eq:indist}
    \exists \; \x(t_{0}), \x'(t_{0}) \in \real^{2N} \ \ &\text{\rm s.t.} \ \ 
    \; \forall \, t  \in [t_0 ,\; t_0+T), 
     \nonumber \\
    \y^{\s \Vc \setminus \Ac}_{\sigma}(t; \x(t_{0}), \mb{u}_{\! \Ac_1}(t)) &= \y^{\s \Vc \setminus \Ac}_{\sigma}(t; \x'(t_{0}), \mb{u}_{\! \Ac_2}(t) ).
\end{align}
\end{definition}
\begin{lemma}\btitle{Characterization of stealthy and indistinguishable attacks}\label{lemma:undetectable_chrz}
Consider $\Sigma_{\sigt}$ in \eqref{eq:cl_sys} and let Assumptions \ref{assum:switching} and \ref{assum:comm_capability} hold. Also, let $\mb{B}_{\Ac_1} \mb{u}_{\! \Ac_1}(t)$ and $\mb{B}_{\Ac_2} \mb{u}_{\! \Ac_2}(t)$ be two generic sets of adversarial $\Lc_{p e}$-norm bounded inputs injected by a nonempty set of malicious agents $ \Ac_1 \in \Ac$ and some $ \Ac_2 \in \Ac$, where $\Ac_1 \neq \Ac_2$. Then, $\mb{u}_{\! \Ac_1}(t)$ and $\mb{u}_{\! \Ac_2}(t)$ are indistinguishable for the cooperative agents $\Vc \setminus \Ac$ during $ t \in [t_0 ,\, t_0+T)$,  if and only if $\exists \, {\rm x}(t_0) \in \real^{2N}$ such that 

\noindent
\begin{align}\label{eq:undetect_chrz}
& \mb{C}^{\s \Vc\setminus\Ac}_{\sigma(t_{k})} e^{\mb{A}_{\sigma(t_{k})}(t-t_{k})} {{\rm x}(t_{k})}
= 
    \nonumber \\ & 
\mb{C}^{\s \Vc\setminus\Ac}_{\sigma(t_{k})} \int_{t_{k}}^{t} e^{\mb{A}_{\sigma(t_{k})}(t-\tau)} 
\big( \mb{B}_{\Ac_1} 
\mb{u}_{\! \Ac_1}(\tau)
- 
\mb{B}_{\Ac_2} \mb{u}_{\! \Ac_2}(\tau) \big) \,{\rm d}\tau, \ \
\nonumber \\  
 & \hspace{15em}
 t \in \;  [t_{k}, \, t_{k+1}),
\end{align}
where $0 \leq k \leq \mb{m}$ and  $ t_{\mb{m}+1}=: t_0+T $, with $\mb{m}$ and $T$ as in Assumptions \ref{assum:switching} and \ref{assum:comm_capability}, and ${\rm x}(t_0)= (\mb{x}'(t_0)-\x(t_{0}))$ when $k =0$, and for $k\neq 0$, we have ${\rm x}(t_{k}) =  \x'(t_{k}) - \x(t_{k})$, where
\noindent
\begin{align}\label{eq:undetect_chrz_IC}
\hspace{-1ex}
& {\rm x}(t_{k}) = 
\prod_{i=k}^{1} e^{\mb{A}_{\sigma(t_{i-1})}
(t_i - t_{i-1})
}
    {\rm x}(t_0)
+ \sum\limits_{i=1}^{k}
    \prod_{j=k}^{i+1} 
	e^{\mb{A}_{\sigma(t_{j-1})} 
        (t_j - t_{j-1})
        }
\nonumber \\
\hspace{-1ex}
& {\int^{t_i}_{t_{i-1}}e^{\mb{A}_{\sigma(t_{i-1})}(t_i-\tau)} }
\big( \mb{B}_{\Ac_1} \mb{u}_{\! \Ac_1}(\tau) - \mb{B}_{\Ac_2} \mb{u}_{\! \Ac_2}(\tau)) \, {\rm d}{\tau} \big).
\end{align}
Additionally, if \eqref{eq:undetect_chrz}-\eqref{eq:undetect_chrz_IC} hold for $\mb{u}_{\! \Ac_2}(t) = \zeros, \, \forall \, t \in [t_0 ,\, t_0+T)$, then $\mb{u}_{\! \Ac_1}(t)$ is stealthy.    
\end{lemma}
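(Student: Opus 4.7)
The plan is to reduce the indistinguishability condition \eqref{eq:indist} to a single linear relation on a difference trajectory, and then apply the variation-of-constants formula on each switching subinterval.

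First, I would exploit the linearity of \eqref{eq:cl_sys} and \eqref{eq:locally_measurements} by defining the difference trajectory ${\rm x}(t) := \mb{x}'(t) - \x(t)$ of the two candidate state responses corresponding to $(\x(t_0),\mb{u}_{\! \Ac_1})$ and $(\mb{x}'(t_0),\mb{u}_{\! \Ac_2})$. By linearity, ${\rm x}(t)$ itself satisfies the switched dynamics \eqref{eq:cl_sys} driven by the signed input $\mb{B}_{\Ac_2}\mb{u}_{\! \Ac_2}(\tau) - \mb{B}_{\Ac_1}\mb{u}_{\! \Ac_1}(\tau)$ with initial condition ${\rm x}(t_0) = \mb{x}'(t_0)-\x(t_0)$, and the common measurement map $\mb{C}^{\s \Vc\setminus\Ac}_{\sigma}$ reduces \eqref{eq:indist} to the pointwise requirement $\mb{C}^{\s \Vc\setminus\Ac}_{\sigma(t)}\, {\rm x}(t) = \zeros$ for all $t \in [t_0, t_0+T)$.

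Second, by Assumption \ref{assum:switching} the interval $[t_0, t_0+T)$ decomposes into finitely many subintervals $[t_k, t_{k+1})$ with $t_{\mb{m}+1} := t_0+T$, on each of which $\sigma$ is constant and ${\rm x}$ obeys an LTI evolution with system matrix $\mb{A}_{\sigma(t_k)}$. The variation-of-constants formula then gives
\[
{\rm x}(t) = e^{\mb{A}_{\sigma(t_k)}(t-t_k)}\, {\rm x}(t_k) + \int_{t_k}^{t} e^{\mb{A}_{\sigma(t_k)}(t-\tau)} \paren{\mb{B}_{\Ac_2}\mb{u}_{\! \Ac_2}(\tau) - \mb{B}_{\Ac_1}\mb{u}_{\! \Ac_1}(\tau)} {\rm d}\tau.
\]
Multiplying on the left by $\mb{C}^{\s \Vc\setminus\Ac}_{\sigma(t_k)}$ and imposing the vanishing of $\mb{C}^{\s \Vc\setminus\Ac}_{\sigma(t_k)}{\rm x}(t)$ from the previous step (then moving the integral to the other side) yields exactly \eqref{eq:undetect_chrz}; the converse direction follows by reading the same manipulations backwards, giving the ``if and only if''.

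Third, the recursive expression \eqref{eq:undetect_chrz_IC} for ${\rm x}(t_k)$ at the switching instants is obtained by iterating the same one-step variation-of-constants identity: evaluating at $t = t_k$ using the mode active on $[t_{k-1}, t_k)$, then at $t_{k-1}$ using the mode on $[t_{k-2}, t_{k-1})$, and so on back to $t_0$. Because matrix exponentials of distinct modes do not commute, the composition of one-step transition maps must be retained in the ordered product form $\prod_{j=k}^{i+1} e^{\mb{A}_{\sigma(t_{j-1})}(t_j-t_{j-1})}$ appearing in \eqref{eq:undetect_chrz_IC}, with the summation over $i$ collecting the contributions of the driving term integrated over each prior sub-interval. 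The stealthiness claim is then immediate: substituting $\mb{u}_{\! \Ac_2}(t) \equiv \zeros$ specialises \eqref{eq:undetect_chrz}-\eqref{eq:undetect_chrz_IC} to the case defined in \eqref{eq:undetectable}.

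The main obstacle is purely clerical bookkeeping rather than analysis: (i) keeping the sign convention for the difference state ${\rm x}(t)$ consistent with the signed driving term $\mb{B}_{\Ac_1}\mb{u}_{\! \Ac_1}-\mb{B}_{\Ac_2}\mb{u}_{\! \Ac_2}$ that appears on the right-hand sides of \eqref{eq:undetect_chrz}-\eqref{eq:undetect_chrz_IC}, and (ii) respecting the non-commutative ordering of the mode-dependent matrix exponentials across every switch in the recursion. Beyond the standard variation-of-constants formula for piecewise LTI systems, no additional machinery is needed.
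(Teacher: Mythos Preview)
Your proposal is correct and matches the paper's own argument, which simply states that the result ``follows directly from the definitions in \eqref{eq:undetectable}--\eqref{eq:indist} and the system \eqref{eq:cl_sys}'s solution for the state trajectories.'' You have in fact supplied more detail than the paper does: the reduction via linearity to a single difference trajectory, the piecewise variation-of-constants on each $[t_k,t_{k+1})$, and the ordered iteration across switches are precisely the steps the paper defers to a reference, and your flagged sign/ordering bookkeeping is the only care point.
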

\begin{proof}
    It follows directly from the definitions in \eqref{eq:undetectable}-\eqref{eq:indist} and the system $\eqref{eq:cl_sys}$'s solution for the state trajectories, see \cite{RB_PHDthesis2024}. 
\end{proof}

It is necessary to note that the realization of \eqref{eq:undetect_chrz} requires a priori knowledge of the system which is not available for any agent. Moreover, based on the concepts of \emph{state and input observability} \cite[Thm. 2]{boukhobza2007state},
\cite[Ch. 3.11]{zhou1996robust}, and the invariant zeros of the switched LTI systems \cite{mao2020novel}, the realizations of \eqref{eq:undetect_chrz} in each mode coincide with the existence of the set of vector-valued adversarial input $\ua$ unobservable at the vector-valued output $\y^{\s \Vc\setminus\Ac}_{\sigma}$. For LTI systems, it is well-known that such a set of inputs, \rev{referred to as zero-dynamics attacks (see \cite{mao2020novel,pasqualetti2011consensus} and Section \ref{Sec:adversary_model})}, is not generic, and is characterized using the output-zeroing directions of the system. Particularly, for $\Sigma_{\sigt}$ \emph{in each mode} $\sigma \in \Qc$, it follows from \cite[Ch. 3.11]{zhou1996robust} that the output-zeroing directions are induced by the rank deficiencies of the matrix pencil $\boldsymbol{P}(\lambda_{\rm o}, \sigma)$ for some $\lambda_{\rm o} \in \cplx$, where  
\begin{align}\label{eq:pencil_collective}
\boldsymbol{P}(\lambda_{\rm o}, \sigma) =
    \begin{bmatrix}
    \lambda_{\rm o} I - \mb{A}^{}_{\sigma}  & -\Ba \\
    \mb{C}^{\s \Vc \setminus \Ac}_{\sigma} & \zeros
    \end{bmatrix}.
\end{align}

We next present conditions under which the intersection of the output-zeroing subspaces of $\Sigma_{\sigt}$ in \eqref{eq:cl_sys} make an empty set, ensuring \rev{almost no deception attacks, defined in Section \ref{Sec:adversary_model}, can be stealthy in the sense of \eqref{eq:undetectable}.} 
\begin{lemma}\label{lemma:net_level_obs}
    Consider $\Sigma_{\sigt}$ in \eqref{eq:cl_sys}-\eqref{eq:locally_measurements} with \eqref{eq:2hop_info} during an interval $[t_0, \, t_0+T)$ defined under Assumptions \ref{assum:switching} and \ref{assum:comm_capability}. Let $\Sigma_{\sigt}$ be subject to any generic set of adversarial inputs $\ua(t) \in \Lc_{p e}$ injected by an $F$-total (resp. $F$-local) set $\Ac$ of malicious agents such that $ 0 \leq F \leq \boldsymbol{\kappa}(\boldsymbol{\Gc}_{\s T}^{\mu})-1$ (resp. $ 0 \leq F $ $ \leq r(\boldsymbol{\Gc}_{\s T}^{\mu})-1$). 
    Then, the following statements are equivalent.
\begin{enumerate}
\item There exists no generic set of inputs $\ua(t) \in \Lc_{\infty e}$ stealthy in the sense of \eqref{eq:undetectable}.
\item For almost all $\lambda_{\rm o} \in \cplx$, $\cap_{\sigma \in  \Qc'} \ker \paren { \boldsymbol{P}(\lambda_{\rm o}, \sigma) } = \emptyset$, where $\boldsymbol{P}(\lambda_{\rm o}, \sigma)$ is given by \eqref{eq:pencil_collective}. 
\end{enumerate} 
\end{lemma}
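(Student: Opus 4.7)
The plan is to establish the equivalence via the standard matrix pencil / invariant zero characterization of output-zeroing subspaces, extended to the switched LTI setting of $\Sigma_{\sigt}$. The core observation is that a stealthy attack in the sense of \eqref{eq:undetectable} must produce an indistinguishable output across every active mode $\sigma \in \Qc'$ during $[t_0,t_0+T)$, which forces the generating direction to lie in the common kernel of the pencils $\boldsymbol{P}(\lambda_{\rm o},\sigma)$ in \eqref{eq:pencil_collective}.

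First I would prove $(2) \Rightarrow (1)$ by contrapositive. Assume a stealthy generic $\ua \in \Lc_{\infty e}$ exists. By Lemma \ref{lemma:undetectable_chrz} (specialized to $\mb{u}_{\Ac_2}=\zeros$), the difference trajectory $\mb{x}'(t)-\x(t)$ is an output-nulling trajectory of each LTI mode active during $[t_0,t_0+T)$. Repeated differentiation of the constraint $\mb{C}^{\s \Vc\setminus\Ac}_{\sigma(t_k)}(\mb{x}'(t)-\x(t)) \equiv \zeros$ on each sub-interval, together with $\dot{\mb{x}} = \mb{A}_{\sigma(t_k)} \mb{x} + \Ba \ua$, confines $(\mb{x}'(t)-\x(t),\ua(t))$ to the weakly unobservable subspace of the triple $(\mb{A}_{\sigma(t_k)},\Ba,\mb{C}^{\s \Vc\setminus\Ac}_{\sigma(t_k)})$. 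Because the same input signal $\ua$ must null the output in every mode, and the state is continuous across switching instants $\{t_k\}$, the pair must in fact lie in the joint output-zeroing subspace; translating into pencil language yields a common $\lambda_{\rm o}$ and direction in $\cap_{\sigma\in\Qc'}\ker\boldsymbol{P}(\lambda_{\rm o},\sigma)$, contradicting (2).

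Next I would prove $(1)\Rightarrow(2)$ constructively via its contrapositive. Suppose there exists $\lambda_{\rm o}\in\cplx$ and a nonzero $(x_0,u_0)\in\cap_{\sigma\in\Qc'}\ker\boldsymbol{P}(\lambda_{\rm o},\sigma)$. Then $(\lambda_{\rm o} I-\mb{A}_{\sigma})x_0=\Ba u_0$ and $\mb{C}^{\s \Vc\setminus\Ac}_{\sigma}x_0=\zeros$ for every $\sigma\in\Qc'$. Define the zero-dynamics attack $\ua(t)=e^{\lambda_{\rm o}(t-t_0)}u_0$ and compare the nominal trajectory $\x(t_0)=\zeros,\ua\equiv\zeros$ with the attacked one having $\mb{x}'(t_0)=x_0$. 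Direct substitution verifies that $\mb{x}'(t)-\x(t)=e^{\lambda_{\rm o}(t-t_0)}x_0$ solves the closed-loop equation in every mode and keeps $\mb{C}^{\s \Vc\setminus\Ac}_{\sigma(t)}(\mb{x}'(t)-\x(t))=\zeros$ through every switching instant, since $x_0$ lies in the output-zeroing direction of \emph{every} mode. This exhibits a stealthy $\ua$ in the sense of \eqref{eq:undetectable}, contradicting (1).

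The principal obstacle will be the passage from per-mode output-zeroing subspaces to the joint pencil kernel in the $(2)\Rightarrow(1)$ direction: switched LTI zero dynamics need not coincide mode-by-mode, so one must argue that a single surviving direction parameterized by a common $\lambda_{\rm o}$ is forced, rather than a mode-dependent collection. The role of the network resilience hypothesis, via Proposition \ref{prop:graph_bounds} and Theorem \ref{thm:robustness-to-disconection}, is to guarantee that after removing the $F$-total (resp.\ $F$-local) adversary set $\Ac$, the induced cooperative subnetwork $\bar\Gc_{\sigt}$ in \eqref{eq:comm_net_res} still satisfies a PE-connectivity condition in the integral sense of \eqref{eq:PE_cond}; this ensures that the matrix $\mb{C}^{\s \Vc\setminus\Ac}_{\sigma}$ has enough aggregated rank so that the pencil $\boldsymbol{P}(\lambda_{\rm o},\sigma)$ is singular only on a discrete set of $\lambda_{\rm o}$, justifying the ``almost all $\lambda_{\rm o}$'' qualifier in (2).
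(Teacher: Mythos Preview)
Your treatment of the equivalence $(1)\Leftrightarrow(2)$ via the pencil/invariant-zero characterization of output-zeroing trajectories, carried mode-by-mode and glued across switching instants, is essentially the same as the paper's first step, which appeals to state-and-input observability of each LTI mode and the Rosenbrock pencil \eqref{eq:pencil_collective}.

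Where your proposal diverges is in the last paragraph. The paper does not stop at the equivalence: it goes on to show that statement (ii) actually \emph{holds} under the connectivity/robustness hypotheses (this is how the lemma is invoked downstream, cf.\ the remark following it and Proposition~\ref{prop:observability_local}). The mechanism is not Theorem~\ref{thm:robustness-to-disconection}. The paper instead proves an auxiliary result (Lemma~\ref{lemma:kernel_C}) showing that
\[
\bigcap_{\sigma\in\Qc'}\ker \mb{C}^{\s \Vc\setminus\Ac}_{\sigma}=\spann\Bigl\{\begin{bsmallmatrix}\zeros_N\\ \efrak^{\,j}_{N}\end{bsmallmatrix}: j\in\Ac\Bigr\},
\]
i.e.\ the only states invisible to \emph{all} cooperative agents across the window are the malicious agents' velocities. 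The connectivity bound $F\le \boldsymbol{\kappa}(\boldsymbol{\Gc}^{\mu}_{T})-1$ (resp.\ $F\le r(\boldsymbol{\Gc}^{\mu}_{T})-1$) enters precisely here: it guarantees that every $j\in\Ac$ has at least one cooperative $1$-hop neighbor in some mode $\sigma\in\Qc'$ over each window, so $\widetilde{\pb}_j$ is measured and cannot lie in the joint kernel. With this in hand, the paper expands the pencil equations \eqref{eq:pencil_collective} blockwise (partitioning $\lap_\sigma$ by $\Vc\setminus\Ac$ and $\Ac$) and shows the only common null direction is $({\rm x}_0,{\rm u}_0)=(\zeros,\zeros)$, a contradiction.

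Your appeal to Theorem~\ref{thm:robustness-to-disconection} is off-target: that result concerns connectivity of the induced subgraph on $\Vc\setminus\Ac$ \emph{among themselves}, whereas what is needed is that each malicious node is adjacent to some cooperative node. These are different graph-theoretic facts (though both are consequences of $(F{+}1)$-connectivity/robustness). Likewise, ``aggregated rank'' of $\mb{C}^{\s \Vc\setminus\Ac}_{\sigma}$ is too vague to close the argument; the specific kernel characterization above is what makes the contradiction go through.
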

\begin{proof} 
See Appendix \ref{app_net_level_obs}.
\end{proof}
\rev{
In other words, Lemma \ref{lemma:net_level_obs} states that $(F+1, T)$-vertex connectivity (resp. $(F+1, T)$-robustness), where $F \in \intgnonneg$, ensures almost no $F$-total (resp. $F$-local) set of malicious agents with the deception attacks defined in Section \ref{Sec:adversary_model} is stealthy in the sense of \eqref{eq:undetectable} for the cooperative agents in \eqref{eq:cl_sys} with \eqref{eq:2hop_info}.
(cf. \cite{pasqualetti2011consensus,dibaji2017resilient} where $(F+1)$-vertex connectivity and $(2F+1)$-robustness are required point-wise in time.)
}

We next investigate the level of local observability for each agent given the locally available information $\Phi_{\sigt}^{\, i}$ in \eqref{eq:2hop_info} and measurements $\y^{i}_{\sigma}$ in \eqref{eq:locally_measurements}, as opposed to ensuring the global observability of the pair $(\A^{},\C^{\, i})$ associated with \eqref{eq:cl_sys}-\eqref{eq:locally_measurements} that might not be tractable.

\subsection{Local dynamics and observability analysis}\label{}
Consider the set of $2$-hop information available for each agent $i \in \Vc$ as defined by $\Phi_{\sigt}^{\, i}$ in  \eqref{eq:2hop_info}, and local measurements $\y^{i}_{\sigma}$ in \eqref{eq:locally_measurements}.
Let $ \Ic_i = \Vc^{\, i^{\dprime}}_{\sigma} $ ($\Ic$ in short) in \eqref{eq:locally_measurements} and $ \Rc_i = \Vc \setminus \Vc^{\, i^{\dprime}}_{\sigma} $ and assume $\Ic_i$ and $\Rc_i$ ($\Rc$ in short) are sorted in the ascending order of agents' indices. 
Then, 
\eqref{eq:cl_sys}-\eqref{eq:locally_measurements}
can be partitioned as
\noindent
\begin{align} \label{eq:2hop_sys}
\Sigma_{\s \Vc^{\, i^{\s \dprime}}_{\sigma}}\!&:\!
\left\{
\begin{array}{l}
\!\!
    \xdot_{\s \Ic} 
    = 
    \A^{\s \Ic}
    \x_{\s \Ic} 
 +
 \rho(\x_{\s \Ic} , \x_{\s \Rc} )
 + \Bapp \uapp, 
 \\
 \! \!
 \y^{\, i}_{\sigma} = \C^{\s \Ic} \x_{\s \Ic}
\end{array}
\right.
    \\ \label{eq:rest_sys}
\Sigma_{\s \Rc_{i}}\!&: \phantom{\{}  
\xdot_{\s \Rc}
=
 \A^{\s \Rc}  \x_{\s \Rc} + \A^{\s \Rc,\Ic}  \x_{\s \Ic} + \Barest \uarest,
\end{align}
where $\x_{\bullet} \coloneqq \col \paren{ \widetilde{\pb}_{\bullet}, \vb^{}_{ \bullet } }, \; \bullet \in \{ \Ic, \Rc \}$ denotes the position and velocity states of the agents in each set, and the system matrices are defined as
\begin{subequations}\label{eq:sys_matrices_twohop}
\begin{align}
    \A^{\s \Ic} &=
    \begin{bmatrix}
    \zeros_{\s |\Ic|\times |\Ic|} & \phantom{-\gamma}I_{\s |\Ic|}
    \\
    -\alpha \lap^{\dprime}_{\sigma} &  -\gamma I_{\s |\Ic|}
    \end{bmatrix}, 
    \; \,
    \Bapp \!= \!
    \begin{bsmallmatrix} 
    \zeros_{\s |\Ic|\times |\Ac^{\dprime}| } \\
    \rule[.5ex]{3.2em}{0.4pt} \\
    I_{\s \Ac^{\dprime}}  
    \end{bsmallmatrix}\!,
    \\ \label{eq:sys_matrices_twohop_C}
\C^{\s \Ic} &= \diag (I^{}_{\s |\Ic|},
    {\efrak^{\, \s 1 }_{\s |\Ic|}}^{\!\! \s \top}),  
  \\
\hspace{-1ex}
\rho(\x_{\s \Ic}, \x_{\Rc}) &= 
  \dtilde{\A^{\s \Ic}} \x_{\s \Ic}+\A^{\s \Ic, \Rc} \x_{\s \Rc} 
  =
   \begin{bsmallmatrix}
    \zeros_{\s |\Ic|\times 1 }  \\ 
     \rule[.5ex]{2.5em}{0.4pt} \\
     \zeros_{\s |\Vc^{\,i^{\prime}}_{\sigma}|\times 1} \\  
     \underline{\rho}
 \end{bsmallmatrix}, 
    \\
    \dtilde{\A^{\s \Ic}} &= \!
    \left[\begin{array}{c|c}
    \zeros_{\s |\Ic| \times |\Ic|}  & \!\!  \zeros_{\s |\Ic|}
    \\ \hline 
    \!\!
    \begin{array}{ll}
     \zeros_{ }   &  \phantom{-} \zeros_{ }  \\
     \zeros_{}  & -\alpha {\dtilde{\lap}}_{\sigma}
    \end{array} \!\!
    &  \!\! \zeros_{\s |\Ic| }
    \end{array}\right]\!,\!
    \\
\A^{\s \Ic, \Rc} &= \!
    \left[\begin{array}{c|c}
    \zeros_{\s |\Ic| \times |\Rc|} & \zeros_{\s |\Ic| \times |\Rc|}
    \\ \hline
    \begin{array}{ll}
    \zeros_{\s  |\Vc^{\,i^{\prime}}_{\sigma}| \times  |\Rc|}  \\
      -\alpha \lap^{ (23)}_{\sigma}
    \end{array}
    &  \zeros_{ }
    \end{array}\right],
    \\
    \A^{\s \Rc} &= \!
    \begin{bmatrix}
    \zeros_{} & \phantom{-\gamma}I_{\s |\Rc|}
    \\
    -\alpha \lap^{(33)}_{\sigma} &  -\gamma I_{\s |\Rc|},
    \end{bmatrix}, 
    \\
\A^{\s \Rc,\Ic} &= \!
    \left[\begin{array}{cc|c}
    \zeros_{}  & \zeros & \zeros_{}
    \\ \hline
     \zeros_{}    &  -\alpha \lap^{ (32)}_{\sigma}
    &  \zeros_{}  \end{array} \right], \!
    \\
    \Ac^{\dprime} &= \Ac \cap \Vc^{\, i^{\dprime}}_{\sigma}, \qquad  \Ac^{\rm r} = \Ac \setminus \Ac^{\dprime}.
\end{align}
\end{subequations}
where $\underline{\rho}=-\alpha \big( {\dtilde{\lap}}_{\sigma} \widetilde{\pb}_{\s \twohop{i}} + \lap^{(23)}_{\sigma} \widetilde{\pb}_{\s \Rc} \big)$, $\uapp = \col \paren{ {\ub}^{\rm a}_{{i}} }_{i \in \Ac^{\dprime}} \in \real^{|\Ac^{\dprime}|}$, $I_{\! \Ac^{\dprime}} = \big[\efrak^{\s \, i_{1}}_{\s |\Ic|}\; \efrak^{\s \, i_2}_{\s |\Ic|}\; \dots\; \efrak^{\s i_{\s |\Ac^{\dprime}|}}_{\s |\Ic|}\big] \in \real^{|\Ic| \times |\Ac^{\dprime}|} $. 

We note that $\Sigma_{\s \Vc^{\dprime}_{i}}$ with known $\A^{\s \Ic}$ and $\C^{\s \Ic},$ 
is the dynamics available for each agent $i \in \Vc$ given $\Phi_{\sigt}^{\, i}, \, \forall \, t \in \realnonneg$, and the dynamics $\Sigma_{\s \Rc_{i}}$ and the possibly existing coupling term $\rho(\x_{\s \Ic} , \x_{\s \Rc} )$ are unknown to the agent $i \in \Vc$. 
Moreover, for every agent $i\in \Vc$,
$\y^{\, i}_{\sigma}$ in \eqref{eq:2hop_sys} and \eqref{eq:locally_measurements} are the same set of measurements obtained by reordering $\C^{\s \Ic}$ and $\C^{\, i}$. 

The following results address the effect of $\Sigma_{\s \Rc_{i}}$ on  $\Sigma_{\s \Vc^{\, i^{\s \dprime}}_{\sigma}}$ as well as the observability of  $\Sigma_{\s \Vc^{\, i^{\s \dprime}}_{\sigma}}$, \rev{which will be used later in local observer design.}

\begin{proposition}\label{prop:couplingTerm_convergence}
Consider \eqref{eq:2hop_sys} and \eqref{eq:rest_sys} under Assumptions \ref{assum:switching} and \ref{assum:comm_capability}. The coupling term $\rho(\x_{\s \Ic}, \x_{\Rc})$ in $\Sigma_{\s \Vc^{\, i^{\s \dprime}}_{\sigma}}$ of agent $i \in \Vc$, holds the bound $ \norm{ \rho(\x_{\s \Ic}, \x_{\Rc}) } \leq  \alpha
     \kappa_{\x}  e^{-\lambda_{\x}(t-t_0)} 
     \norm{\x(t_0)}
    +
    \alpha \kappa_{\mb{u}}
    \sup_{ t_0 \leq t \leq T_d} \norm{\ua(t)}, \;
     \forall \, t \geq t_0 \in \realnonneg$, where $\alpha$ is given in \eqref{eq:ctrl_proto}, and ${\kappa}_{\x} $, $\kappa_{\mb{u}}$, and $\lambda_{\x}$ are given in \eqref{eq:state_bounds}. 
     Additionally, if $\Ac = \emptyset $, the coupling term exponentially converges to $\zeros$ with  $ \norm{ \rho(\x_{\s \Ic}, \x_{\Rc}) } \leq  \alpha
     \kappa_{\x}  e^{-\lambda_{\x}(t-t_0)} 
     \norm{\x(t_0)}, \,
     \forall \, t \geq t_0 \in 
     \realnonneg $.
\end{proposition}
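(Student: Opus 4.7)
The strategy is to express the coupling term as a linear function of the output coordinates used in Proposition \ref{prop:convergence_PE_like} and then invoke that result for a pointwise-in-time bound. From \eqref{eq:sys_matrices_twohop}, $\rho(\x_{\s \Ic}, \x_{\s \Rc}) = \col(\zeros, \underline{\rho})$ with $\underline{\rho} = -\alpha\big(\dtilde{\lap}_{\sigma}\widetilde{\pb}_{\s \twohop{i}} + \lap^{(23)}_{\sigma}\widetilde{\pb}_{\s \Rc}\big)$, so $\norm{\rho(\x_{\s \Ic}, \x_{\s \Rc})} = \norm{\underline{\rho}}$ and the zero sub-block contributes nothing. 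Hence the entire bound reduces to estimating $\norm{\underline{\rho}}$.

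The key observation is the row-sum-zero identity $\bigl[\dtilde{\lap}_{\sigma} \;\; \lap^{(23)}_{\sigma}\bigr]\ones = \zeros$, guaranteed by the Laplacian decomposition in \eqref{eq:laplacian_decomposition}. This invariance permits replacing $\widetilde{\pb}$ by $\widetilde{\pb} - \bar{p}\ones$ for any scalar $\bar{p}$, so that $\underline{\rho}$ is in fact a function of position differences, not of absolute positions. Choosing $\bar{p} = \tfrac{1}{N}\ones^{\top}\widetilde{\pb}$ and using the identity $Q^{\top}Q = I - \tfrac{1}{N}\ones\ones^{\top}$ from \eqref{eq:PE_Q}, one obtains $\widetilde{\pb} - \bar{p}\ones = Q^{\top}\zeta$, with $\zeta = Q\widetilde{\pb}$ as in \eqref{eq:output_coord_vec}. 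Consequently $\underline{\rho}$ can be written as $-\alpha M \zeta$ for a fixed matrix $M$ whose rows coincide with the relevant rows of $\lap_{\sigma} Q^{\top}$, and by submultiplicativity $\norm{\rho(\x_{\s \Ic}, \x_{\s \Rc})} \leq \alpha \norm{M}\,\norm{\zeta(t)} \leq \alpha \norm{M}\,\norm{\Yc(t)}$, since $\zeta$ is a sub-vector of $\Yc$.

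It then remains to bound $\norm{\Yc(t)}$ pointwise in $t$. Proposition \ref{prop:convergence_PE_like} applied with $p = \infty$ yields precisely $\norm{\Yc(t)} \leq \kappa_{\x} e^{-\lambda_{\x}(t-t_0)}\norm{\x(t_0)} + \kappa_{\mb{u}} \sup_{t_0\leq \tau \leq T_d}\norm{\ua(\tau)}$. Substituting this into the previous inequality and collecting the prefactor $\alpha \norm{M}$ into the constants produces the claimed bound with $\alpha \kappa_{\x}$ and $\alpha \kappa_{\mb{u}}$. For the attack-free case $\Ac = \emptyset$, $\ua \equiv \zeros$ in \eqref{eq:cl_sys}, so the second term vanishes by \eqref{eq:formation_consensus_expo}, leaving the pure exponential-decay conclusion.

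The principal technical obstacle is reconciling the factor $\norm{M}$ appearing in the operator-norm estimate with the stated constants, which do not visibly carry any Laplacian-norm multiplier. Resolving this will require either a sparsity-aware bound on $M$ (noting that $M$ has only $|\twohop{i}|$ nonzero rows rather than $N$), or recognizing that the coordinate transformation $C$ in \eqref{eq:state_bounds}-c, implicit in the proof of Proposition \ref{prop:convergence_PE_like}, already absorbs the Laplacian-weighted Lyapunov structure so that the relevant graph-norm dependence is folded into $\kappa_{\x}$ and $\kappa_{\mb{u}}$. I would verify this by directly expanding $C$ and tracking the $\lap_{\sigma}$ factors through the quadratic Lyapunov argument underlying \eqref{eq:state_bounds}; the role of $\alpha$ as the sole prefactor in the final bound is then precisely the $\alpha$ appearing in front of the Laplacian in \eqref{eq:ctrl_proto} and hence in $\underline{\rho}$.
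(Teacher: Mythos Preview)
Your approach is essentially identical to the paper's: the paper also uses the row-sum-zero identity $[\dtilde{\lap}_{\sigma}\;\lap^{(23)}_{\sigma}]\ones=\zeros$ to replace the position vector by $\widetilde{\pb}-\ones\boldsymbol{\rm p}_{\rm avg}$ (which is $Q^{\top}\zeta$, exactly your centering), then bounds $\norm{\widetilde{\pb}-\ones\boldsymbol{\rm p}_{\rm avg}}$ via Proposition~\ref{prop:convergence_PE_like} with $p=\infty$, arriving at $\norm{\rho}\le \alpha\norm{[\dtilde{\lap}_{\sigma}\;\lap^{(23)}_{\sigma}]}\cdot\big(\kappa_{\x}e^{-\lambda_{\x}(t-t_0)}\norm{\x(t_0)}+\kappa_{\mb{u}}\sup\norm{\ua}\big)$.

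Regarding the obstacle you flag: the paper does \emph{not} absorb $\norm{M}$ into $\kappa_{\x},\kappa_{\mb{u}}$ via the Lyapunov structure. It simply records that $\norm{[\dtilde{\lap}_{\sigma}\;\lap^{(23)}_{\sigma}]}\ge 1$ whenever the matrix is nonzero and then drops the factor---which, as you will notice, points the wrong way for an upper bound. So your instinct that something is missing is correct; the stated constants in the proposition tacitly suppress this Laplacian-norm multiplier, and neither of your proposed resolutions (sparsity of $M$ or absorption into $C$) is what the paper actually does.
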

\begin{proof}
See Appendix \ref{app_couplingTerm_convergence}.
\end{proof}
\begin{proposition}\label{prop:observability_local}
Consider the $2$-hop dynamics $\Sigma_{\s \Vc^{\, i^{\s \dprime}}_{\sigma}}$ in \eqref{eq:2hop_sys} for each agent $i \in \Vc \setminus \Ac$  communicating over $\Gc_{\sigt}$ under Assumptions \ref{assum:switching} and \ref{assum:comm_capability}. Then, 
the following statements hold.
\begin{enumerate} 
    \item \label{prop:observability_local_st1}
    The pair $({\mb{A}}^{\s \Ic}_{\sigma}, {\C^{\s \Ic}})$ in 
    $\Sigma_{\s \Vc^{i^{\s \dprime}}_{\sigma}},\, \; \forall \, i \in \Vc$, is observable in each mode $\sigma \in \Qc$.
    \item \label{prop:observability_local_st2}
    There exists no generic set of inputs $\ua(t) \in \Lc_{p e} $ stealthy in the sense of \eqref{eq:undetectable}, where $\y^{\, i}_{\sigma}$'s are given in \eqref{eq:2hop_sys}, provided the set $\Ac$ of malicious agents is $F$-total (resp. $F$-local), with $ 0 \leq F \leq \boldsymbol{\kappa}(\boldsymbol{\Gc}_{\s T}^{\mu})-1$ (resp. $ 0 \leq F \leq r(\boldsymbol{\Gc}_{\s T}^{\mu})-1$).
\end{enumerate}
\end{proposition}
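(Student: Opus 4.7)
The plan is to treat the two claims separately. Claim (i) reduces to a direct linear-algebraic verification of observability, while claim (ii) follows by invoking Lemma~\ref{lemma:net_level_obs} after reconciling the local measurement structure in \eqref{eq:2hop_sys} with the collective measurement considered there.

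For (i), I would form the first two block rows of the observability matrix of the pair $(\mb{A}^{\s \Ic}_{\sigma}, \C^{\s \Ic})$. Since $\C^{\s \Ic} = \diag(I_{|\Ic|}, {\efrak^{\,1}_{|\Ic|}}^{\!\top})$ by \eqref{eq:sys_matrices_twohop_C}, the top block already supplies $I_{|\Ic|}$ on the position coordinates of $\x_{\s \Ic} = \col(\widetilde{\pb}_{\s \Ic}, \vb_{\s \Ic})$. Multiplying by $\mb{A}^{\s \Ic}_{\sigma}$ and using the cascade structure $\dot{\widetilde{\pb}}_{\s \Ic} = \vb_{\s \Ic}$ inherited from \eqref{eq:2hop_sys}, the product $\C^{\s \Ic} \mb{A}^{\s \Ic}_{\sigma}$ supplies $I_{|\Ic|}$ on the velocity coordinates. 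Stacking these two block rows produces a submatrix equal to $I_{2|\Ic|}$, so $\col(\C^{\s \Ic}, \C^{\s \Ic}\mb{A}^{\s \Ic}_{\sigma})$ has full column rank $2|\Ic|$ in every mode $\sigma \in \Qc$, irrespective of the switching signal and of $\lap^{\dprime}_{\sigma}$.

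For (ii), I would observe that the collective measurement $\y^{\s \Vc \setminus \Ac}_{\sigma}$ appearing in the stealthiness definition \eqref{eq:undetectable} is precisely the concatenation $\col(\y^{\,i}_{\sigma})_{i \in \Vc \setminus \Ac}$ of the local 2-hop outputs in \eqref{eq:2hop_sys}, up to a permutation of the rows of $\C^{\s \Ic}$ and $\C^{\,i}$. Hence Lemma~\ref{lemma:net_level_obs} applies verbatim: its hypotheses $0 \leq F \leq \boldsymbol{\kappa}(\boldsymbol{\Gc}^{\mu}_{\s T})-1$ in the $F$-total case and $0 \leq F \leq r(\boldsymbol{\Gc}^{\mu}_{\s T})-1$ in the $F$-local case are, by Proposition~\ref{prop:graph_bounds}, equivalent to $(F{+}1,T)$-vertex-connectivity and $(F{+}1,T)$-robustness, respectively. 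These conditions force $\bigcap_{\sigma \in \Qc'} \ker\bigl(\boldsymbol{P}(\lambda_{\rm o}, \sigma)\bigr) = \emptyset$ for almost all $\lambda_{\rm o} \in \cplx$, which by Lemma~\ref{lemma:net_level_obs} is equivalent to the nonexistence of a generic stealthy $\ua(t) \in \Lc_{p e}$.

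The main obstacle is not (i), which is a routine rank check, but making the identification in (ii) rigorous: one must verify that the local row block $\C^{\s \Ic}$ used in the 2-hop dynamics matches the row block of $\mb{C}^{\s \Vc \setminus \Ac}_{\sigma}$ entering the pencil \eqref{eq:pencil_collective}, and that $\Ba$ restricted to the malicious indices inside $\Vc^{\,i^{\dprime}}_{\sigma}$ coincides with $\Bapp$ in \eqref{eq:sys_matrices_twohop}. Once this bookkeeping is done, both claims follow without further spectral or trajectory arguments, and neither requires point-wise-in-time connectivity thanks to the integral-sense graph-theoretic bounds already invoked in Lemma~\ref{lemma:net_level_obs}.
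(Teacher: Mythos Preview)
Your proposal is correct and, for part (ii), follows the paper's proof essentially verbatim: the paper also argues that the measurements $\y^{\,i}_{\sigma}$ in \eqref{eq:2hop_sys} coincide (up to reordering) with those in \eqref{eq:locally_measurements}, so that Lemma~\ref{lemma:net_level_obs} applies directly under the stated $F$-total/$F$-local bounds.

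For part (i), there is a minor methodological difference worth noting. The paper verifies observability via the PBH test, showing that $\rank\bigl[\begin{smallmatrix}\lambda_{\rm o} I - \mb{A}^{\s \Ic}_{\mb{q}}\\ \mb{C}^{\s \Ic}_{\mb{q}}\end{smallmatrix}\bigr] = 2|\Ic_i|$ for all $\lambda_{\rm o}\in\cplx$. You instead use the Kalman rank condition, extracting from $\col(\C^{\s \Ic}, \C^{\s \Ic}\mb{A}^{\s \Ic}_{\sigma})$ a $2|\Ic|\times 2|\Ic|$ identity submatrix by exploiting the double-integrator structure $\dot{\widetilde{\pb}}_{\s \Ic}=\vb_{\s \Ic}$. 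Both are textbook-equivalent observability criteria; your route is arguably more transparent here since it bypasses any eigenvalue computation of $\mb{A}^{\s \Ic}_{\sigma}$ and makes explicit that observability holds independently of $\lap^{\dprime}_{\sigma}$, while the PBH route would in principle require checking the rank at each $\lambda_{\rm o}\in\spec(\mb{A}^{\s \Ic}_{\sigma})$.
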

\begin{proof}
See Appendix \ref{app_observablity_local}.
\end{proof}
Having quantified the conditions on the attack stealthiness and local observability, we next propose the reconfigurable local attack detector module that relies only on the time-varying local information $\Phi_{\sigt}^{\, i}$ in \eqref{eq:2hop_info} and that performs the distributed hypothesis testing in \eqref{eq:hypotheses_local}. 

\subsection{Reconfigurable attack detector (local observer)}\label{sec:local_obs_design}
For each agent $i \in \Vc$ with the $2$-hop dynamics $\Sigma_{\s \Vc^{\dprime}_{i}}$ in \eqref{eq:2hop_sys} and local information $\Phi_{\sigt}^{\, i}$ in \eqref{eq:2hop_info}, the local attack detector $\Sigma^{\s \Oc}_{\s \Vc^{i^{\s \dprime}}_{\sigma}}$  is proposed as follows
\noindent
\begin{subequations}\label{eq:2hop_obs}
\begin{align} 
\hspace{-1em}  
\Sigma^{\s \Oc}_{\s \Vc^{i^{\s \dprime}}_{\sigma}} \!:\!
&\left\{
\begin{array}{l}
\!\!
    \xhatdot_{\s \Ic} 
    = 
    \A^{\s \Ic}
    \xhat_{\s \Ic} 
 +
 \Hobs^{\s \Ic} (\y^{\, i}_{\sigma}- \yhat^{\, i}_{ \sigma})
 \\
 \! \!
 \yhat^{\, i}_{ \sigma} = \C^{\s  \Ic} \xhat_{\s \Ic} 
 \\
 \! \!
 \res^{\, i}_{ \sigma} =  \y^{\, i}_{\sigma}- \yhat^{\, i}_{ \sigma}
\end{array},
\right.
\\ \label{eq:2hop_obs_IC}
\hspace{-1em} 
\xhat_{\s \Ic}(t_k) 
 \!=\!
    & \begin{cases}
     \ind_{\Ic_i}
     \x_{\s \Ic}(t_k), & \!\!
     \text{if} \ \
     \Vc^{\,i^{\dprime}}_{\sigma(t_k)} \! \neq \! \Vc^{\,i^{\dprime}}_{\sigma(t_{k-1})} \text{ OR} 
     \; k = 0, \\
     \xhat_{\s \Ic}(t_{k}) ,& \!\!
     \text{if} \ \
     \Vc^{\,i^{\dprime}}_{\sigma(t_k)} \!=\! \Vc^{\,i^{\dprime}}_{\sigma(t_{k-1})},
     \end{cases} 
\end{align}
\end{subequations}
where $\xhat_{\s \Ic} $ is the estimation of $\x_{\s \Ic} $ in \eqref{eq:2hop_sys}, and the initial conditions $\xhat_{\s \Ic}(t_k) $ are updated at
$ \braces{{t}_k}_{k=0}^{\mb{m}}, \; \mb{m} \in \intgnonneg$ corresponding to the modes $\sigma(t_k)\text{'s} \in \Qc$ (see Assumption \ref{assum:switching}),
and $\ind_{\Ic_i}= \diag ( I^{}_{\s |\Ic_i|}, \efrak^{\, \s 1}_{\s |\Ic_i|} {\efrak^{\, \s 1}_{\s |\Ic_i|}}^{\s \!\!\! \top} )$. 
$\Hobs^{\s \Ic} =  
\begin{bsmallmatrix}
\zeros_{\s |\Ic|\times|\Ic|} &  \phantom{h}\zeros_{\s |\Ic|} \\ H^{\s \Ic}_{\sigma} & h_{\sigma} {\efrak^{\, \s 1}_{\s |\Ic|}}
\end{bsmallmatrix} $ is the observer's gain matrix with 
a scalar $h_{\sigma} \in \realpos $ and a symmetric P.D. matrix $ H^{\s \Ic}_{\sigma} \in \realpos^{\s |\Ic|\times |\Ic|} $ 
such that $\bar{\mb{A}}^{\s \Ic}_{\sigma}=(\A^{\s \Ic}-\Hobs^{\s \Ic} \C^{\s \Ic})$ is Hurwitz stable in every mode $\sigma \in \Qc$. 
Note that the availability of $\Phi_{\sigt}^{\, i} $ in \eqref{eq:2hop_info} allows each agent to readily update $\Sigma^{\s \Oc}_{\s \Vc^{i^{\s \dprime}}_{\sigma}}$ upon a switch occurs between the communication modes. 

Let estimation error $\e_{\s \Ic} = \x_{\s \Ic} - \xhat_{\s \Ic}$, its dynamics are obtained from \eqref{eq:2hop_sys} and \eqref{eq:2hop_obs} as follows
\noindent
\begin{align} \label{eq:2hop_obs_error}
\Sigma^{\s \widetilde{\Oc}}_{\s \Vc^{i^{\s \dprime}}_{\sigma}}\!:\!
\left\{
\begin{array}{l}
\!\!
    \edot_{\s \Ic} 
    = 
    \bar{\mb{A}}^{\s \Ic}_{\sigma}
    \e_{\s \Ic} 
 + 
 \rho(\x_{\s \Ic } , \x_{\s \Rc})
 + \Bapp \uapp
 \\
 \! \!
 \res^{\, i}_{ \sigma} =  \C^{\s \Ic}  \e_{\s \Ic} 
\end{array} ,
\right.
\end{align}
in which $\e_{\s \Ic}(t_k) = \diag ( \zeros^{}_{\s |\Ic_i|+1\times|\Ic_i|+1}, I^{}_{\s |\Ic_i|-1}) \x_{\s \Ic}(t_k) $
if $ \Vc^{\,i^{\dprime}}_{\sigma(t_k)} $ $ \neq \Vc^{\,i^{\dprime}}_{\sigma(t_{k-1})} $ or $k=0$, and $\e_{\s \Ic}(t_k) = \x_{\s \Ic}(t_k) - \xhat_{\s \Ic}(t_k) $ otherwise, with $ \braces{{t}_k}_{k=0}^{\mb{m}}, \; \mb{m} \in \intgnonneg $. 
%
\begin{theorem}\label{thm:detectability_local}
Consider $\Sigma_{\sigt}$ in \eqref{eq:cl_sys} with a $\boldsymbol{\kappa}(\boldsymbol{\Gc}^{\mu}_{\s T})$-vertex-connected (resp. $r(\boldsymbol{\Gc}^{\mu}_{\s T})$-robust) communication network as defined in \eqref{eq:graph_bounds_all} under Assumptions \ref{assum:switching} and \ref{assum:comm_capability}. Let \eqref{eq:cl_sys} be subject to an $F$-total, where $F\leq \boldsymbol{\kappa}(\boldsymbol{\Gc}^{\mu}_{\s T})-1$, (resp. $F$-local, where $F\leq r(\boldsymbol{\Gc}^{\mu}_{\s T})-1$) adversary set with input $\ua \in \Lc_{pe}$. Let each mobile agent $i \in \Vc$ be equipped with a reconfigurable local attack detector $\Sigma^{\s \Oc}_{\s \Vc^{i^{\s \dprime}}_{\sigma}}$ given by \eqref{eq:2hop_obs} and local information \eqref{eq:2hop_info}. Then, for each $\norm{\e_{\s \Ic}(t_k)} < w_{\s \Ic}$, with $w_{\s \Ic} \in \realpos$, \eqref{eq:2hop_obs_error} is finite-gain $\Lc_{p}$ stable and the residuals $\res^{\, i}_{ \sigma}(t)$'s hold the bound
   \noindent \begin{align}\label{eq:detectability_local_st2}
    \left| \res^{i,j}_{ \sigma}(t)  \right| 
        &\leq  
{\kappa}^{\s \Ic}_{\e}
    w_{\s \Ic}
    e^{- {\lambda}^{\s \Ic}_{\e}  (t-t_k)}
       + 
    \big( {\scriptstyle \frac{{\kappa}^{\s \Ic}_{\res}}{\lambda^{\s \Ic}_{\e}}} \norm{\x(t_0)} e^{-\lambda_{\x}(t_k-t_0)}\big)
    \nonumber \\
    & \ \
    \big(1 - e^{ - \lambda^{\s \Ic}_{\e} ( t -t_k ) }\big)+
\big({\scriptstyle \frac{1+ {\kappa}^{\s \Ic}_{\res}}{\lambda^{\s \Ic}_{\e}}}\big)
\sup_{ t_0 \leq t \leq T_d} \norm{\ua(t)}
\nonumber\\
& \ \
   \big(1 - e^{ - \lambda^{\s \Ic}_{\e} ( t -t_k ) }),
   \ \ \forall \, t \in [t_k, \; t_{k+1}\big),
    \end{align}
    where $\res^{i,j}_{ \sigma}(t)$ is the $j$-th component of $\res^{\,i}_{ \sigma}(t)$ and denotes the position estimation of the two-hop neighbors, corresponding to the $j$-th row of $\C^{\, i}$, in each mode $\sigma(t_k) \in \Qc,\; \forall \, t \in [t_k \; t_{k+1}), \; k \in \intgnonneg $. Also,
     $ {\kappa}^{\s \Ic}_{\res} = \alpha  {\kappa}_{\x} {\kappa}^{\s \Ic}_{\e} $, with the known constants\footnote{Recall that $\Ic$ is a shorthand for the set $ \Ic_i = \Vc^{\, i^{\dprime}}_{\sigma} $ and thus the constants are mode-dependent for each cooperative agent $i\in \Vc\setminus\Ac$.} $ {\kappa}^{\s \Ic}_{\e}, \, {\lambda}^{\s \Ic}_{\e} \in \realpos $, such that  $ \| e^{\bar{\mb{A}}^{\s \Ic}_{\sigma(t_k)}(t-t_k)} \| \leq {\kappa}^{\s \Ic}_{\e} e^{- {\lambda}^{\s \Ic}_{\e}  (t-t_k)} $, and $ {\kappa}_{\x} $ and ${\lambda}_{\x} $ are given in Proposition \ref{prop:convergence_PE_like}.
Additionally, if $\Ac = \emptyset$, each $\Sigma^{\s \Oc}_{\s \Vc^{\, i^{\s \dprime}}_{\sigma}}$ is exponentially stable with $\e_{\s \Ic}(t) \rightarrow \zeros$, and the residuals in \eqref{eq:detectability_local_st2} hold the following bound
\begin{align}\label{eq:res_threshold}
       & \left| \res^{i,j}_{ \sigma}(t)  \right| 
        \leq   
    {\kappa}^{\s \Ic}_{\e}
    w_{\s \Ic} e^{- {\lambda}^{\s \Ic}_{\e}  (t-t_k)}
       + 
    \big({\scriptstyle \frac{{\kappa}^{\s \Ic}_{\res}}{\lambda^{\s \Ic}_{\e}}} \norm{\x(t_0)} e^{-\lambda_{\x}(t_k-t_0)}\big)
    \nonumber \\
    & \hspace{11em}
    \big(1 - e^{ - \lambda^{\s \Ic}_{\e} ( t -t_k ) }\big)
:= \epsilon^{i,j}_{\sigma}, 
    \end{align}
    where $\epsilon^{i,j}_{\sigma}$ is a threshold that can be used in \eqref{eq:hypotheses_local}.
\end{theorem}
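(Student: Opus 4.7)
The proof is essentially a direct computation on the error dynamics \eqref{eq:2hop_obs_error} within each switching interval $[t_k, t_{k+1})$, combined with the exponential bound on the state transition matrix $e^{\bar{\mb{A}}^{\s \Ic}_{\sigma(t_k)}(t-t_k)}$ and the coupling-term bound from Proposition \ref{prop:couplingTerm_convergence}. My plan is as follows.

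\textbf{Step 1 (variation of constants).} I would fix a mode interval $[t_k, t_{k+1})$, treat $\bar{\mb{A}}^{\s \Ic}_{\sigma(t_k)}$ as an LTI generator, and apply the explicit solution formula to \eqref{eq:2hop_obs_error}, yielding
\begin{equation*}
\e_{\s \Ic}(t) = e^{\bar{\mb{A}}^{\s \Ic}_{\sigma(t_k)}(t-t_k)} \e_{\s \Ic}(t_k) + \int_{t_k}^{t} e^{\bar{\mb{A}}^{\s \Ic}_{\sigma(t_k)}(t-\tau)} \bigl( \rho(\x_{\s \Ic}(\tau),\x_{\s \Rc}(\tau)) + \Bapp \uapp(\tau) \bigr) \, {\rm d}\tau.
\end{equation*}
Taking norms and using the Hurwitz bound $\|e^{\bar{\mb{A}}^{\s \Ic}_{\sigma(t_k)}(t-t_k)}\| \leq {\kappa}^{\s \Ic}_{\e} e^{-{\lambda}^{\s \Ic}_{\e}(t-t_k)}$ along with $\|\e_{\s \Ic}(t_k)\|\leq w_{\s \Ic}$ (from the reset rule \eqref{eq:2hop_obs_IC}, $\|\Bapp\|\leq 1$, and $\|\C^{\s \Ic}\|\leq 1$) handles the first term of \eqref{eq:detectability_local_st2} immediately.

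\textbf{Step 2 (bounding the coupling and input contributions).} For the integral term, I substitute the bound on $\rho$ from Proposition \ref{prop:couplingTerm_convergence}, namely $\|\rho(\cdot)\| \leq \alpha\kappa_{\x} e^{-\lambda_{\x}(\tau-t_0)}\|\x(t_0)\| + \alpha\kappa_{\mb{u}}\sup_{t_0\leq s\leq T_d}\|\ua(s)\|$, and use $\|\Bapp\uapp(\tau)\|\leq \|\ua(\tau)\|\leq \sup\|\ua(\cdot)\|$. The key monotonicity observation is that on $\tau\in[t_k,t]$ the factor $e^{-\lambda_{\x}(\tau-t_0)}$ is decreasing, so it is upper bounded by $e^{-\lambda_{\x}(t_k-t_0)}$ and can be pulled outside the integral. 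What remains is a scalar exponential integral:
\begin{equation*}
\int_{t_k}^{t} {\kappa}^{\s \Ic}_{\e} e^{-{\lambda}^{\s \Ic}_{\e}(t-\tau)} \, {\rm d}\tau = \frac{{\kappa}^{\s \Ic}_{\e}}{{\lambda}^{\s \Ic}_{\e}}\bigl( 1 - e^{-{\lambda}^{\s \Ic}_{\e}(t-t_k)} \bigr).
\end{equation*}
Collecting terms with the abbreviation ${\kappa}^{\s \Ic}_{\res}=\alpha\kappa_{\x}{\kappa}^{\s \Ic}_{\e}$ reproduces the second summand of \eqref{eq:detectability_local_st2}; the input contributions from $\rho$ and $\Bapp\uapp$ combine into the $(1+{\kappa}^{\s \Ic}_{\res})/\lambda^{\s \Ic}_{\e}$ factor scaling $\sup\|\ua\|$. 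Finite-gain $\Lc_p$ stability follows as a by-product since the kernel $\tau\mapsto{\kappa}^{\s \Ic}_{\e} e^{-{\lambda}^{\s \Ic}_{\e}(t-\tau)}$ has bounded $\Lc_1$ norm in each mode, so a standard Young-convolution argument gives an $\Lc_p$-to-$\Lc_p$ gain from $\ua$ to $\e_{\s \Ic}$ uniformly over switches.

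\textbf{Step 3 (extracting the componentwise residual).} Since $\res^{\,i}_{\sigma}=\C^{\s \Ic}\e_{\s \Ic}$ with $\C^{\s \Ic}$ a selection-type matrix (see \eqref{eq:sys_matrices_twohop_C}), each $|\res^{i,j}_{\sigma}(t)|\leq\|\e_{\s \Ic}(t)\|$, giving \eqref{eq:detectability_local_st2}.

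\textbf{Step 4 (attack-free specialization and threshold).} When $\Ac=\emptyset$, the attack-dependent summand of Proposition \ref{prop:couplingTerm_convergence} vanishes, so the $\sup\|\ua\|$ term disappears from the chain of bounds, leaving precisely \eqref{eq:res_threshold}; the right-hand side then tends to $0$ as $t,t_k\to\infty$ (first term by $e^{-{\lambda}^{\s \Ic}_{\e}(t-t_k)}$ decay between switches and second term because of the exponentially decaying prefactor $e^{-\lambda_{\x}(t_k-t_0)}$ inherited from nominal network-level convergence via Proposition \ref{prop:convergence_PE_like}), yielding exponential stability of \eqref{eq:2hop_obs_error} in the nominal case.

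\textbf{Anticipated obstacles.} The routine part is the variation-of-constants bookkeeping; the subtle point is handling the reset rule \eqref{eq:2hop_obs_IC} at every switch, because whenever $\Vc^{\,i^{\dprime}}_{\sigma}$ changes the state dimension and the observer restart imply $\e_{\s \Ic}(t_k)$ contains a hard-wired, unmeasured velocity component of the newly-added two-hop neighbors; this is precisely why the $w_{\s \Ic}$ bound on $\|\e_{\s \Ic}(t_k)\|$ is stated separately and why the analysis is carried out interval-by-interval rather than globally. A second delicate point is ensuring that the prefactor $e^{-\lambda_{\x}(t_k-t_0)}$ (rather than a fresh $t_k$-anchored decay) is the correct upper envelope after pulling $e^{-\lambda_{\x}(\tau-t_0)}$ outside the integral; this uses the fact that the bound of Proposition \ref{prop:couplingTerm_convergence} is anchored at $t_0$ and monotonic in $\tau$.
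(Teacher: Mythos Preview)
Your proposal is correct and follows essentially the same route as the paper's proof: variation of constants on each mode interval $[t_k,t_{k+1})$, the Hurwitz exponential bound on $e^{\bar{\mb{A}}^{\s\Ic}_{\sigma(t_k)}(t-t_k)}$, the coupling bound from Proposition~\ref{prop:couplingTerm_convergence} (with $e^{-\lambda_{\x}(\tau-t_0)}\leq e^{-\lambda_{\x}(t_k-t_0)}$ on $[t_k,t]$), and then $\|\C^{\s\Ic}\|=1$ to pass to componentwise residuals. The paper likewise notes the reset rule \eqref{eq:2hop_obs_IC} as the reason the analysis is interval-by-interval and invokes ISS plus \cite[Cor.~5.1, Thm.~5.3]{khalil2002nonlinear} (your Young-convolution remark) for the finite-gain $\Lc_p$ conclusion.
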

\begin{proof}
See Appendix \ref{app_detectability_local}.
\end{proof}
\rev{Theorem \ref{thm:detectability_local} shows that the local observer \eqref{eq:2hop_obs} with residual $\res^{\, i}_{ \sigma}$ has BIBO stability for the worst-case number of malicious agents 
with deception attacks that are defined in Section \ref{Sec:adversary_model}, and that whose detectability is ensured by a certain degree of network connectivity (see Lemma \ref{lemma:net_level_obs}).}

\begin{algorithm}[t]
\small
\renewcommand{\algorithmicrequire}{\textbf{Input:}}
\renewcommand{\algorithmicensure}{\textbf{Require:}}
\def\NoNumber#1{{\def\alglinenumber##1{}\State#1}\addtocounter{ALG@line}{-1}}
\caption{\small
{\color{teal}Res}ilient {\color{teal}C}onsensus \& Cooperation Cooperation over {\color{teal}U}nreliable N{\color{teal}e}tworks
}\label{alg:rescue}
\begin{algorithmic}[1]
\Require $ \Phi_{\sigt}^{\, i} $ in \eqref{eq:2hop_info}, $\Sigma^{\s \Oc}_{\s \Vc^{i^{\s \dprime}}_{\sigma}}$ in \eqref{eq:2hop_obs}, and $ \epsilon^{i,j}_{\sigma}$ in  \eqref{eq:res_threshold}, $\forall \, i \in \Vc\setminus\Ac$

        \State { \color{gray}\footnotesize // Accept the null $\nullH$ in \eqref{eq:hypotheses_local_null}  and assume $\Ac^{\dprime} = \emptyset $ }
\Procedure{1: distributed Detection \& Isolation}{}
   \State { \color{gray} \hspace{-2em} \footnotesize // Use the most recent info $\Phi_{\sigt}^{\, i}$ to (re)initialize $\Sigma^{\s \Oc}_{\s \Vc^{\dprime}_{i}}$}
   \vspace{-1ex}
  	\State \hspace{-2em} Compute the residual ${\res}^{\, i}_{\sigma}(t)$ and the corresponding thresholds $\epsilon^{i,j}_{\sigma}$ 
   \For{$j \in \onehop{i}$} 
    \If{$ |{\res}^{i,j}_{\sigma}(t)|  > {\epsilon}^{i,j}_{\sigma} $}
        \State Reject the null hypothesis $\nullH$ in \eqref{eq:hypotheses_local_null} \Comment{{\color{gray} \footnotesize 
        Detection
        }}
        \State $\Ac^{\dprime} \leftarrow j \in \onehop{i}$
        \State Set $ a^{\sigma}_{ij} = 0,\; j \in \Ac^{\dprime} \cap \onehop{i}$ \Comment{{\color{gray}\footnotesize
        Stop comm. with $\!\Ac^{\dprime} $}}
        \State Update $\Phi_{\sigt}^{\, i}$
    \EndIf
    \EndFor
\EndProcedure
%
\Procedure{2: Resilient Cooperation defined in \eqref{eq:formation_consensus}}{}
    \State Run $\ub^{\rm{n}}_{i}(t) $ given in \eqref{eq:ctrl_proto} with the information from $\onehop{i} \setminus \Ac^{\dprime}$
\EndProcedure
\end{algorithmic}
\normalsize 
\end{algorithm}
\begin{figure}[t]
  \centering 
\includegraphics[width=.49\textwidth]{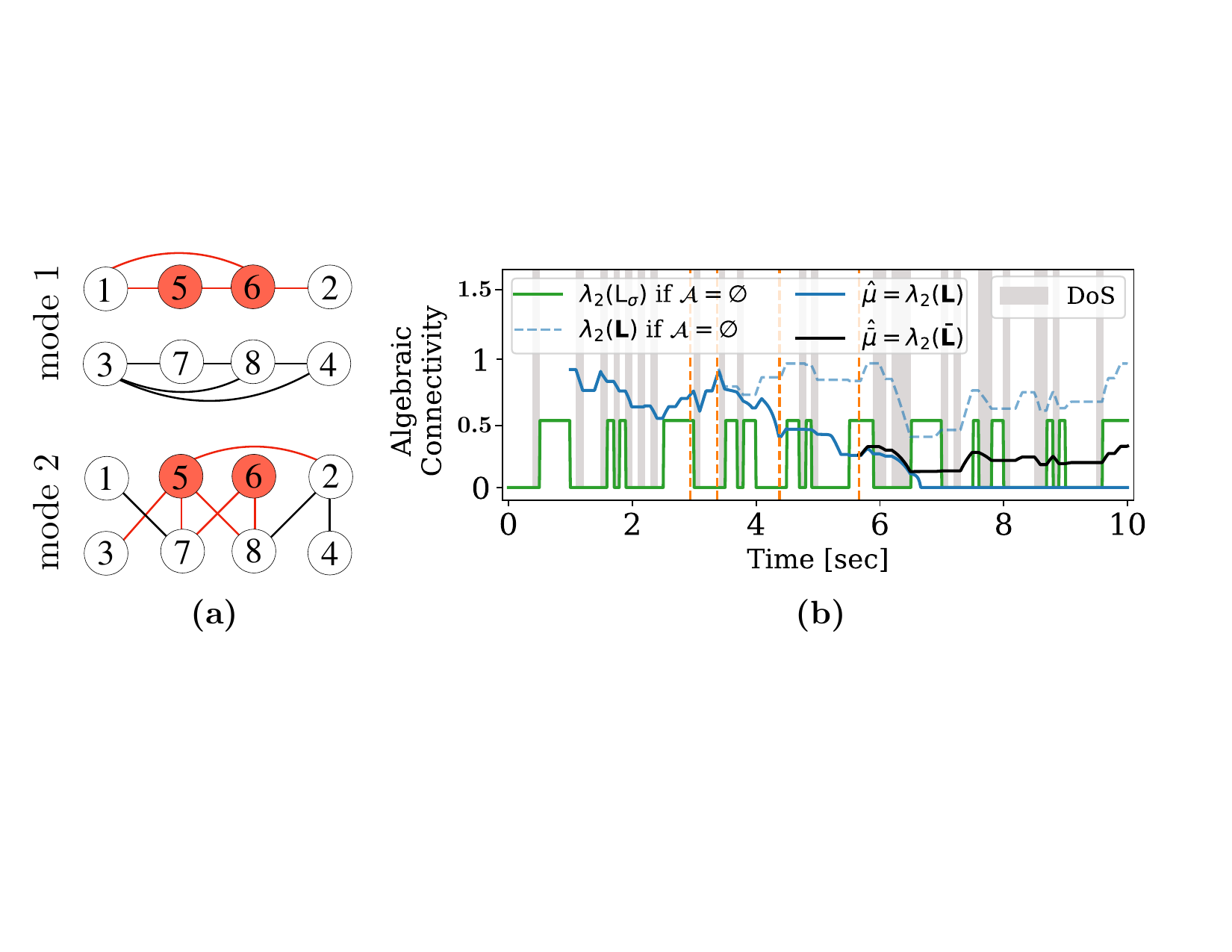} 
  \caption{\small Communication network $\Gc_{\sigt}$ in (a) and its algebraic connectivity in the integral sense of \eqref{eq:PE_cond} in (b) \rev{for Section \ref{Sec:sim_studies}-Example 1}. \rev{(a) The network switches between two modes every $0.5~\si{sec}$ whose union forms a static overlay network $\boldsymbol{\Gc}_{\s T}^{\mu}$ with $\lambda_{2}(\boldsymbol{\lap})=2.1049$ that is 3-robust \cite[Fig. 4]{leblanc2013resilient}}, ensuring $(3,1)$-robustness, and $(3,1)$-vertex-connectivity (see \rev{Section \ref{sec:net_res_and_stability}} and \eqref{eq:graph_bounds_all}). \rev{per Section \ref{Sec:adversary_model},}
  the network $\Gc_{\sigt}$ is subject to a $2$-total and $2$-local set of malicious agents $\Ac = \braces{5,6}$. It is also subject to a distributed DoS whose link dropouts follow a binomial distribution with $100$ trials and a success probability of $0.3$ during $10~\si{sec}$.
(b) The illustration of positive algebraic connectivity $\lambda_{2}(\cdot)$ in the integral sense \eqref{eq:PE_cond} for
    for the network $\Gc_{\sigt}$ and its induced network $\bar\Gc_{\sigt}$ in \eqref{eq:comm_net_res} despite their intermittent connections (\rev{See also remark} \ref{rmk:PE_connectivity}). 
  The results in (b) are from resilient consensus in Fig. \ref{fig:exp1_consensus}a through Algorithm \ref{alg:rescue}. 
  The decrements in $\!\lambda_{2}(\cdot)$ during $t \!\in\! [0, \, 5.66]$ are due to the permanent link disconnections that occurred in the attack detection and isolation procedure, \rev{see Fig. \ref{fig:exp1_consensus}a}.
  }\label{fig:ex1_comm_net} 
\end{figure}

\section{Resilient Cooperation}\label{Sec:resilient_cooperation} 
Building upon the results in the previous sections, we present an algorithmic framework, summarized in Algorithm \ref{alg:rescue}, as a solution to the resilient cooperation problem stated in Section \ref{sec:problem_statement}. 
Algorithm \ref{alg:rescue} comprises two simultaneous procedures addressing the distributed detection and isolation of malicious agents by using \eqref{eq:hypotheses_local} for decision-making, and resilient cooperation. In what follows, we present the technical discussions of Algorithm \ref{alg:rescue}.

\textbf{Isolation of the set of the malicious agents $\Ac \subset \Vc$}.
Upon detection of neighboring malicious agents by each cooperative agent $i\in \Vc\setminus\Ac$, there follows the isolation (removal) of the detected malicious agents from the network (Lines 7-10 in Algorithm \ref{alg:rescue}). 
Note that the results in Proposition  \ref{prop:observability_local}, \rev{Lemma \ref{lemma:net_level_obs},} and Theorem \ref{thm:detectability_local} allow each cooperative agent $i \in \Vc\setminus\Ac$ in a \rev{$(F+1,T)$-robust (resp. $(F+1,T)$-vertex-connected) network} to perform the distributed hypothesis testing in \eqref{eq:hypotheses_local} and detect a \emph{candidate} set of malicious agents within its $1$-hop neighbors, provided the \emph{actual} set of malicious agents, $\Ac$, is at most $F$-local (resp. $F$-total).
Here, the distinction between a \emph{candidate} set and the \emph{actual} set of malicious agents is due to the possibility of \emph{false} alarms in \eqref{eq:hypotheses_local}. 
(i.e., a \emph{candidate} set is almost always a superset of the \emph{actual} set for a sufficiently small threshold in \eqref{eq:hypotheses_local}).  
The foregoing sets coincide if $\bar{\mb{A}}^{\s \Ic}_{\sigma} = (\A^{\s \Ic}-\Hobs^{\s \Ic} \C^{\s \Ic})$ in \eqref{eq:2hop_obs} features distinct eigenvalues, guaranteeing that each residual's component $\res^{i,j}_{\sigma}$ in \eqref{eq:detectability_local_st2} is most sensitive to only one of the input directions associated with the malicious agents within the $1$-hop neighbors. 

%
%
\begin{figure*}[ht]
\centering
\includegraphics[width=.95\textwidth]{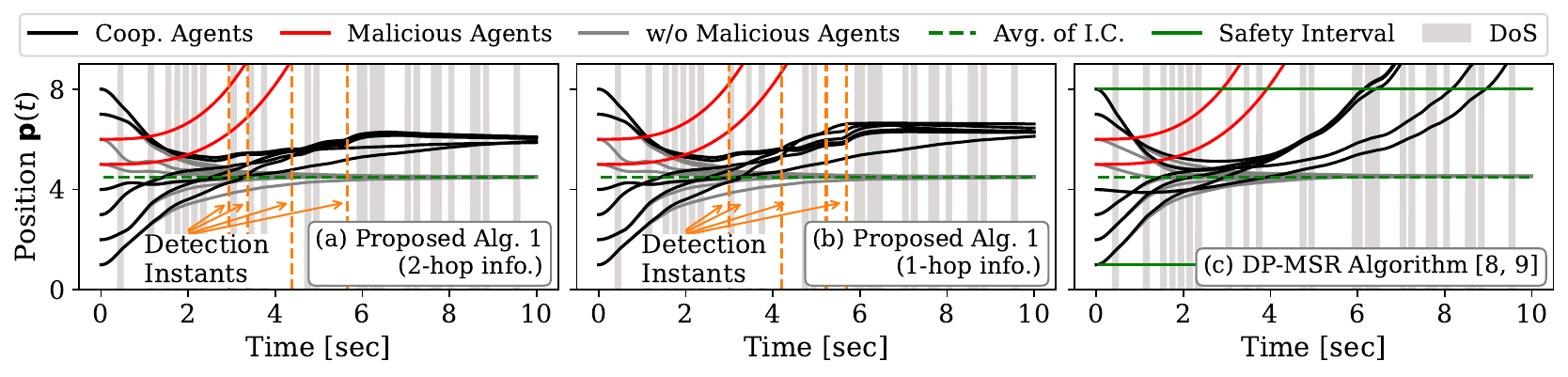} 
   \caption{\small \rev{\textbf{Example 1}: 
   Comparison of resilient consensus in an 8-agent network $\Gc_{\sigt}$ that is, as shown in Fig. \ref{fig:ex1_comm_net}, $(3,1)$-robust and subject to DoS attacks and a $2$-total and $2$-local set of malicious agents $\Ac=\{5,6\}$} with $\ub_5(t) = 0.3 t $ and $\ub_6(t) = 0.5 t $ in \eqref{eq:ctrl_proto}.
   \rev{(a)} Resilient consensus using Algorithm \ref{alg:rescue} \rev{whose resilient to the $2$-total/$2$-local set $\Ac$ in the $(3,1)$-robust network is guaranteed by Lemma \ref{lemma:net_level_obs} and Theorem \ref{thm:detectability_local}}. Also, the {vertical orange dashed lines} specify the time instants where cooperative agents detected and disconnected from their respective neighboring malicious agents (lines 7-10 of Algorithm \ref{alg:rescue} with $\epsilon^{i,j}_{\sigma}=0.95$ as shown in Fig. \ref{fig:_residuals}) using its local attack detector in \eqref{eq:2hop_obs}.
   \revv{(b) Resilient consensus using Algorithm \ref{alg:rescue} with only $1$-hop information, i.e. $\graphx{i}$ in \eqref{eq:2hop_info} and $\Ic_i = \Vc^{\, i^{\prime}}_{\sigma}$ in \eqref{eq:locally_measurements} and \eqref{eq:2hop_sys}, and with the same threshold $\epsilon^{i,j}_{\sigma}=0.95$ as in (a). (c)} Resilient consensus using the DP-MSR algorithm that for a $3$-robust network has provable resilient consensus only in the presence of up to $1$-local or $1$-total malicious agents \cite{dibaji2017resilient,dibaji2015consensus}, accounting for the failure of the approach in this case where $\Ac$ is $2$-local and $2$-total.
   We note that the analysis of resilient consensus via the DP-MSR algorithm was originally developed for a discretized version of \eqref{eq:cl_sys} in \cite{dibaji2017resilient,dibaji2015consensus} while our results are in the continuous-time domain. To have the results in a comparable time scale, we used the DP-MSR procedure with the small sample time $T_{\rm s}=0.001$ and the gains $\gamma =3$ and $\alpha=1$ in the zero-order-hold discretization of \eqref{eq:ctrl_proto}. This set of parameters does not completely satisfy the sufficient condition in \cite[eq. (9)]{dibaji2017resilient}, but does satisfy a relaxation thereof, similar to the discussion in a footnote in \cite{dibaji2017resilient}. This enables an asymptotic resilience consensus in the case $\Ac=\emptyset$ (shown with the gray-colored state trajectories) and also in the cases of $(F\!=\!1)$-local and $(F\!=\!1)$-total adversary sets (not shown herein) over any $3$-robust network. 
   }\label{fig:exp1_consensus}
   \subfloat[Results for the 2-hop information case in Fig. \ref{fig:exp1_consensus}a \label{fig:ex1-1-res1}]{\small \centering
    \includegraphics[width=.45\linewidth, valign=t]{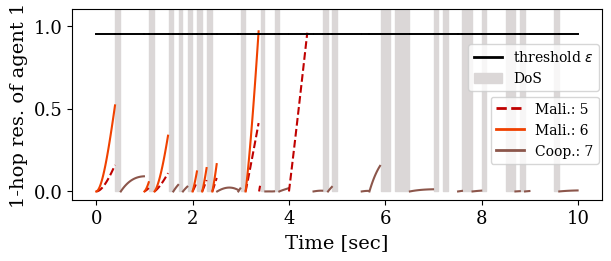}
  }
  \quad
\subfloat[Results for the 1-hop information case in Fig. \ref{fig:exp1_consensus}b \label{fig:ex1-1-res2}]{\small \centering
\includegraphics[width=.45\linewidth, valign=t]{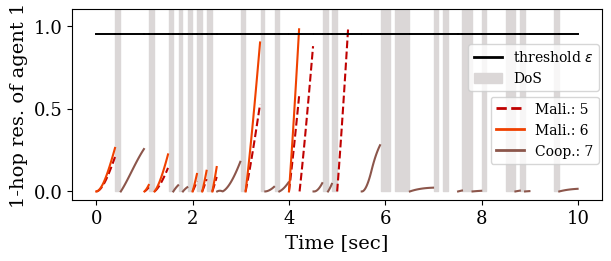}
  }
  \caption{\small The absolute value of the observer's residuals $\res^{i,j}_\sigma$ for the 1-hop neighbors of agent 1 for the results of Fig. \ref{fig:exp1_consensus}. These residuals are is used in line 6 of Algorithm \ref{alg:rescue}.} \label{fig:_residuals}
\end{figure*}

\textbf{State-dependent switching}.
We note the isolation process based on \eqref{eq:hypotheses_local} (lines 7-10 in Algorithm \ref{alg:rescue}) imposes
a finite number of state-dependent switches that are not explicitly incorporated in the condition \eqref{eq:PE_cond} for $\lap_{\sigt}$ with time-dependent switches $\sigt: \mathbb{R}_{\geq 0}\rightarrow  \Qc $. On the other hand, the results of Theorem \ref{thm:robustness-to-disconection} for the bound on the network connectivity in the integral sense of \eqref{eq:PE_cond} after node and edge removal holds independent of the type of switches. Therefore, upon a link removal between a cooperative and malicious agent(s), there exists a new Laplacian matrix $\lap_{\sigt}$ that holds the connectivity condition of the from \eqref{eq:PE_cond} for 
the system in \eqref{eq:cl_sys} starting from the new initial condition $\x(t_k) \in \real^{2|\Vc|}$ 
with $t_k$, $k\in \intgnonneg$, being the time instant of the newly active mode $\sigma(t_k)\in \Qc$. Having the integral connectivity as in \eqref{eq:PE_cond} independent of the states' initial condition, Proposition \ref{prop:convergence_PE_like} can be applied. It is worth mentioning that the independence from the states' initial conditions for the $(\mu,T)$-PE connectivity in \eqref{eq:PE_cond} is a special case of having \eqref{eq:PE_cond} parameterized of the form $ \frac{1}{T}\int_{t}^{t+T} Q\lap_{\sigma(\tau, \boldsymbol{\lambda})} Q^{\top} \, {\rm d}\tau \geq \mu I_{N-1},   \, \forall \, t\in \realnonneg, $ that holds for each $\boldsymbol{\lambda}:=(t_{\rm o},\x_{\rm o})\neq(t_{0},\x(t_0))$ with the switching signal $\sigma(t,\x(t_k)): \mathbb{R}_{\geq 0} \times \Xc \rightarrow  \Qc $, $\Xc \subset \real^{2|\Vc|}$ (see \cite{loria2002uniform}).
\begin{figure*}[ht]
  \centering 
\includegraphics[width=.98\textwidth]{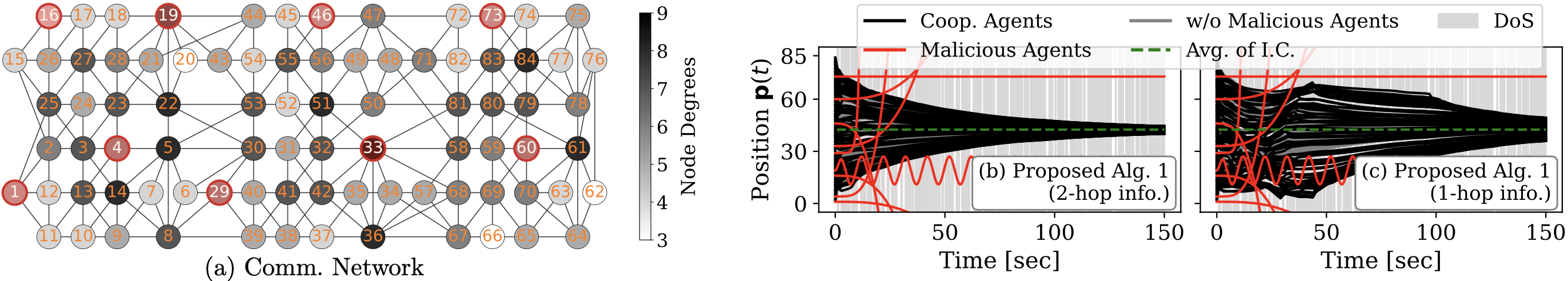}
  \caption{\small 
  \rev{\textbf{Example 2}: Resilient consensus in an 84-agent network $\Gc_{\sigt}$
  subject to deception and DoS attacks defined in Section \ref{Sec:adversary_model}.
  The deception attacks are introduced by a 1-local set of 9 malicious agents, $\Ac=\{1,4,16,19,29,33,46,60,73\}$, which are shown in red color. The distributed DoS attack \eqref{eq:DoS_state} imposes link dropouts following a binomial distribution with $600$ trials and a success probability of $0.4$. 
  (a) The static overlay network $\boldsymbol{\Gc}_{\s T}^{\mu}$ is $2$-robust, constructed using the preferential-attachment model in \cite[Thm. 5]{leblanc2013resilient} based on the topology in \cite[Fig. 6]{leblanc2013resilient}. Despite intermittent connections, the network $\Gc_{\sigt}\!$ is $(2,1)$-robust and $(3,1)$-vertex-connected (see Definitions \ref{def:r_T_robust} and \ref{def:k_T_conect}, and Lemma \ref{lemma:PE_equvalence}). $(2,1)$-robustness, then, ensures resilience to any 1-local set $\Ac$ as it follows from Lemma \ref{lemma:net_level_obs} and Theorem \ref{thm:detectability_local}. (b) Resilient consensus using Algorithm \ref{alg:rescue} (with the threshold $\epsilon^{i,j}_{\sigma}\!=\!10 e^{-t}+0.95$ as shown in Fig. \ref{fig:ex2-2-res1}) over the intermittent network $\Gc_{\sigt}\!$ in (a) and in the presence of the 1-local malicious set $\Ac$.
  \revv{(c) Resilient consensus using Algorithm \ref{alg:rescue} with only $1$-hop information, i.e. $\graphx{i}$ in \eqref{eq:2hop_info} and $\Ic_i = \Vc^{\, i^{\prime}}_{\sigma}$ in \eqref{eq:locally_measurements} and \eqref{eq:2hop_sys}, and with the threshold $\epsilon^{i,j}_{\sigma}\!=\!30 e^{-0.1t}+1.5$ as shown in Fig. \ref{fig:ex2-2-res2}. The results suggest that Algorithm \ref{alg:rescue}'s detection capability is maintained with minimal local information, albeit with some performance degradation in resilient consensus if a more conservative threshold is used for larger networks where the effect of the coupling term $\rho(\x_{\s \Ic }, \x_{\s \Rc})$ in \eqref{eq:2hop_sys} and \eqref{eq:2hop_obs_error} is more significant during transient periods.}
  }
 }\label{fig:exp2_consensus} 
 \subfloat[Results for the 2-hop information case in Fig. \ref{fig:exp2_consensus}b \label{fig:ex2-2-res1}]{\small \centering
    \includegraphics[width=.45\linewidth, valign=t]{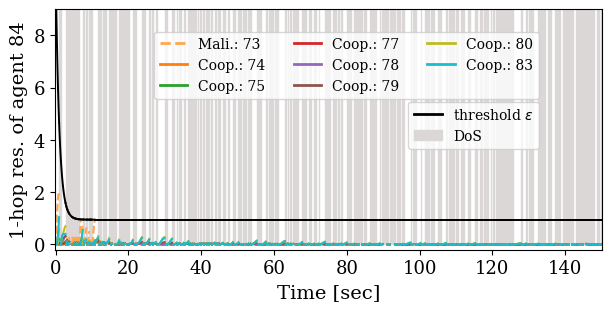}
  }
  \quad
\subfloat[Results for the 1-hop information case in Fig. \ref{fig:exp2_consensus}c \label{fig:ex2-2-res2}]{\small \centering
\includegraphics[width=.45\linewidth, valign=t]{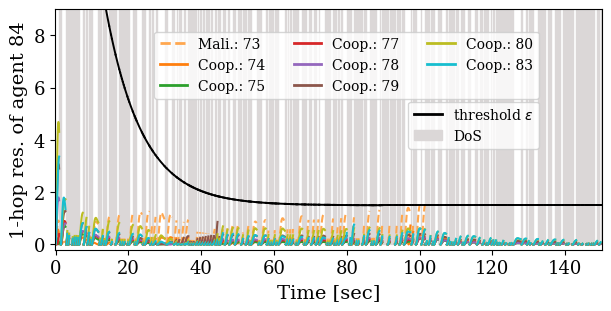}
  }
  \caption{\small The absolute value of the observer's residuals $\res^{i,j}_\sigma$ for the 1-hop neighbors of agent 84 for the results of Fig. \ref{fig:exp2_consensus}. This is used in line 6 of Algorithm \ref{alg:rescue}.
  The results suggest that Algorithm \ref{alg:rescue}'s detection capability is maintained with minimal local information, albeit with some performance degradation in resilient consensus if a more conservative threshold is used for larger networks where the effect of the coupling term $\rho(\x_{\s \Ic }, \x_{\s \Rc})$ in \eqref{eq:2hop_sys} and \eqref{eq:2hop_obs_error} is more significant during transient periods.}   
\end{figure*}

\section{Simulation Results}\label{Sec:sim_studies}
\rev{We conduct two simulation studies to illustrate the theoretical results and compare them with the state-of-the-art \cite{mao2020novel,dibaji2017resilient}. We also provide the open-source code at
{\href{https://github.com/SASLabStevens/rescue}{\color{teal}https://github.com/SASLabStevens/rescue}}.
}

\rev{\textbf{Example 1}.} 
\rev{We compare our proposed Algorithm \ref{alg:rescue} with the DP-MSR algorithm \cite{dibaji2015consensus,dibaji2017resilient}. Employing Algorithm \ref{alg:rescue}, we achieve resilient consensus in an 8-agent network that is subject to switching topology and both deception and DoS attacks. See Figs. \ref{fig:ex1_comm_net} and \ref{fig:exp1_consensus}(a).
In contrast, the DP-MSR algorithm fails to achieve resilient consensus for the same network (Fig. \ref{fig:exp1_consensus}(c)), despite the advantage of operating over a network that is the union of the two modes shown in Fig. \ref{fig:ex1_comm_net}(a). The outperformance of Algorithm \ref{alg:rescue} is because of its observer-based nature that leverages local information $\Phi_{\sigma}^{\, i}$ in \eqref{eq:2hop_info} to detect a larger set of malicious agents in a network with a specific degree of connectivity and $r$-robustness (see Lemma \ref{lemma:net_level_obs}), a capability not shared by the DP-MSR algorithm.
} \revv{Additionally, we ran Algorithm \ref{alg:rescue} using only $1$-hop information instead of the $2$-hop information in \eqref{eq:2hop_info}. The results, shown in Fig. \ref{fig:exp1_consensus}(b), demonstrate that Algorithm \ref{alg:rescue}'s detection capability is maintained even with minimal local information.
}

\rev{\textbf{Example 2}. We evaluate the scalability of our framework on an 84-agent network subject to a DoS attack and 9 malicious agents that form a 1-local set, see Fig. \ref{fig:exp2_consensus}. 
Notably, each agent has at most 9 neighbors (less than $11\%$ of total agents), resulting in a sparse graph. Despite sparsity, the graph has the required robustness properties. This observation underscores the significance of sparse graphs with strong robustness/connectivity properties (e.g., expander graphs), offering resilience without excessive communication overhead.}

\revv{Additionally, we note that if a malicious agent sends incorrect structural information about its $1$-hop neighbors, it causes model inconsistency between the receiving neighbors' local observers \eqref{eq:2hop_obs} and the true multi-agent dynamics \eqref{eq:2hop_sys}-\eqref{eq:rest_sys}. This model discrepancy, causing nontrivial residuals for \eqref{eq:2hop_obs}, can be leveraged for attack detection \cite{mao2020novel,bahrami2021privacy}.}  

\rev{
\textbf{Scalability and computational complexity}.
We remark that the proposed Algorithm \ref{alg:rescue} improves the scalability of the system-theoretic frameworks relying on observers for attack detection \cite{pasqualetti2011consensus,pasqualetti2013attack,mao2020novel}. 
Note that, for each agent, attack detection in Algorithm \ref{alg:rescue} (lines 4-10), requires only one observer with a 2-hop dynamics $\A^{\s \Ic}$ in \eqref{eq:2hop_obs} with worst-case complexity $\Oc(|\Vc^{\,i^{\dprime}}_{\sigma}|^2)$ rather than the complete model $\A$ in \eqref{eq:cl_sys} with $\Oc(|\Vc|^2)$, ($|\Vc^{\,i^{\dprime}}_{\sigma}|\leq|\Vc|$, see \eqref{eq:local_graphs}), which is the case in \cite{mao2020novel}. 
Moreover, the local information \eqref{eq:2hop_info} allows for detecting a greater number of malicious agents in a given network, compared to the prior work \cite{pasqualetti2011consensus} including the graph-theoretic MSR-like algorithms \cite{dibaji2015consensus,dibaji2017resilient,saldana2017resilient} (see Fig. \ref{fig:exp1_consensus}), whose worst-case complexity is quadratic in time $\Oc(|n|^2)$ and linear in space $\Oc(n)$, w.r.t. the size of inclusive $1$-hop neighbors \cite{leblanc2012low}, i.e. $n=|\Vc^{\, i^{\prime}}_{\sigma}|$, see \eqref{eq:local_graphs}.

Finally, given the switching nature of the local observer \eqref{eq:2hop_obs} with resetting initial conditions, an increased frequency of topology switching, potentially violating Assumption \ref{assum:switching}, would lead to significant performance degradation in attack detection as observer's residuals would persist in a transient convergence phase.
}

\section{Conclusions}\label{Sec:conclusion}
We considered the consensus and formation of multi-agent systems over a time-varying communication network subject to deception and DoS attacks.
We showed, for a given integer number $F$, the communication network requires to be at least $(F+1)$-vertex-connected (resp. $(F+1)$-robust) in an integral sense and uniformly in time over a period of time $T$ for resilience to an $F$-total (resp. $F$-local) adversary set. We presented theoretical guarantees and explicit bounds for exponentially fast convergence to the consensus/formation equilibrium. We also presented a distributed attack detection framework with theoretical guarantees, allowing for resilient cooperation. 
%
\appendices
\section{proof of Lemma \ref{lemma:PE_equvalence}}\label{app_PE_equvalence}
The proof of the equivalence of the statements is achieved by demonstrating that each ensures positive algebraic connectivity in an integral sense for the graph $\Gc_{\sigt}=(\Vc, \Ec_{\sigt})$, that is $\lambda_2 ( \boldsymbol{\lap}= \frac{1}{T} \int_{t}^{t+T} 
         \lap_{\sigma(\tau)}  \, {\rm d}\tau ) > \mu$ holds  $\forall \,  t \in \realnonneg  $ and $\exists \, \mu,T \in \realpos$ as defined in \eqref{eq:PE_cond}.
We provide a sketch of the proof herein and refer to \cite{RB_PHDthesis2024} for details of derivations.

Note that $\ones^{}_{N}\ones^{\top}_{N}/{N}$ and $ Q^{\top}Q^{}=I_N - \ones^{}_{N}\ones^{\top}_{N}/{N} $ are both orthogonal projection matrices with $\spec(\ones^{}_{N}\ones^{\top}_{N}/N )= 
\{\zeros^{\top}_{N-1},1\}$ and $\spec(Q^{\top}Q^{}) = 
\{0,\ones^{\top}_{\s N-1}\}$. Then $\spec(Q^{} \lap_{\sigma(\tau)}Q^{\top}  ) = \spec( \lap_{\sigma(\tau)}  )\setminus \{0\} \implies \nonumber 
\lambda_1(Q^{} \lap_{\sigma(\tau)}Q^{\top}) = \lambda_2( \lap_{\sigma(\tau)})$.

\ref{itm:PE_cond_alt} $\! \implies \!$ \ref{itm:PE_cond}: if \ref{itm:PE_cond_alt} holds, then $ (1/T)\int_{t}^{t+T}
        ( \lap_{\sigma(\tau)} + \ones^{}_{N}\ones^{\top}_{N}/{N} ) \, {\rm d}\tau  $ is positive definite, and it can be shown, similar to \cite[Thm. 1]{anderson2016convergence}, that $\lambda_2 \paren{ {{\frac{1}{T}\int_{t}^{t+T} 
         \lap_{\sigma(\tau)}  \, {\rm d}\tau}} } \geq  \mu_m$.
Also, \eqref{eq:PE_cond_alt} can be rewritten as 
\begin{align*}
\hspace{-1em}
    \mu_{m} I_{N} - \frac{\ones^{}_{N}\ones^{\top}_{N}}{{N}}   \leq 
    \frac{1}{T}\int_{t}^{t+T} 
     \lap_{\sigma(\tau)}  \, {\rm d}\tau \leq
     \mu_{M} I_{N} - \frac{\ones^{}_{N}\ones^{\top}_{N}}{{N}}, 
\end{align*}
which by pre- and post-multiplying, resp., by $Q$ and $Q^{\top}$, yields
\noindent
\begin{align*}
    \mu_{m} I_{N-1} \leq 
    \frac{1}{T}\int_{t}^{t+T} 
    Q^{} \lap_{\sigma(\tau)}Q^{\top}  \, {\rm d}\tau \leq
      \mu_{M} I_{N-1},  \ \ \forall \, t\in \realnonneg,
\end{align*}
that is equivalent to \eqref{eq:PE_cond} with $\mu_m=\mu$. The existence of the upper bound, $\mu_{M} I_{N-1}$, for \eqref{eq:PE_cond} is trivial because of the boundedness of $a^{\sigma(t)}_{ij}$'s in the adjacency matrix and the integration over a finite interval.

\ref{itm:PE_cond} $\! \implies \!$ \ref{itm:PE_cond_alt}: It is similar to the $\! \impliedby \!$ part and thus omitted.

\ref{itm:PE_cond_alt} $\! \iff \! $ \ref{itm:PE_cond_alt_edge}: See \cite[Thm. 1]{anderson2016convergence}. We conclude by restating that
\begin{align*}
\lambda_2 \paren{
    {
    \frac{1}{T}\int_{t}^{t+T} 
         \lap_{\sigma(\tau)}  \, {\rm d}\tau} 
         }
         \overset{\eqref{eq:PE_lap_adj}}{=}
        &
        \lambda_2(\boldsymbol{\lap})
         = 
        \nonumber \\ 
        &
    \lambda_1 \paren{  { \frac{1}{T}\int_{t}^{t+T} 
         Q\lap_{\sigma(\tau)}Q^{\top}  \, {\rm d}\tau } }    
         \geq  \mu.
\end{align*}

\section{Proof of Proposition \ref{prop:graph_bounds}}\label{app_graph_bounds}
It follows from Lemma \ref{lemma:PE_equvalence} that a $(\mu,T)$-PE connected $ \Gc_{\sigt}$ forms, uniformly in time, 
the connected static graph $\boldsymbol{\Gc}_{\s T}^{\mu} = (\Vc, \boldsymbol{\Ec}_{\s T}^{\mu})$ with the edge set $\boldsymbol{\Ec}_{\s T}^{\mu}$ in \eqref{eq:PE_cond_alt_edge} and algebraic connectivity $\lambda_2(\boldsymbol{\lap}) \geq \mu >0 $. Then, from
\cite{shahrivar2017spectral} and \cite[Thm. 2]{saulnier2017resilient} we have
$\ceil{\frac{\lambda_2(\boldsymbol{\lap})}{2}} \leq r(\boldsymbol{\Gc}_{\s T}^{\mu})$
for $0\leq r(\boldsymbol{\Gc}_{\s T}^{\mu}) \leq \ceil{|\Vc|/2}$, and one can conclude from \cite{shahrivar2017spectral} and \cite[Thm. 6]{leblanc2013resilient} that
$ r(\boldsymbol{\Gc}_{\s T}^{\mu}) \leq \boldsymbol{\kappa}(\boldsymbol{\Gc}_{\s T}^{\mu})$.
It also follows from \cite[Ch. 13.5]{godsil2001algebraic} for any simple non-complete graph $\boldsymbol{\Gc}_{\s T}^{\mu}$, that $\lambda_2(\boldsymbol{\lap}) \leq
         \boldsymbol{\kappa}(\boldsymbol{\Gc}_{\s T}^{\mu}) \leq |\Vc|-1$.
Finally, note that $0=\lambda_{1}(\boldsymbol{\lap}) < \lambda_{2}(\boldsymbol{\lap}) \leq \dots \leq \lambda_{N}(\boldsymbol{\lap}) \leq |\Vc|=N$ holds for $\boldsymbol{\Gc}_{\s T}^{\mu}$ and that the equality $\lambda_{2}(\boldsymbol{\lap})=N$ holds for complete graphs \cite[Corr. 13.1.4]{godsil2001algebraic}. Then \eqref{eq:graph_bounds_all} is concluded.

\section{Proof of Theorem \ref{thm:robustness-to-disconection}}\label{app_robustness-to-disconection}
Recall from Proposition \ref{prop:graph_bounds} that a $(\mu,T)$-PE connected $\Gc_{\sigt}$ forms, uniformly in time, a static network $\boldsymbol{\Gc}_{\s T}^{\mu} = (\Vc, \boldsymbol{\Ec}_{\s T}^{\mu})$ whose robustness $r(\boldsymbol{\Gc}_{\s T}^{\mu})$ and vertex-connectivity $\boldsymbol{\kappa}(\boldsymbol{\Gc}_{\s T}^{\mu})$ are lower bounded by $ \ceil{ \mu/2 }$. 
    It then follows from 
    \cite[Thm. 1]{zhang2015notion}
    for an $r$-robust network $\boldsymbol{\Gc}_{\s T}^{\mu}$, where $r\leq r(\boldsymbol{\Gc}_{\s T}^{\mu})$, that, by definition, no $F$-local adversary subset $\Ac \subset\Vc$, where $F \leq r-1$, make a vertex cutset for $\boldsymbol{\Gc}_{\s T}^{\mu}$.
    Therefore, the removal of up to $F\leq r-1$ malicious nodes (agents) and their incident edges from the neighbors of the remaining nodes (cooperative agents) does not render the induced subgraph $\bar{\boldsymbol{\Gc}} =( \Vc\setminus\Ac, \bar{\boldsymbol{\Ec}})$ disconnected (equiv., $\lambda_2(\bar{\boldsymbol{\lap}}) > 0$, where $\bar{\boldsymbol{\lap}}$ is the Laplacian matrix of $\bar{\boldsymbol{\Gc}}$).
    %
    
    Likewise, given the vertex-connectivity $\boldsymbol{\kappa}(\boldsymbol{\Gc}_{\s T}^{\mu}) $, it follows, by definition, that no $F$-total adversary subset $\Ac \subset\Vc$, with $F \leq  \boldsymbol{\kappa}-1 \leq \boldsymbol{\kappa}(\boldsymbol{\Gc}_{\s T}^{\mu})-1 $, make a vertex cutset for $\boldsymbol{\Gc}_{\s T}^{\mu}$.  
    Therefore, the removal of up to $F \leq \boldsymbol{\kappa}-1$ (malicious) nodes, in total, and their incident edges does not render the induced subgraph $\bar{\boldsymbol{\Gc}} =( \Vc\setminus\Ac, \bar{\boldsymbol{\Ec}})$ disconnected (equivalently, $\lambda_2(\bar{\boldsymbol{\lap}}) > 0$).
    
    Finally, note that the induced subgraph $ \bar\Gc_{\sigt}=(\bar\Vc, \bar\Ec_{\sigt}) $ associated with the connected graph $\bar{\boldsymbol{\Gc}}$ with $0<\lambda_2(\bar{\boldsymbol{\lap}})=:\bar\mu$ meets the conditions in Lemma \ref{lemma:PE_equvalence}-({\rm{iii}}) for some $\bar{T}\leq T$ (where the inequality holds because $|\bar\Vc|<|\Vc|$ and $|\bar{\boldsymbol{\Ec}}|<|\boldsymbol{\Ec}_{\s T}^{\mu}| $).  
    Moreover, it follows from \cite[Thm. 13.5.1]{godsil2001algebraic} for the graph $\boldsymbol{\Gc}_{\s T}^{\mu}$ and its induced subgraph $\bar{\boldsymbol{\Gc}}$, resp., with $\boldsymbol{\lap}$ and $\bar{\boldsymbol{\lap}}$ that $\lambda_2(\boldsymbol{\lap}) \leq \lambda_2(\bar{\boldsymbol{\lap}}) +|\Ac|$. Then, from $\lambda_2(\boldsymbol{\lap}) \geq \mu$ (see \eqref{eq:graph_bounds_all}) and  $\bar\mu=\lambda_2(\bar{\boldsymbol{\lap}})>0$, one can conclude 
    $ \mu \leq \bar\mu +  |\Ac|$.

\section{proof of Proposition \ref{prop:convergence_PE_like}}\label{app_convergence_PE_like}
The first part of the proof has two steps similar to that in \cite[Thm. 1]{cichella2015cooperative}. 
First, we consider a system of the form
\noindent
\begin{align}\label{eq:PE_aux_func}
    \dot{\chi} = - \frac{\alpha}{\gamma} \underline{\lap}_{\sigma(t)} \chi, 
    \qquad  \chi(t_0) \in \real^{N-1}, 
\end{align}
in which $\underline{\lap}_{\sigma(t)} = Q\lap_{\sigma(\tau)} Q^{\top}$ satisfies the $(\mu,T)$-PE condition in \eqref{eq:PE_cond}. 
It follows from \cite[lemma 1]{efimov2015design} that\footnote{We note \cite[lemma 1]{efimov2015design} has defined the function of the form $\underline{\lap}_{\sigma(t)}$ in $\eqref{eq:PE_aux_func}$ to be continuous. Yet, a unique solution to $\eqref{eq:PE_aux_func}$ exists (see \cite[Thm. 3.2]{khalil2002nonlinear}) in the case $\underline{\lap}_{\sigma(t)}$ is piecewise continuous and bounded with a finite set of point-wise discontinuities, and the results hold as stated herein.} \eqref{eq:PE_aux_func} is globally uniformly exponentially stable (GUES) with the convergence rate $\lambda_{ \chi} \in \realpos $ such that 
\noindent
\begin{align}\label{eq:PE_aux_func_conv_rate}
     \norm{ \chi(t) } \leq \kappa_{ \chi}  \norm{ \chi(0)} e^{-\lambda_{ \chi}t}, \ \ \forall \, t \in  \realnonneg,
\end{align}
in which $\kappa_{ \chi} = \sqrt{ \frac{\alpha N}{\gamma \lambda_{\chi}}}$ and $\lambda_{ \chi} = \eta e^{-2 \eta T}$ with $\eta = -\frac{1}{2T} \ln (1-\frac{(\alpha / \gamma) \mu T}{1+(\alpha / \gamma)^{2} N^{2} T^{2}})$, where $N$ is obtained from $ \norm{\lap_{\sigma(t)}} \leq N$ (see \cite[Corrollary 13.1.4]{godsil2001algebraic}).
Given the GUES of \eqref{eq:PE_aux_func} under condition \eqref{eq:PE_cond}, it follows from \cite[Lemma 1]{loria2002uniform} and \cite[Thm. 4.12]{khalil2002nonlinear} that there exists a Lyapunov function $v(t,\chi(t)) =\chi(t)^{\top}P(t)\chi(t) $ with $P(t)=P(t)^{\top} \in \real^{(N-1)\times (N-1)} > 0$ such that  $\forall \, t \in  \realnonneg$ the following inequalities hold:
\noindent
\begin{subequations}\label{eq:PE_aux_func_lyap}
\begin{align}\label{eq:PE_aux_func_lyap_1}
    0 <  \frac{\gamma}{2\alpha N} I_{N-1} \leq P(t)  \leq \frac{1}{2 \lambda_{ \chi}}  & I_{N-1} , 
    \\ \label{eq:PE_aux_func_lyap_2}
    \dot{P}(t) - \frac{\alpha}{\gamma} \underline{\lap}_{\sigma(t)} P(t) - \frac{\alpha}{\gamma} P(t) \underline{\lap}_{\sigma(t)} + & I_{N-1} = \zeros. 
\end{align}
\end{subequations}

In the second step, we use the stability properties of \eqref{eq:PE_aux_func} as given in \eqref{eq:PE_aux_func_conv_rate} and \eqref{eq:PE_aux_func_lyap} in the stability analysis of \eqref{eq:cl_sys}.
By defining an intermediary state  transformation as
\noindent
\begin{align}\label{eq:aux_var}
    \boldsymbol{\chi}
    =
    \begin{bmatrix}
       \xi \\  \vb
    \end{bmatrix}
    =
    \begin{bmatrix*}[l]
       \gamma I_{\s N-1} & Q \\ 
       \zeros_{\s N \times (N-1)} & I_{\s N} 
    \end{bmatrix*}
    \begin{bmatrix}
       \zeta \\  \vb
    \end{bmatrix}
    =C^{-1} {\Yc}, 
\end{align}
the system $\Sigma_{\sigt}$ in \eqref{eq:cl_sys} with \eqref{eq:output_coord_vec} can be rewritten as 
\noindent
\begin{subequations}\label{eq:cl_sys_new}
\begin{align}
\dot{\boldsymbol{\chi}}
&=
\begin{bsmallmatrix*}[l]
 -\frac{\alpha}{\gamma} Q{\lap}_{\sigma(t)}Q^{\top}  &  +\frac{\alpha}{\gamma} Q{\lap}_{\sigma(t)} \\
 -\frac{\alpha}{\gamma} {\lap}_{\sigma(t)}Q^{\top} & -(\gamma I_N - \frac{\alpha}{\gamma} {\lap}_{\sigma(t)})
\end{bsmallmatrix*}
\boldsymbol{{\chi}}
+
\begin{bmatrix*}[r]
  Q I_{\Ac}  \\ I_{\Ac} 
\end{bmatrix*}\ua, 
\\
{\Yc} &= C  \boldsymbol{{\chi}},
\end{align}
\end{subequations}
Associated with \eqref{eq:aux_var}-\eqref{eq:cl_sys_new}, a Lyapunov function is defined as $V(t,\boldsymbol{\chi}(t))=\xi^{\top} P(t)\xi + \frac{\beta}{2} \vb^{\top} \vb $,  
with $P(t)$ as in \eqref{eq:PE_aux_func_lyap} and $\beta \in \realpos$.
By taking the derivative of $V(t,\boldsymbol{\chi}(t))$ along the trajectories of \eqref{eq:cl_sys_new}
and
using \eqref{eq:PE_aux_func_lyap}, $\norm{\lap_{\sigma(t)}} \leq N $, and $\norm{Q} \leq 1$, 
we obtain
\noindent
\begin{align}\label{eq:PE_aux_func_lyap_diff}
      \dot{V}(t,\boldsymbol{\chi}(t)) \leq 
      -
    \begin{bsmallmatrix}
     \norm{\xi} \\ \norm{\vb}
    \end{bsmallmatrix}^{\top}
\overline{M}
    \begin{bsmallmatrix}
     \norm{\xi} \\ \norm{\vb}
    \end{bsmallmatrix}
    + 
   \max\{\lambda^{-1}_{\chi},\beta\} 
    \nonumber \\
   \begin{bsmallmatrix}
     \norm{\xi} \\ \norm{\vb}
    \end{bsmallmatrix}^{\top} \! \ones_{2} \norm{\ua},
    \;\; 
    \bar{M} =
    \begin{bsmallmatrix}
 1 
 &  
 -\frac{\alpha}{\gamma} ( \beta + \frac{1}{\lambda_{ \chi}} )\frac{ N}{2}
 \\
 -\frac{\alpha}{\gamma} ( \beta + \frac{1}{\lambda_{ \chi}} )\frac{ N}{2} & \beta(\gamma - \frac{\alpha}{\gamma} N )
    \end{bsmallmatrix}.   
\end{align}
%
Note that by selecting $ \lambda_{\x} < \lambda_{ \chi}$ and a sufficiently large $\gamma $ \rev{(e.g., $\gamma=\alpha N$, for $\alpha\geq1$)}, one can verify that
\eqref{eq:PD_aux} holds\footnote{{The determinant of \eqref{eq:PD_aux} yields a cubic function of $\gamma$, to which applying the Routh's stability criterion indicates the existence of one positive root.}}:
%
\noindent
\begin{align}\label{eq:PD_aux}
\hspace{-1ex}
   \overline{M}
    - 2 \lambda_{\x} 
   \begin{bsmallmatrix*}
   \frac{1}{2\lambda_{\s \chi}} & 0 \\ 0 & \frac{\beta}{2}
   \end{bsmallmatrix*}
 \!=\!
\begin{bsmallmatrix*}[l]
 (1-\frac{\lambda_{ \mb{x}}}{\lambda_{\s \chi}})
 &  
 -\frac{\alpha}{\gamma} ( \beta + \frac{1}{\lambda_{\s \chi}} )\frac{ N}{2}
 \\
 -\frac{\alpha}{\gamma} ( \beta + \frac{1}{\lambda_{\s \chi}} )\frac{ N}{2} & \beta(\gamma - \frac{\alpha}{\gamma} N - \lambda_{\x} )
    \end{bsmallmatrix*}
\! > \! 0 \, .
\end{align}
Then from \eqref{eq:PE_aux_func_lyap},  \eqref{eq:PE_aux_func_lyap_diff}, and \eqref{eq:PD_aux}, we obtain
\noindent
\begin{align*}
    \dot{V}(t,\boldsymbol{\chi}(t)) \! \leq \!
    - 2 \lambda_{\x} & {V}(t,\boldsymbol{\chi}(t)) +\!
    \sqrt{2}\max\{\lambda^{-1}_{\chi},\beta\} \norm{\boldsymbol{\chi}(t)} \! \norm{\ua},
\end{align*}
to which, applying the comparison lemma \cite[Lemma 3.4]{khalil2002nonlinear} and by considering \eqref{eq:PE_aux_func_lyap} and $V(t,\boldsymbol{\chi}(t))$ yields
\noindent
\begin{align}\label{eq:sol}
    \norm{\boldsymbol{\chi}(t)} \leq 
    {
    \sqrt{
    \frac{ \max \braces{\lambda^{-1}_{ \chi},\beta}}{ \min \braces{ \frac{\gamma}{\alpha N},\beta}}}
    }
    &\norm{\boldsymbol{\chi}(t_0)}
    e^{-\lambda_{\x }(t-t_0)}
    + \nonumber \\
    {
    \frac{ \max \braces{\lambda^{-1}_{ \chi},\beta}}{ \lambda^{}_{ \x} \min \braces{ \frac{\gamma}{2\alpha N},\frac{\beta}{2}}}
    }
    \sup_{ t_0 \leq t \leq T_d} 
    &\norm{\ua (t)} , \;
    \forall \, t \geq t_0 \in \realnonneg.
\end{align}
Now one can conclude from \eqref{eq:sol} that 
\eqref{eq:cl_sys_new} is input-to-state stable (ISS) in the case $\ua \neq \zeros$, provided $\sup_{ t_0 \leq t \leq T_d} \norm{\ua (t)} < \infty$ for every $T_d \in  [0,\, \infty)$. 
It then follows from 
\eqref{eq:sol} that 
\noindent
\begin{align}\label{eq:state_bound}
\hspace{-1ex}
    \norm{(\Yc)_{T_d}}_{\Lc_p}
    &\leq 
    \kappa_{\x}
    e^{-\lambda_{\x}(t-t_0)}
    \norm{{\x}(t_0)}
    +
    \kappa_{\mb{u}} 
    \norm{(\ua)_{T_d}}_{\Lc_p}, 
\end{align}
where we used \eqref{eq:output_coord_vec}, \eqref{eq:aux_var}, $\norm{{\Yc}(t_0)} \leq \norm{{\x}(t_0)}$, and $\kappa_{\x}$ and $\kappa_{\mb{u}}$ as given in \eqref{eq:state_bounds}. 
Given \eqref{eq:state_bound}, the finite-gain $\Lc_{p}$ stability of \eqref{eq:cl_sys_new} (equiv. $\Sigma_{\sigt}$ in \eqref{eq:cl_sys}) is concluded for every $\norm{\x(t_0)}\leq \infty$ and every $\Lc_{p e}$-bounded $\ua$,\cite[Thm. 5.1 and Corollary 5.1]{khalil2002nonlinear}.
Finally, to calculate the bounds in \eqref{eq:formation_consensus_expo}, note that 
\begin{align}\label{eq:avg_coord_var}
    Q^{\top} \zeta \overset{\eqref{eq:output_coord_vec}}{=} Q^{\top}Q \widetilde{\pb} \overset{\eqref{eq:PE_Q}}{=} 
\widetilde{\pb} - \ones^{}_{N} \boldsymbol{\rm p}_{\rm avg},
\ \
 \boldsymbol{\rm p}_{\rm avg} =(\frac{1}{N}\ones^{\top}_{N} \widetilde{\pb}),
\end{align}
and let $Q$ be partitioned as $Q=\begin{bmatrix}
    q_1\, |\, q_2 \,| \cdots |\, q_N
\end{bmatrix}$, where $q_i \in \real^{N-1}$. We also have $Q^{\top}Q=I_N-(1/N)\ones^{\top}_{N}\ones^{}_{N} \implies \norm{q_i}^{2}=1-1/N$ for every $i \in \{1, \dots, N\}$. Then, using \eqref{eq:avg_coord_var}, one can write for every $i,j \in \Vc$ that
\noindent
\begin{align*}
\left| \widetilde{\pb}_{i}(t) - \widetilde{\pb}_{j}(t) \right| 
& =
\left|
q^{\top}_{i} \zeta(t) - q^{\top}_{j} \zeta(t) \right| 
 \leq 
\norm{q^{\top}_{i} - q^{\top}_{j} } \norm{\zeta(t)} 
\nonumber \\
&\leq 
\sqrt{2} \norm{\Yc(t)} 
\leq  
\sqrt{2}
\kappa_{\x}
    e^{-\lambda_{\x}(t-t_0)}
    \norm{{\x}(t_0)},    
\end{align*}
where we used \eqref{eq:output_coord_vec}, $\norm{q^{\top}_{i} - q^{\top}_{j} } \leq 2 \norm{q_i}\leq 2 \sqrt{\paren{1-1/N}}\leq \sqrt{2}$, and \eqref{eq:state_bound} with $\ua=\zeros$. Similarly, one can obtain \eqref{eq:formation_consensus_vel_expo}. 
This concludes the proof.

\section{proof of Lemma \ref{lemma:net_level_obs}}\label{app_net_level_obs}
The equivalence of ($\rm i$) and ($\rm ii$) follows from the definition of \emph{state and input observability} (SIO) and the invariant zeros of the switched LTI systems: Recall that $\Sigma_{\sigt}$ in \eqref{eq:cl_sys_new} is an LTI system in each mode $\sigma \in \Qc$ under Assumptions \ref{assum:switching} and \ref{assum:comm_capability}. 
Without loss of generality, let $\sigma(t_k) = \mb{q} \in \Qc'$, for some $k \in \intgnonneg$. Then, it follows from \cite{boukhobza2007state} that if an LTI system, here $\Sigma_{\mb{q}}$, is SIO, then, any non-zero input $\ua(t)$ is observable at the output $\y^{\s \Vc\setminus\Ac}_{\mb{q}}$ in \eqref{eq:collective_measurements}.
It also follows from \cite[Ch. 3.11]{zhou1996robust} and \cite[Thm. 2]{boukhobza2007state} that the necessary and sufficient condition for $\Sigma_{\mb{q}}$ to be SIO (here, attack detectable) is that $\boldsymbol{P}(\lambda_{\rm o}, \sigma =\mb{q})$ in \eqref{eq:pencil_collective} is full column rank.
Thus, in a mode $ \mb{q} \in \Qc'$, for a $\ua(t)\neq\zeros$ to be unobservable at $\y^{\s \Vc\setminus\Ac}_{\mb{q}}$ (stealthy in the sense of \eqref{eq:undetectable} as characterized by \eqref{eq:undetect_chrz} in Lemma \ref{lemma:undetectable_chrz}), it is required that $\boldsymbol{P}(\lambda_{\rm o}, \sigma=\mb{q})$ is rank deficient, inducing an output-zeroing subspace 
such that 
$ \begin{bsmallmatrix}
        {\rm x}(t_k) \\ \ua(t_k)
\end{bsmallmatrix} \in \ker  \paren {\boldsymbol{P}(\lambda_{\rm o}, \sigma=\mb{q})} $ holds for some  $ \lambda_{\rm o} \in \cplx $ and some nontrivial initial conditions $  {\rm x}(t_k) \neq \zeros $. 
By construction, it follows for $\Sigma_{\sigt}$ in \eqref{eq:cl_sys_new} with finitely many switches over any given interval $[t_0, \, t_0+T)$,
that a non-zero input $\ua(t)$ stealthy in the sense of \eqref{eq:undetectable}, requires  $\cap_{\sigma \in  \Qc'} \ker \paren { \boldsymbol{P}(\lambda_{\rm o}, \sigma) } \neq \emptyset $ for some $\lambda_{\rm o} \in \cplx$.

The proof of statement ($\rm ii$) follows from a contradiction argument:
Assume for a nontrivial output-zeroing direction $\col({\rm x_0}, {\rm u_0})$, where $ {\rm x_0} \!\neq \! \zeros$, $ {\rm u_0} \! \neq \! \zeros $,
%
$ \exists \, \lambda_{\rm o} \! \in \! \cplx, 
\; \textrm{s.t.} \; 
\forall \, \mb{q} \in \Qc',
    \boldsymbol{P}(\lambda_{\rm o},\mb{q}) 
    \begin{bmatrix}
        {\rm x_0} \\ {\rm u_0}
    \end{bmatrix} \!=\! \zeros.$
Then, from \eqref{eq:cl_sys_new}, \eqref{eq:collective_measurements}, \eqref{eq:pencil_collective}, we have 
%
%
\noindent
\begin{subequations}\label{eq:joint_inpt_expnd}
\begin{align}\label{eq:joint_inpt_expnd_1st}
& \hspace{5em}
\exists \; \lambda_{\rm o} \in \cplx, 
\qquad \textrm{s.t.}  \qquad \forall \, \mb{q} \in \Qc',
\nonumber \\
& \hspace{12em}
\lambda_{\mathrm{o}}      
    \begin{bmatrix*}[l]
     {\mathrm{p}}^{\s \Vc \setminus \Ac}_0 \\ {\mathrm{p}}^{\s \Ac}_0 
    \end{bmatrix*} 
    = 
    \begin{bmatrix*}[l]
     {\mathrm{v}}^{\s \Vc \setminus \Ac}_0 \\ {\mathrm{v}}^{\s \Ac}_0 
    \end{bmatrix*},  
\\ \label{eq:joint_inpt_expnd_1st_lower}
& \hspace{-1em}
\begin{bsmallmatrix*}[r]
\lambda^2_{\mathrm{o}} I + \lambda_{\mathrm{o}} \gamma I + \alpha \lap^{\s \Vc \setminus \Ac}_{\mb{q}} 
& \alpha \lap^{\s \Vc \setminus \Ac,  \Ac}_{\mb{q}} \\ 
\alpha \lap^{\s \Ac,  \Vc \setminus \Ac }_{\mb{q}} & \lambda^2_{\mathrm{o}} I + \lambda_{\mathrm{o}} \gamma I + \alpha \lap^{\s \Ac}_{\mb{q}} 
\end{bsmallmatrix*}
\begin{bmatrix*}[l]
     {\mathrm{p}}^{\s \Vc \setminus \Ac}_0 \\ {\mathrm{p}}^{\s \Ac}_0 
\end{bmatrix*}
=
\begin{bmatrix*}
     \zeros^{\phantom{\s \Vc}} \\ I^{\phantom{\s \Ac}}_{\s |\Ac|} 
    \end{bmatrix*} \mathrm{u_0},
\\ 
& \hspace{8ex} \;\, \label{eq:joint_inpt_expnd_2nd}
\mb{C}^{\s \Vc \setminus \Ac}_{\mb{q}}
\col
\begin{pmatrix*}[l]
     {\mathrm{p}}^{\s \Vc \setminus \Ac}_0, & {\mathrm{p}}^{\s \Ac}_0, &
     {\mathrm{v}}^{\s \Vc \setminus \Ac}_0, &{\mathrm{v}}^{\s \Ac}_0 
\end{pmatrix*}
    = \zeros.
\end{align}
\end{subequations}
where we used ${\rm x_0} = \col({\rm p_0}, {\rm v_0})$, and the Laplacian matrix $\lap_{\mb{q}}$ partitioned such that the set of cooperative agents $\Vc \setminus \Ac$ comes first and the set of malicious agents $\Ac$ comes second.
Under Assumptions \ref{assum:switching} and \ref{assum:comm_capability} and for an $F$-total/$F$-total with the given bounds, it follows from Lemma \ref{lemma:kernel_C} that
\eqref{eq:joint_inpt_expnd_2nd} results in $ \forall \, \mb{q} \in \Qc' $, ${\rm{p}}^{\s \Vc \setminus \Ac}_0 = {\rm{v}}^{\s \Vc \setminus \Ac}_0 = \zeros, \; \ind(\adj^{}_{\mb{q}}) {\rm{p}}^{\s \Ac}_0 = \zeros$, where $\ind(\adj^{}_{\sigma})$, is given in \eqref{eq:ind_matrix}, with $\ker_{\sigma \in \Qc'} \ind(\adj^{}_{\sigma}) = \emptyset$ over $[t_0, \, t_0+T), \, \forall \,t_0 \in \realnonneg$, and ${\rm{v}}^{\s  \Ac}_0$ to be an arbitrary state vector. 
Therefore, $ {\rm{p}}^{\s \Ac}_0 = {\rm{v}}^{\s \Ac}_0 = \zeros$ is the only solution to \eqref{eq:joint_inpt_expnd}. 
Noting that
for any $ \lambda_{\rm 0} \in \cplx$ with a positive real part $\paren{ 
    \lambda^2_{\rm o} I + \lambda_{\rm o} \gamma I + \alpha \lap^{\s \Ac}_{\mb{q}} }$ is positive definite. It then follows from \eqref{eq:joint_inpt_expnd_1st_lower} that $\zeros = I^{\phantom{\s \Ac}}_{\s |\Ac|} {\rm u_0} \! \implies \! {\rm u_0}=\zeros $. This contradicts the assumption made at the beginning of this section, thereby concluding that statement (\rm{ii}) holds.

\section{proof of Proposition \ref{prop:couplingTerm_convergence}}\label{app_couplingTerm_convergence}
First, note that similar to the last part of the proof of Proposition \ref{prop:convergence_PE_like}, from \eqref{eq:state_bound} and \eqref{eq:avg_coord_var}, one can obtain
\begin{align}\label{eq:temp_inq}
    \norm {\widetilde{\pb}- \ones_N \boldsymbol{\rm p}_{\rm avg}} 
     \leq
     \kappa_{\x}  e^{-\lambda_{\x}(t-t_0)} 
     \norm{\x(t_0)}
    + \nonumber \\
    \kappa_{\mb{u}}
    \sup_{ t_0 \leq t \leq T_d} \norm{\ua(t)}, \;
     \forall \, t \geq t_0 \in \realnonneg.
\end{align}
Now, consider $\underline{\rho} = \alpha \begin{bmatrix}
    \dtilde{\lap}^{}_{\sigma} & \lap^{(23)}_{\sigma}
    \end{bmatrix}
\begin{bmatrix*}[l]
    \widetilde{\pb}_{\s \twohop{i}} \\ \widetilde{\pb}_{\s \Rc}
\end{bmatrix*} $ in $\rho(\x_{\s \Ic} , \x_{\s \Rc} )$ as given in \eqref{eq:sys_matrices_twohop}.
Also, note that from the definition of Laplacian matrix and the matrix decomposition in \eqref{eq:laplacian_decomposition}, we have $\begin{bmatrix}
\dtilde{\lap}^{}_{\sigma} & \lap^{(23)}_{\sigma}
\end{bmatrix} \ones = \zeros$ (i.e. the matrix is zero row-sum), where $\dtilde{\lap}^{}_{\sigma}$ is positive semi-definite, all the elements of $\lap^{(23)}_{\sigma}$ are either $0$ or $-1$, and the all-ones vector $\ones$ is of the mode-dependent dimension $\real^{\s |\twohop{i}|+|\Rc_i|}$. Then, one can write for $\underline{\rho}$ in \eqref{eq:sys_matrices_twohop} that
\noindent
\begin{align}\label{eq:temp_rho}
    \underline{\rho}
    &= -\alpha 
    \begin{bmatrix}
    \dtilde{\lap}^{}_{\sigma} & \lap^{(23)}_{\sigma}
    \end{bmatrix}
    \begin{bmatrix}
    \widetilde{\pb}_{\s \twohop{i}} \\ \widetilde{\pb}_{\s \Rc}
    \end{bmatrix}
    \nonumber 
    \\ &=
    -\alpha
    \begin{bmatrix}
    \dtilde{\lap}^{}_{\sigma} & \lap^{(23)}_{\sigma}
    \end{bmatrix}
    \paren{
    \begin{bmatrix}
    \widetilde{\pb}_{\s \twohop{i}} \\ \widetilde{\pb}_{\s \Rc}
    \end{bmatrix}
    - \ones \boldsymbol{\rm p}_{\rm avg}
    }.
\end{align}
Using \eqref{eq:temp_inq} and \eqref{eq:temp_rho}, we have

\noindent
\begin{align*}
    \norm{\rho (\x_{\s \Ic} , \x_{\s \Rc} ) }_{ }
    &\leq 
    \norm{\underline{\rho}}
    \leq
    \alpha
    \norm{ \begin{bmatrix}
    \dtilde{\lap}^{}_{\sigma} & \lap^{(23)}_{\sigma}
    \end{bmatrix} }_{ }
    \norm{ 
    \begin{bmatrix}
    \widetilde{\pb}_{\s \twohop{i}} \\ \widetilde{\pb}_{\s \Rc}
    \end{bmatrix}
    - \ones \boldsymbol{\rm p}_{\rm avg} }_{ }
    \\ & \leq 
    \alpha
    \norm{ \begin{bmatrix}
    \dtilde{\lap}^{}_{\sigma} & \lap^{(23)}_{\sigma}
    \end{bmatrix} }_{ }
    \norm{ 
    \widetilde{\pb}
    - \ones_{N} \boldsymbol{\rm p}_{\rm avg} }_{ }
    \\ & 
    \overset{\eqref{eq:temp_inq}}
    \leq
    \alpha
     \kappa_{\x}  e^{-\lambda_{\x}(t-t_0)} 
     \norm{\x(t_0)}
    +
    \nonumber \\ 
    & \qquad
    \alpha \kappa_{\mb{u}}
    \sup_{ t_0 \leq t \leq T_d} \norm{\ua(t)}, \;
     \forall \, t \geq t_0 \in \realnonneg.
\end{align*}
where we used $  \norm{ \begin{bmatrix}
    \dtilde{\lap}^{}_{\sigma} & \lap^{(23)}_{\sigma}
    \end{bmatrix} }_{ } \geq 1 $  that holds when the matrix is not all zero (i.e. when $\underline{\rho}$ exists). 
    Additionally, if $\ua(t) = \zeros$, then we have $\norm{\rho (\x_{\s \Ic} , \x_{\s \Rc} ) }_{}
    \leq \alpha
    {\kappa}_{\x} e^{-\lambda_{\x}(t-t_0)} \norm{\x(t_0)},
    \;   \forall \, t \geq t_0 \in \realnonneg$. This concludes the proof.

\section{proof of Proposition \ref{prop:observability_local}}\label{app_observablity_local}
Proof of {\ref{prop:observability_local_st1}}:
Given $\Phi_{\sigt}^{\, i}$ in \eqref{eq:2hop_info} and \eqref{eq:2hop_sys}, the observability of $\Sigma_{\s \Vc^{\dprime}_{i}}, \; \forall \, i \in \Vc$ directly follows from a PBH test for each active mode of $\sigma \in \Qc$.  
W.l.o.g., let $ \sigma(t) = \mb{q} \in \Qc$ denote an active mode. It then follows after some algebraic manipulation 
that $\rank \begin{bsmallmatrix}
      \lambda_{\rm o} I - {\mb{A}}^{\s \Ic}_{\mb{q}}  
      \\
      \mb{C}^{\s \Ic}_{\mb{q}} 
    \end{bsmallmatrix} = 2|\Vc^{\,i^{\dprime}}_{\mb{q}}|=2|\Ic_i|,  \; \forall \, \lambda_{\rm o} \in \cplx$. We refer to \cite{RB_PHDthesis2024} for details.

Proof of \ref{prop:observability_local_st2}: Recall that the dynamics in \eqref{eq:2hop_sys}-\eqref{eq:rest_sys} are a representation of \eqref{eq:cl_sys} from the $i$-th agent perspective, and the measurements $\y^{\, i}_{\sigma}$'s in \eqref{eq:2hop_sys} and \eqref{eq:locally_measurements} are the same set of measurements (see \eqref{eq:sys_matrices_twohop}). Therefore, this statement directly follows from Lemma \ref{lemma:net_level_obs} under Assumptions \ref{assum:switching} and \ref{assum:comm_capability} and for any $F$-total (resp. $F$-local) set $\Ac$ of malicious agents with $ 0 \leq F \leq \boldsymbol{\kappa}(\boldsymbol{\Gc}_{\s T}^{\mu})-1$ (resp. $ 0 \leq F \leq r(\boldsymbol{\Gc}_{\s T}^{\mu})-1$).
\section{Proof of Theorem \ref{thm:detectability_local}}\label{app_detectability_local}
Note that each agents' local attack detector $\Sigma^{\s \Oc}_{\s \Vc^{i^{\s \dprime}}_{\sigma}}$'s in \eqref{eq:2hop_obs_error}, have decoupled dynamics that are reinitialized based on \eqref{eq:2hop_obs_IC}. Therefore, without loss of generality, we consider the stability of one $\Sigma^{\s \Oc}_{\s \Vc^{i^{\s \dprime}}_{\sigma}}$ and start off with the proof of its input-to-state stability (ISS), in each mode $\sigma \in \Qc$.
From \eqref{eq:2hop_sys}, \eqref{eq:rest_sys}, \eqref{eq:sys_matrices_twohop}, and \eqref{eq:2hop_obs_error}, we have
\noindent
\begin{subequations}
    \begin{align*}
 \Sigma_{\sigma}:    \begin{bmatrix}
        \xdot_{\s \Ic} \\ {\xdot_{\s \Rc} } 
    \end{bmatrix}
    \!&=\!  
    \begin{bmatrix*}[l]   
{{\mb{A}}^{\s \Ic}_{\sigma}}+ 
\widetilde{\widetilde{{\mb{A}}^{\s \Ic}_{\sigma}}} 
& {{\mb{A}}^{\s \Ic, \Rc}_{\sigma}}   
       \\
  {{\mb{A}}^{\s \Rc, \Ic}_{\sigma}}   &  {{\mb{A}}^{\s \Rc}_{\sigma}} 
 \end{bmatrix*}
 \!
    \begin{bmatrix}
        \x_{\s \Ic} \\ {\x_{\s \Rc} }  
    \end{bmatrix} 
    \!+\!
    \begin{bmatrix*}[l]
        \Bapp   & \zeros \\ \zeros &  \Barest 
    \end{bmatrix*} 
    \!
   \begin{bmatrix}
         \uapp  \\  \uarest
    \end{bmatrix},
\nonumber \\ 
\rho(\x_{\s \Ic}, \x_{\s \Rc}) &=\! \begin{bmatrix}
      \widetilde{\widetilde{{\mb{A}}^{\s \Ic}_{\sigma}}} 
       & \phantom{,{{\mb{A}}^{\s \Ic}_{\sigma}}} 
       &  {{\mb{A}}^{\s \Ic, \Rc}_{\sigma}} 
\end{bmatrix}
\!
\begin{bmatrix}
        \x_{\s \Ic} \\ {\x_{\s \Rc} }  
    \end{bmatrix},
\\
\Sigma^{\s \Oc}_{\s \Vc^{i^{\s \dprime}}_{\sigma}} : \edot_{\s \Ic} &= {\bar{\mb{A}}^{\s \Ic}_{\sigma}} \e_{\s \Ic} + \rho(\x_{\s \Ic}, \x_{\s \Rc}) + \Bapp \uapp.
    \end{align*}
\end{subequations}
Let each mode $\sigma(t)=\mb{q} \in \Qc,\; \forall \, t \in [t_k, \; t_{k+1}),\; k \in \intgnonneg$. Then, we have from \eqref{eq:2hop_obs_error}, which is also appeared in last equation above, that
\noindent
\begin{align}\label{eq:error_solution}
        \e_{\s \Ic}(t) 
        &= 
         e^{\bar{\mb{A}}^{\s \Ic}_{\mb{q}}(t-t_k)} \e_{\s \Ic}(t_k) 
        + 
        \int_{t_k}^{t} e^{\bar{\mb{A}}^{\s \Ic}_{\mb{q}}(t-\tau)} 
        \rho (\x_{\s \Ic}, \x_{\s \Rc}) \,{\rm d} \tau \, +
        \nonumber  \\
        &  \hspace{7em} 
        \int_{t_k}^{t} e^{\bar{\mb{A}}^{\s \Ic}_{\mb{q}}(t-\tau)} \Bapp \uapp(\tau) \,{\rm d} \tau.
    \end{align}
%
Recall that $\bar{\mb{A}}^{\s \Ic}_{\mb{q}} $ in \eqref{eq:error_solution} is Hurwitz stable, as defined in \eqref{eq:2hop_obs}, 
ensuring the inequality $ \norm{ e^{\bar{\mb{A}}^{\s \Ic}_{\mb{q}}(t-t_k)} } \leq {\kappa}^{\s \Ic}_{\e} e^{- {\lambda}^{\s \Ic}_{\e}  (t-t_k)} $ holds for some constants $ {\kappa}^{\s \Ic}_{\e}, \, {\lambda}^{\s \Ic}_{\e} \in \realpos $ in each mode $\mb{q} \in \Qc$.
Moreover, in each mode, $ \e_{\s \Ic} \!=\! \zeros$ is the exponentially stable equilibrium point of the unforced system $\Sigma^{\s \Oc}_{\s \Vc^{i^{\s \dprime}}_{\sigma}}$ (i.e. no attack or coupling term perturbation).
We also have from Propositions \ref{prop:convergence_PE_like} and \ref{prop:couplingTerm_convergence} that the unknown input $\rho(\x_{\s \Ic}, \x_{\s \Rc})$ in \eqref{eq:2hop_obs_error} and \eqref{eq:error_solution} is $\Lc_{pe}$-bounded for any $\Lc_{pe}$-bounded $\norm{\x(t_0)}$ and any bounded input $ \begin{bmatrix}
         \uapp  \\  \uarest
    \end{bmatrix} \!=\! \ua \! \in \! \Lc_{p e}$ that are injected by an $F$-local$/F$-total set $\Ac$ with the given upper bounds. 
Then, from  \eqref{eq:error_solution}, we have
\noindent
\begin{align}\label{eq:error_solution_bound}
\hspace{-2pt}
      \norm{ \e_{\s \Ic}(t) }
\leq & \; 
    {\kappa}^{\s \Ic}_{\e}
    \norm{\e_{\s \Ic}(t_k)} e^{- {\lambda}^{\s \Ic}_{\e}  (t-t_k)}
       + 
       {\scriptstyle
    \frac{{\kappa}^{\s \Ic}_{\res}}{\lambda^{\s \Ic}_{\e}}} \norm{\x(t_0)} e^{-\lambda_{\x}(t_k-t_0)}
 \nonumber \\
    & \ \   
    (1 - e^{ - \lambda^{\s \Ic}_{\e} ( t -t_k ) }) 
    +
(
{\scriptstyle
\frac{1+ {\kappa}^{\s \Ic}_{\res}}{\lambda^{\s \Ic}_{\e}}})
\sup_{ t_0 \leq t \leq T_d} \norm{\ua(t)}
\nonumber\\
& \ \
   (1 - e^{ - \lambda^{\s \Ic}_{\e} ( t -t_k ) }), \ \
        \forall \,  t \geq t_k \geq t_0 \in \realnonneg,
    \end{align}
where we used $\norm{ \rho (\x_{\s \Ic}, \x_{\s \Rc}) } \leq  \alpha {\kappa}_{\x} e^{-\lambda_{\x}(t_k-t_0)} \norm{\x(t_0)} +  
    \alpha \kappa_{\mb{u}}
    \sup_{ t_0 \leq t \leq T_d} \norm{\ua(t)},
    \,  \forall \, t \geq t_k \geq t_0 \in \realnonneg$, with $T_d \in [0,\, \infty)$ from Proposition \ref{prop:couplingTerm_convergence}, $ \norm{\uapp} \leq  \norm{\ua}$, and ${\kappa}^{\s \Ic}_{\res} = \alpha  {\kappa}_{\x} {\kappa}^{\s \Ic}_{\e}$.
Noting that the first two terms in the right-hand side of \eqref{eq:error_solution_bound} are exponentially decreasing and that $\norm{\e_{\s \Ic}(t_k)}\leq \norm{\x(t_k)}$ when $\Vc^{i^{\s \dprime}}_{\sigma(t_{k})} \neq \Vc^{i^{\s \dprime}}_{\sigma(t_{k-1})}$ or $k=0$ (see \eqref{eq:2hop_obs_error}), it can be verified, along the same lines as in \cite[Lemma 4.6]{khalil2002nonlinear}, that each $\Sigma^{\s \Oc}_{\s \Vc^{i^{\s \dprime}}_{\sigma}}$ is ISS, and that an arbitrarily large $w_{\s \Ic} \in \realpos$ exist such that $ \norm{ \e_{\s \Ic}(t_{k}) }  < w _{\s \Ic} < \infty$ holds\footnote{Any $ \norm{ \e_{\s \Ic}(t_{k}) }  \leq \paren{1/{\kappa}^{\s \Ic}_{\e}} w $, with $  0 < w  < w_{\s \Ic} - ({\kappa}^{\s \Ic}_{\res}/\lambda^{\s \Ic}_{\e}) \norm{\x(t_0)}$ $ e^{-\lambda_{\x}(t_k-t_0)} $, and $\sup_{ t_0 \leq t \leq T_d} \norm{\ua(t)}
    \leq  
    {
    \frac{\lambda^{\s \Ic}_{\e} w }{1+ {\kappa}^{\s \Ic}_{\res}}} $ ensures $ \norm{ \e_{\s \Ic}(t_{k+1}) } \leq  w + ({\kappa}^{\s \Ic}_{\res}/\lambda^{\s \Ic}_{\e}) \norm{\x(t_0)} e^{-\lambda_{\x}(t_k-t_0)} < w_{\s \Ic} $ for \eqref{eq:error_solution_bound}.
    } for all $t_k$'s with $ \Vc^{i^{\s \dprime}}_{\sigma(t_{k+1})} = \Vc^{i^{\s \dprime}}_{\sigma(t_{k})} $.
Also, note that $\norm{\C^{\s \Ic}} = 1$ with its $j$-th row being $({\efrak^{\; j}_{\s 2|\Ic|}})^{\s \top}$, where $j \in \{1, \dots, |\Ic_i|+1\}$ (see \eqref{eq:sys_matrices_twohop_C}). Then, along the same lines as in \cite[Cor. 5.1, Thm. 5.3]{khalil2002nonlinear}, the finite-gain $\Lc_p$ stability of \eqref{eq:2hop_obs_error} with $ \res^{\, i}_{ \mb{q}}(t) = {\mb{C}}^{\s \Ic}_{\mb{q}} \e_{\s \Ic}(t) $ 
can be concluded from \eqref{eq:error_solution_bound} with the bound \eqref{eq:detectability_local_st2} for the $j$-th component of $\res^{\, i}_{ \mb{q}}(t)$.
Finally, if $\ua(t)=\zeros, \, \forall \, t \in \realnonneg$, we obtain from \eqref{eq:detectability_local_st2}, the bound in \eqref{eq:res_threshold}. 
\section{}
\label{app_aux_results}
%
The following result quantifies the inaccessible state measurements for the system in \eqref{eq:cl_sys} with the collective measurements in \eqref{eq:collective_measurements} obtained under Assumptions \ref{assum:switching} and \ref{assum:comm_capability}. It shows that the only states not accessible at any time for any cooperative agent are the velocity states of the set of malicious agents.
\begin{lemma}\label{lemma:kernel_C}
    Consider the system in \eqref{eq:cl_sys}-\eqref{eq:locally_measurements} over a time interval $ [t_0, \, t_{0}+T)$ under Assumptions \ref{assum:switching} and \ref{assum:comm_capability}. Let $\Ic_i= \Vc^{\, i^{\dprime}}_{\sigma}$ in \eqref{eq:locally_measurements} and let \eqref{eq:cl_sys} be subject to an $F$-total (resp. $F$-local) adversary set with $ 0 \leq F \leq \boldsymbol{\kappa}(\boldsymbol{\Gc}_{\s T}^{\mu})-1$ (resp. $ 0 \leq F \leq r(\boldsymbol{\Gc}_{\s T}^{\mu})-1$).
    Then, the nullspace of the matrix $\mb{C}^{\s \Vc \setminus \Ac}_{\sigma}$ in \eqref{eq:collective_measurements}, defined uniformly over $[t_0, \, t_{0}+T), \, \forall \,t_0 \in \realnonneg$ is given by
\begin{align}\label{eq:kernel_C}
\hspace{-1ex}
{\Xc}_{[t_0,\, t_0+T)} 
\!=\!
\cap_{\sigma \in \Qc'} \ker \mb{C}^{\s \Vc \setminus \Ac}_{\sigma} 
\!=
\spann \braces{ \begin{bsmallmatrix*}[l]
    \zeros_{N}  \\  \efrak^{\,i}_{\s N}
\end{bsmallmatrix*},\, \forall \, i \in \Ac }\!.
\end{align}
\end{lemma}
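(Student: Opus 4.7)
The plan is to establish the two set inclusions in \eqref{eq:kernel_C}. The forward containment $\spann\{\col(\zeros_N,\efrak^{\,i}_N) : i \in \Ac\} \subseteq \cap_{\sigma \in \Qc'} \ker\mb{C}^{\s \Vc\setminus\Ac}_\sigma$ is direct from the block structure of \eqref{eq:collective_measurements}--\eqref{eq:locally_measurements}: for any basis vector $\x = \col(\zeros_N,\efrak^{\,i}_N)$ with $i \in \Ac$ and any cooperative $j \in \Vc\setminus\Ac$ in any mode $\sigma \in \Qc'$, the row block $\C^{\,j}$ extracts the (identically zero) positions indexed by the 2-hop neighborhood of $j$ together with the scalar velocity $\vb_j$; since $j \neq i$, the single nonzero velocity entry at coordinate $i$ is not picked up, so $\C^{\,j}\x = \zeros$ in every mode.

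For the reverse containment, I take $\x = \col(\widetilde{\pb},\vb)$ in the intersection and aim to show $\widetilde{\pb} = \zeros_N$ together with $\vb_j = 0$ for every $j \in \Vc\setminus\Ac$. The velocity and cooperative-position cases are immediate: each cooperative $j$ measures $\vb_j$ directly, and since $j \in \Vc^{\,j^{\prime}}_\sigma \subseteq \Vc^{\,j^{\dprime}}_\sigma$ in every mode, $\widetilde{\pb}_j$ is an entry of $\C^{\,j}\x$. The crux reduces to showing $\widetilde{\pb}_i = 0$ for each $i \in \Ac$, which follows once I exhibit, for every such $i$, a cooperative $j$ and a mode $\sigma \in \Qc'$ with $i \in \Vc^{\,j^{\dprime}}_\sigma$.

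In the $F$-total case with $|\Ac| \leq F \leq \boldsymbol{\kappa}(\boldsymbol{\Gc}_{\s T}^\mu)-1$, the exhibition is a clean degree count: by $(F+1)$-vertex-connectivity of $\boldsymbol{\Gc}_{\s T}^\mu$ together with $\boldsymbol{\kappa} \leq \boldsymbol{\delta}_{\min}$ (Proposition~\ref{prop:graph_bounds} and Remark~\ref{rmk:graph_bounds_comparison}), each malicious $i$ has at least $F+1$ neighbors in the integral graph, of which at most $|\Ac|-1 \leq F-1$ can be malicious; hence at least two are cooperative. For any such cooperative neighbor $j$, the integral edge $(i,j) \in \boldsymbol{\Ec}_{\s T}^\mu$ is, by \eqref{eq:PE_cond_alt_edge}, active in some mode $\sigma \in \Qc'$, placing $i \in \onehop{j} \subseteq \Vc^{\,j^{\dprime}}_\sigma$ and forcing $\widetilde{\pb}_i = 0$.

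The main obstacle is the analogous step in the $F$-local case with $(F+1)$-robust $\boldsymbol{\Gc}_{\s T}^\mu$, where $|\Ac|$ is not globally bounded and a pure degree count no longer rules out a malicious $i$ whose integral-graph neighbors are all malicious. My plan is an argument by contradiction on the set $\Ac_{\mathrm{iso}}$ of malicious agents with no cooperative neighbor in $\boldsymbol{\Gc}_{\s T}^\mu$: supposing $\Ac_{\mathrm{iso}}\neq\emptyset$, apply $r$-robustness with $r = F+1$ to the disjoint nonempty pair $(\Ac, \Vc\setminus\Ac)$. The cooperative side cannot contribute an $r$-reachable witness, since by the $F$-local hypothesis every cooperative agent has at most $F<r$ malicious neighbors; hence some malicious $k$ must have at least $F+1$ cooperative integral-graph neighbors. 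Refining the partition to $(\{i\}\cup N_{\boldsymbol{\Gc}}(i),\Vc\setminus\Ac)$ for an $i \in \Ac_{\mathrm{iso}}$ and combining with the observation that, by \eqref{eq:PE_cond_alt_edge}, sufficient persistence of the adjacent integral edges $(i,k)$ and $(k,j)$ forces them to admit a common activation mode within $[t_0,t_0+T)$—the principal technical hurdle—empties $\Ac_{\mathrm{iso}}$, after which the $F$-total degree argument applies verbatim to close the proof.
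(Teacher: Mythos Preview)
Your forward inclusion and your $F$-total argument are correct and match the paper's approach: both reduce to showing that every malicious agent has a cooperative $1$-hop neighbor in some mode, after which the integral persistence \eqref{eq:PE_cond_alt_edge} furnishes an activating $\sigma\in\Qc'$.

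The $F$-local case, however, has a genuine gap. Your contradiction on $\Ac_{\mathrm{iso}}$ stalls at the ``refining'' step: applying $(F{+}1)$-robustness to $(\{i\}\cup N_{\boldsymbol{\Gc}}(i),\Vc\setminus\Ac)$ need not produce a witness adjacent to $i$, and even when it does, the $\geq F{+}1$ outside-neighbors of that witness can lie entirely in $\Ac\setminus(\{i\}\cup N_{\boldsymbol{\Gc}}(i))$ rather than in $\Vc\setminus\Ac$, so no contradiction with $i\in\Ac_{\mathrm{iso}}$ follows. Pivoting to a $2$-hop route then introduces the ``common activation mode'' obstruction you flag but do not resolve---and it is a real obstruction: two integral edges $(i,k),(k,j)\in\boldsymbol{\Ec}^\mu_{\s T}$ need not be simultaneously active in any single mode, so $i$ need not lie in $\Vc^{\,j^{\dprime}}_\sigma$ for any $\sigma\in\Qc'$. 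Finally, even granting $\Ac_{\mathrm{iso}}=\emptyset$, the phrase ``the $F$-total degree argument applies verbatim'' is misleading, since that argument used $|\Ac|\leq F$; what you actually need afterward is only the single-edge activation step, not the degree count.

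The paper avoids the $2$-hop complication altogether by working purely at the $1$-hop level: it encodes the malicious-position measurements through the block $\ind(\adj_\sigma)$ built from the $1$-hop entries $a^\sigma_{ij}$ with $i\in\Vc\setminus\Ac$, $j\in\Ac$, and invokes Proposition~\ref{prop:graph_bounds} together with the $(\boldsymbol{\kappa},T)$- and $(r,T)$-connectivity definitions to assert directly that every malicious agent is a $1$-hop neighbor of some cooperative agent in some mode, yielding $\cap_{\sigma\in\Qc'}\ker\ind(\adj_\sigma)=\emptyset$. Because only one edge is needed per malicious agent, no synchronization of two edges in a common mode is ever required.
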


\begin{proof}
w.l.o.g., let $\x$ in \eqref{eq:cl_sys} be partitioned as 
$\x = \col (\widetilde{\pb}_{\s \Vc \setminus \Ac}, \widetilde{\pb}_{\s \Ac}, {\vb}_{\s \Vc \setminus \Ac}, {\vb}_{\s \Ac} ) \in \real^{2N} $. 
Also, from \eqref{eq:kernel_C} we have
\begin{align}\label{eq:temp_C_kernel}
\hspace{-1em}
\cap_{\sigma \in \Qc'} \ker \mb{C}^{\s \Vc \setminus \Ac}_{\sigma} 
=
\braces{ \x \in \real^{2N} \mid \mb{C}^{\s \Vc \setminus \Ac}_{\sigma}  \x \!=\! \zeros,  \;
    \forall \, \sigma \in \Qc' },\!
\end{align}
which by using \eqref{eq:locally_measurements}, \eqref{eq:collective_measurements}, and $\Ic_i= \Vc^{\, i^{\dprime}}_{\sigma}$ can be rewritten as 
\noindent
\begin{align}\label{eq:joint_inpt_expnd_2st_aux_more}
\forall \, \sigma \in \Qc', \;
    \begin{bsmallmatrix*}[l]
     I_{\s |\Vc \setminus \Ac|} & \zeros \\ \zeros & \ind(\adj^{}_{\sigma})\\
     \boldsymbol{\star} & \zeros
\end{bsmallmatrix*}\!
\begin{bmatrix*}[l]
     {\widetilde{\pb}}_{\s \Vc \setminus \Ac} \\ {\widetilde{\pb}}_{\s \Ac}
\end{bmatrix*}
    &\!=\! \zeros ,
    \;
    \begin{bsmallmatrix*}[l]
        I_{\s |\Vc \setminus \Ac|} & \zeros
\end{bsmallmatrix*} \!
\begin{bmatrix*}[l]
     {\vb}_{\s \Vc \setminus \Ac} \\ {\vb}_{\s \Ac}
\end{bmatrix*}
    \!=\! \zeros, 
\end{align}
where $\boldsymbol{\star}$ is a binary matrix
whose structure is immaterial.
$\ind(\adj^{}_{\sigma})$ is a binary matrix-valued function
defined as follows
\noindent
\begin{align}\label{eq:ind_matrix}
\hspace{-1.5ex}
\ind(\adj^{}_{\sigma}) \!=\!\!  \begin{bsmallmatrix}
   \! \ind_1(\adj^{\s \Vc \setminus \Ac, \Ac}_{\sigma})
    \\ \vdots \\
   \! \! \ind_{\s |\Vc \setminus \Ac|}(\adj^{\s \Vc \setminus \Ac, \Ac}_{\sigma}) \! \!
\end{bsmallmatrix}
\!\!=\!\!
\begin{bsmallmatrix}
          {\rm diag} (a^{\sigma(t)}_{i_1 j_1}, & \dots\,, & a^{\sigma(t)}_{i_1 j_{\s |\Ac|}}) 
        \\
    &  \vdots &
        \\
  \!\!  {\rm diag} (a^{\sigma(t)}_{i_{\s |\Vc\setminus\Ac|} j_1}, & \dots\,, & a^{\sigma(t)}_{i_{\s |\Vc\setminus\Ac|} j_{\s |\Ac|}}) \!\!
\end{bsmallmatrix}\!,
\end{align}
where $\adj^{\s \Vc \setminus \Ac, \Ac}_{\sigma} $, taken from $ \adj_{\sigma(t)} = \begin{bsmallmatrix*}[l]
    \adj^{\s \Vc \setminus \Ac}_{\sigma(t)} & \adj^{\s \Vc \setminus \Ac, \Ac}_{\sigma(t)} \\
    \adj^{\s \Ac, \Vc \setminus \Ac}_{\sigma(t)} &
    \adj^{\s \Ac}_{\sigma(t)}
\end{bsmallmatrix*}$, indicates the mode-dependent communication links between the sets $ \Vc\setminus\Ac = \braces{i_1, \dots, i_{\s |\Vc\setminus\Ac|}}  $ and $ \Ac = \braces{j_1, \dots, j_{\s |\Ac|}} $.
Under Assumptions \ref{assum:switching} and \ref{assum:comm_capability}, it follows from Proposition \ref{prop:graph_bounds} that 
in the case of an $F$-local (resp. $F$-total) adversary set with $F \leq r(\boldsymbol{\Gc}^{\mu}_{\s T})-1  $ (resp. $ F \leq \boldsymbol{\kappa}(\boldsymbol{\Gc}^{\mu}_{\s T})-1$), each malicious agent $j \in \Ac \subset \Vc$ should have at least one neighbor outside of its set, i.e. the set $\Vc \setminus \Ac$, over some period of time. Formally,
\noindent
\begin{align*}
&\forall \, j \in \Ac, \; \;  \exists \, i \in \Vc \setminus \Ac, \; \exists \, \sigma \in \Qc' \ \  \text{s.t.} \ \
j \in  \onehop{i} \subseteq \Vc^{\, i^{\dprime}}_{\sigma} 
\iff
\nonumber \\
& (i,j) \! \in \Ec_{\sigma}
\overset{\text{Lemma \ref{lemma:PE_equvalence}}}{\iff}
\frac{1}{T} \int_{t_0}^{t_0+T} \!\! a^{\sigma(\tau)}_{i j}  {\rm d}\tau \geq \delta,  \, \forall \, t_0 \in \realnonneg.
\end{align*}
by which, one can readily show for $\ind(\adj^{}_{\sigma})$ in \eqref{eq:ind_matrix}  that
\begin{align}\label{eq:some_bounds_ker}
  (1/T) \int_{t_0}^{t_0+T} \ind^{\top} (\adj^{}_{\sigma}) \ind (\adj^{}_{\sigma})  \, {\rm d}\tau 
  &\geq  \underline{\delta} I_{|\Ac|} , 
  \nonumber   \\
  \forall \, t_0 \in \realnonneg, \, \exists \, \underline{\delta} \in \realpos  
    \implies \cap_{\sigma \in \Qc'}
  &\ker \ind (\adj_{\sigma}) = \emptyset.
\end{align}
Then, from \eqref{eq:some_bounds_ker} and \eqref{eq:joint_inpt_expnd_2st_aux_more} we can conclude for \eqref{eq:temp_C_kernel} that $\forall \, \sigma \in \Qc'$, $\mb{C}^{\s \Vc \setminus \Ac}_{\sigma}  \x = 
\mb{C}^{\s \Vc \setminus \Ac}_{\sigma} \begin{bsmallmatrix*}[l]
     \zeros_{\s |\Vc \setminus \Ac|} \\ 
     \zeros_{\s |\Ac|} \\
     \zeros_{\s |\Vc \setminus \Ac|} \\ {\vb}_{\s \Ac}
\end{bsmallmatrix*}
    = \zeros,  \ \  \forall \, {\vb}_{\s \Ac} \in \real^{|\Ac|}$,
%
which in turn implies $ \cap_{\sigma \in \Qc'} \ker \mb{C}^{\s \Vc \setminus \Ac}_{\sigma} = \spann \braces{ \begin{bsmallmatrix*}[l]
    \zeros_{|\Vc|}  \\ \zeros_{|\Vc \setminus \Ac|} \\ \ones_{|\Ac|}
\end{bsmallmatrix*} }$. This concludes the proof.
\end{proof}
\bibliographystyle{IEEEtran}
\bibliography{IEEEabrv,references}
%
\end{document}